\documentclass[aps,twocolumn,nofootinbib,preprintnumbers,superscriptaddress]{revtex4-2}

\usepackage{color}
\usepackage{bbm}
\usepackage{tikz}
\usepackage{braket}
\usepackage{comment}
\usepackage{subcaption}
\captionsetup{compatibility=false}
\usepackage[
pagebackref=false,
colorlinks=true,
linkcolor=blue,
urlcolor=blue,
filecolor=black,
citecolor=red,
pdfstartview=FitV,
pdftitle={},
pdfauthor={},
pdfsubject={},
pdfkeywords={},
pdfpagemode=None,
bookmarksopen=true
]{hyperref}

\usepackage[normalem]{ulem}
\usepackage{amsmath, amssymb, bm}
\usepackage{enumerate}
\usepackage{amsfonts}
\usepackage{epsfig}
\usepackage{mathbbol}
\usepackage{mathrsfs}
\usepackage{amsthm}

\newtheorem{thm}{Theorem}

\usetikzlibrary{shapes}
\usetikzlibrary{decorations.pathmorphing}
\tikzset{snake it/.style={decorate, decoration=snake}}

\newcommand\scalemath[2]{\scalebox{#1}{\mbox{\ensuremath{\displaystyle #2}}}}

\newcommand{\tr}{\textrm{Tr}}

\newcommand{\mA}{\mathcal{A}}
\newcommand{\mB}{\mathcal{B}}

\newcommand{\mD}{\mathcal{D}}
\newcommand{\g}{\gamma}

\begin{document}
\title{Petz map recovery for long-range entangled quantum many-body states}
\author{Yangrui Hu}
\affiliation{Perimeter Institute for Theoretical Physics, Waterloo, Ontario, Canada N2L2Y5}
\author{Yijian Zou}
\email{yzou@perimeterinstitute.ca}
\affiliation{Perimeter Institute for Theoretical Physics, Waterloo, Ontario, Canada N2L2Y5}

\begin{abstract}
Given a tripartite quantum state on $A,B,C$ and the erasure channel on $C$, the rotated Petz map is a recovery channel that acts on $B$ to recover the erased quantum information. The infidelity of the best recovery is upper-bounded by the conditional mutual information (CMI). In this work, we study the infidelity of the rotated Petz map on several physically-relevant long-range entangled quantum states. Specifically, we study three classes of quantum phases: (i) steady states of measurement-induced phase transitions, (ii) critical ground state under local measurements, and (iii) chiral states under local measurements. We find that the averaged infidelity of the Petz map recovery sharply distinguishes the three classes: (i) and (ii) are distinguished by the scaling of the infidelity with CMI and (iii) is characterized by an asymmetry of the fidelity with the rotation parameter. We also study Petz map recovery for topological order and find an operational interpretation of the topological entanglement entropy. Our result indicates that the recovery fidelity of the Petz map is a useful diagnostic of quantum phases of matter.  
\end{abstract}

\maketitle

\section{Introduction}
Quantum many-body entanglement provides extremely useful insights into quantum phases and phase transitions. The notion of long-range entanglement \cite{Chen_2010,Wen_Zoo_2017} underlies fundamental properties of nontrivial quantum phases of matter, such as topological order \cite{Kitaev_2006,Levin_2006} and quantum critical points \cite{Vidal_2003,Calabrese_2004}. Moreover, recent studies have demonstrated quantum phases which only manifest themselves in their entanglement properties. One notable example is the measurement-induced phase transitions (MIPT) in monitored quantum circuits \cite{Jian:2019mny,Choi:2019nhg,Li_2018,Li_2019,Skinner:2018tjl,Cao_2019}, which display distinct entanglement phases upon tuning the measurement rate.

While previous characterizations of quantum phases mostly focus on bipartite entanglement, multipartite entanglement also constitutes an important part of the patterns of long-range entanglement. Indeed, as opposed to bipartite entanglement where the entanglement entropy is the only measure, multipartite entanglement of a quantum many-body state can be characterized in exponentially many ways \cite{Walter_2016}. Different quantities characterize different aspects of many-body entanglement. A number of useful quantities such as negativity \cite{Vidal_2002} and Markov gap \cite{Hayden_2021} have been used to reveal universal properties of quantum phases, including both ground-state phases \cite{Marcovitch_2009,Bayat_2010,Kusuki_2019,Zou_2021,Siva_2022,Liu_2022,Liu_2024,Berthiere:2023bwn} and MIPTs \cite{Sang_2021,avakian2024longrangemultipartiteentanglementnear,Pavi_2023}. 

One important property of entanglement entropy is its operational interpretation in terms of the entanglement cost \cite{nielsen2010quantum}. Such interpretations are generally not available for multipartite entanglement. This raises the question of whether any operational task can inspire a quantity that characterizes multipartite entanglement and whether this quantity can be useful for identifying quantum phases of matter.

In this work, we will focus on the operational task of reversing an erasure operation. Given a tripartite state $\rho_{ABC}$ and the erasure channel on $C$, what is the best recovery channel acting on $B$ that recovers the erased information? This problem has been considered by Petz \cite{Petz:1986tvy,Petz:1988usv}, who proved that perfect recovery is possible if and only if the conditional mutual information (CMI) $I(A:C|B)$ vanishes, and constructed an explicit recovery map known as the Petz map. In a recent seminal paper \cite{Junge:2015lmb}, the Petz map is generalized to approximate recovery with a rotation parameter $t$, and the CMI is proven to provide a lower bound on the recovery infidelity. In quantum many-body systems, the Petz map has been a useful theoretical tool to prove the stability of quantum states against decoherence \cite{sang2024stabilitymixedstatequantumphases}. However, quantitative features of the recovery fidelity have yet to be explored for characterizing quantum phases of matter. In this work, we study the fidelity of the Petz map for various examples of measurement-induced phases and topological phases. We find that the fidelity has sharp features that distinguish these phases, extending beyond the insights provided by the CMI. Thus, the fidelity of the Petz map recovery offers a useful operational characterization of quantum phases.

Specifically, we study three quantitative aspects of the Petz map fidelity: the rotational parameter $t$ for the best recovery, the scaling of infidelity with the CMI, and the symmetry of fidelity with respect to $t$. For MIPTs with Clifford random, Haar random, and $U(1)$-symmetric Haar random unitaries, we observe a universal behavior of the Petz map infidelity\textemdash linear in the CMI\textemdash despite varying critical points and critical exponents. We explain this behavior using the mapping to statistical mechanical model \cite{Jian:2019mny,Li_2021v2}. In sharp contrast, for critical ground states with or without local measurements, we observe a quadratic relation between the infidelity and the CMI. We further examine gapless edge states of a two-dimensional chiral topological order and find that the quadratic relation still holds for the best fidelity, signaling conformal symmetry. However, the fidelity exhibits asymmetry with respect to $t$, which is a sharp characterization of chirality. The behavior persists for the chiral state under local measurements, indicating that chirality is protected under such measurements. Finally, we apply the Petz map to the bulk of a topological order and reproduce a previous result of the non-negativity of spurious topological entanglement entropy from an operational perspective. Our main results are summarized in Tab.~\ref{tab:main}.
\begin{table*}[t]
    \centering
    \begin{tabular}{|c|c|c|c|}
    \hline
         &  Best recovery & Best infidelity v.s. CMI & $t$ dependence \\ \hline
      Stabilizer states &       any $t$        &    $-\log F = \mathrm{CMI}/2$            &   uniform             \\ \hline
      Haar MIPT    &          $t=0$      &      $-\overline{\log F} \propto \overline{\mathrm{CMI}}$            & symmetric            \\ \hline
      Haar MIPT with $U(1)$ symmetry   &    $t=0$            &    $-\overline{\log F} \propto \overline{\mathrm{CMI}}$              &       symmetric        \\ \hline
      Critical ground state   &         $t=0$       &     $-\log F \propto \mathrm{CMI}^2$ \cite{Vardhan:2023pnm}             &      symmetric          \\ \hline
      Critical ground state with measurements  &      $t=0$          &   $-\overline{\log F} \propto \overline{\mathrm{CMI}}^2$              &     symmetric          \\ \hline
      Chiral edge state  &       $t\neq 0$         &    $-\log F \propto \mathrm{CMI}^2$              &    asymmetric            \\ \hline
      Chiral edge state with measurements &   $t\neq 0$           &    $-\overline{\log F} \propto \overline{\mathrm{CMI}}^2$              &        asymmetric         \\ \hline
    \end{tabular}
    \caption{Petz map infidelity in different quantum many-body systems}
    \label{tab:main}
\end{table*}

The rest of the paper is organized as follows. In Sec.~\ref{sec:Petz} we review the Petz map and its generalization with the rotation parameter $t$. Sec.~\ref{sec:MIPT} discusses the Petz map fidelity for various MIPTs, showing its linear relation with the CMI both analytically and numerically. In Sec.~\ref{sec:CFT} we consider the Petz map fidelity for critical states and the chiral edge states under measurements. Sec.~\ref{sec:TO} is devoted to the Petz map fidelity in a topologically order system. We conclude with a summary and future directions in Sec.~\ref{sec:conc} followed by two appendices that review the numerical techniques used for the main results, including the Gaussian fermion formalism and exact diagonalization.

\section{Petz map}
\label{sec:Petz}
Quantum operations are in general irreversible. Given two states $\rho,\sigma$ and a quantum channel $\mathcal{N}$, their distinguishability cannot increase under the operation, which can be seen in the monotonicity of the relative entropy
\begin{equation}
\label{eq:relS_monotonicity}
    S(\rho||\sigma) ~\geq~ S(\mathcal{N}(\rho)||\mathcal{N}(\sigma))~,
\end{equation}
where $S(\rho||\sigma):=\tr(\rho\log \rho - \rho \log \sigma)$ is the relative entropy between the two states\footnote{Note that throughout this work we will take the logarithm to be base $2$.}. Petz~\cite{Petz:1986tvy,Petz:1988usv} showed that the channel $\mathcal{N}$ is reversible on these two states if and only if the equal sign holds in the inequality, in which case the reverse channel $\mathcal{D}_{\sigma,0}$ has an explicit form known as the Petz map
\begin{equation}
    \mathcal{D}_{\sigma,0}(X) ~=~ \sigma^{\frac{1}{2}}\mathcal{N}^{*}(\mathcal{N}(\sigma)^{-\frac{1}{2}} X \mathcal{N}(\sigma)^{-\frac{1}{2}})\sigma^{\frac{1}{2}}~,
\end{equation}
where $\mathcal{N}^{*}$ is the dual channel. It is clear that $\mathcal{D}_{\sigma,0}(\mathcal{N}(\sigma)) = \sigma$ reverses the quantum channel on $\sigma$, as the dual channel is always unital, that is, $\mathcal{N}^{*}(\mathbb{1}) = \mathbb{1}$. If the equality sign holds in Eq.~\eqref{eq:relS_monotonicity}, then the Petz map also reverses the quantum channel on $\rho$, namely, $\mathcal{D}_{\sigma,0}(\mathcal{N}(\rho)) = \rho$.

Furthermore, if the equality sign only approximately holds in Eq.~\eqref{eq:relS_monotonicity}, it has been shown~\cite{Junge:2015lmb,Wilde_2015} that there exists an approximate recovery channel, which extends the Petz map to a family of rotated Petz maps
\begin{equation}
    \mathcal{D}_{\sigma,t}(X) = \sigma^{\frac{1+it}{2}}\mathcal{N}^{*}(\mathcal{N}(\sigma)^{\frac{-1-it}{2}} X \mathcal{N}(\sigma)^{\frac{-1+it}{2}})\sigma^{\frac{1-it}{2}}~,
    \label{equ:rotated_Petz}
\end{equation}
where $t$ is a real number. 
The rotated Petz map still perfectly recovers $\sigma$, $\mathcal{D}_{\sigma,t}(\mathcal{N}(\sigma)) = \sigma$, and approximately recovers the state $\rho$. It has been shown that the best Petz map fidelity satisfies the following inequality \cite{Wilde_2015} 
\begin{equation}
\label{eq:fid_bound_0}
    \max_{t} F(\mathcal{D}_{\sigma,t}(\mathcal{N}(\rho)),\rho)~\geq~ 2^{-[S(\rho||\sigma) - S(\mathcal{N}(\rho)||\mathcal{N}(\sigma))]/2}~,
\end{equation}
where $F(\rho,\sigma):=||\sqrt{\rho} \sqrt{\sigma}||_1$ is the Uhlmann fidelity. Furthermore, there exists the twirled Petz map
\begin{equation}
\label{eq:twirled}
    \mathcal{D}_{\sigma}(X) ~=~ \int_{-\infty}^{\infty} dt\, \beta(t) \mathcal{D}_{\sigma,t}(X)
\end{equation}
where $\beta(t) = \frac{\pi}{2(1+\cosh(\pi t))}$ and the fidelity has the following lower bound
\begin{equation}
    F(\mathcal{D}_{\sigma}(\mathcal{N}(\rho)),\rho)~\geq~ 2^{-[S(\rho||\sigma) - S(\mathcal{N}(\rho)||\mathcal{N}(\sigma))]/2}~.
\end{equation}

Now we consider a tripartite state $\rho_{ABC}$ supported on three parties $A$, $B$, and $C$. The state forms an approximate quantum Markov chain \cite{Wilde_2015} if the conditional mutual information (CMI) $I(A:C|B)$ is small, where $I(A:C|B) = S_{AB} + S_{BC} - S_B - S_{ABC}$ and $S$ is the Von-Neumann entropy. The rotated Petz map Eq.~\eqref{equ:rotated_Petz} can be applied to approximately recover the erasure channel $\mathcal{N}$ on $C$, where
\begin{equation}
    \mathcal{N}(\rho_{ABC}) ~=~ \rho_{AB}~,
\end{equation}
$\rho:=\rho_{ABC}$ and the reference state is $\sigma:=\rho_{A}\otimes \rho_{BC}$. With such a choice of $\sigma$, the relative entropy difference becomes the CMI, $S(\rho||\sigma) - S(\mathcal{N}(\rho)||\mathcal{N}(\sigma)) = I(A:C|B)$. If the state $\rho_{ABC}$ is an approximate quantum Markov chain, then Eq.~\eqref{eq:fid_bound_0} ensures that the recovered state is closed to the original state $\rho_{ABC}$.

More concretely, the rotated Petz map for recovering the erasure channel on $C$ becomes
\begin{equation}
\label{eq:Petz_map_1}
    \mathcal{D}_{\rho_{A}\otimes \rho_{BC},t}(\rho_{AB}) ~=~ \rho^{\frac{1+it}{2}}_{BC} \rho^{\frac{-1-it}{2}}_B \rho_{AB}\, \rho^{\frac{-1+it}{2}}_B \rho^{\frac{1-it}{2}}_{BC}~.
\end{equation}
and Eq.~\eqref{eq:fid_bound_0} becomes
\begin{equation}
\label{eq:fid_bound_1}
   \max_t F(\tilde{\rho}_{ABC}(t), \rho_{ABC}) ~\geq~ 2^{-I(A:C|B)/2}~,
\end{equation}
where we have defined the shorthand notation $\tilde{\rho}_{ABC}(t):=\mathcal{D}_{\rho_{A}\otimes \rho_{BC},t}(\rho_{AB})$ for the recovered state. The rotated Petz map is perfect if and only if $I(A:C|B)=0$, in which case the state $\rho_{ABC}$ is an exact quantum Markov chain. In what follows, we will often use the shorthand notation $F_t := F(\tilde{\rho}_{ABC}(t), \rho_{ABC})$. One can also define the twirled Petz map by integrating $\mathcal{D}_{\rho_{A}\otimes \rho_{BC},t}$ as in Eq.~\eqref{eq:twirled}.

Although there are nice bounds such as Eq.~\eqref{eq:fid_bound_1}, it is less understood how the fidelity of the Petz map varies with $t$ and how tight the bound is for physically-relevant quantum states. In the following sections, we explore these questions using specific examples in quantum many-body systems. Given a quantum many-body state or an ensemble of quantum many-body states, we can obtain a series of mixed states $\rho_{ABC}$ whose CMI's scale to zero by varying the sizes of $A,B,C$.  We then study how the fidelity of the Petz map changes with $t$ and the CMI. Our work is a generalization of a previous work \cite{Vardhan:2023pnm} which specifically focused on critical ground states.

\section{Petz map for measurement-induced phase transitions}
\label{sec:MIPT}
\subsection{Measurement-induced phase transitions}
Measurement-induced phase transition (MIPT) is an entanglement phase transition in monitored quantum circuits. The circuit consists of $L$ spins that undergo repeated layers of local random unitary dynamics and projective measurements. Schematically, 
\begin{equation}
    |\psi(\tau)\rangle ~\propto~ \left(\prod_{l=1}^{\tau} P_l U_l \right)|\psi(0)\rangle~,
\end{equation}
as depicted in Fig.~\ref{fig:mipt}. 
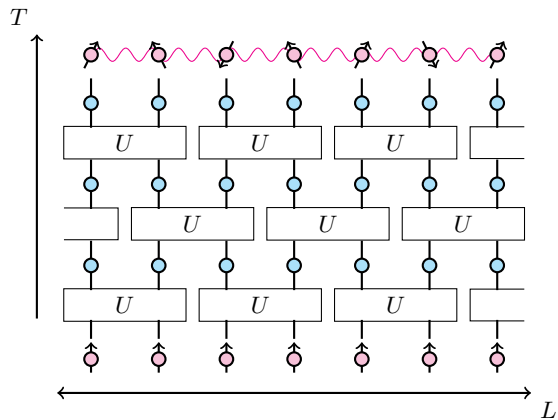
\begin{figure}[t]
    \centering
    \begin{tikzpicture}[scale=0.9]
    \draw[thick, ->] (-2.5,-0.8-0.2) -- (-2.5,-0.8+0.25);
    \draw[thick, ->] (-1.5,-0.8-0.2) -- (-1.5,-0.8+0.25);
    \draw[thick, ->] (-0.5,-0.8-0.2) -- (-0.5,-0.8+0.25);
    \draw[thick, ->] (0.5,-0.8-0.2) -- (0.5,-0.8+0.25);
    \draw[thick, ->] (1.5,-0.8-0.2) -- (1.5,-0.8+0.25);
    \draw[thick, ->] (2.5,-0.8-0.2) -- (2.5,-0.8+0.25);
    \draw[thick, ->] (3.5,-0.8-0.2) -- (3.5,-0.8+0.25);
    \filldraw[thick,fill=magenta!30!white] (-2.5,-0.8) circle (0.1);
    \filldraw[thick,fill=magenta!30!white] (-1.5,-0.8) circle (0.1);
    \filldraw[thick,fill=magenta!30!white] (-0.5,-0.8) circle (0.1);
    \filldraw[thick,fill=magenta!30!white] (0.5,-0.8) circle (0.1);
    \filldraw[thick,fill=magenta!30!white] (1.5,-0.8) circle (0.1);
    \filldraw[thick,fill=magenta!30!white] (2.5,-0.8) circle (0.1);
    \filldraw[thick,fill=magenta!30!white] (3.5,-0.8) circle (0.1);
    \draw[thick] (0.5,-0.22) -- (0.5,-0.5);
    \draw[thick] (-2.5,-0.22) -- (-2.5,-0.5);
    \draw[thick] (-1.5,-0.22) -- (-1.5,-0.5);
    \draw[thick] (-0.5,-0.22) -- (-0.5,-0.5);
    \draw[thick] (1.5,-0.22) -- (1.5,-0.5);
    \draw[thick] (2.5,-0.22) -- (2.5,-0.5);
    \draw[thick] (3.5,-0.22) -- (3.5,-0.5);
    \node[draw, shape=rectangle, minimum width=5em] (U1) at (0,0){$U$};
    \node[draw, shape=rectangle, minimum width=5em] (U1) at (-2,0){$U$};
    \node[draw, shape=rectangle, minimum width=5em] (U1) at (2,0){$U$};
    \draw (3.9,0.235) -- (3.1,0.235) -- (3.1,-0.235) -- (3.9,-0.235);
    \draw[thick] (-2.5,0.22) -- (-2.5,0.95);
    \draw[thick] (-1.5,0.22) -- (-1.5,0.95);
    \draw[thick] (-0.5,0.22) -- (-0.5,0.95);
    \draw[thick] (0.5,0.22) -- (0.5,0.95);
    \draw[thick] (1.5,0.22) -- (1.5,0.95);
    \draw[thick] (2.5,0.22) -- (2.5,0.95);
    \draw[thick] (3.5,0.22) -- (3.5,0.95);
    \filldraw[thick,fill=cyan!30!white] (-2.5,0.22+0.73/2) circle (0.1);
    \filldraw[thick,fill=cyan!30!white] (-1.5,0.22+0.73/2) circle (0.1);
    \filldraw[thick,fill=cyan!30!white] (0.5,0.22+0.73/2) circle (0.1);
    \filldraw[thick,fill=cyan!30!white] (-0.5,0.22+0.73/2) circle (0.1);
    \filldraw[thick,fill=cyan!30!white] (1.5,0.22+0.73/2) circle (0.1);
    \filldraw[thick,fill=cyan!30!white] (2.5,0.22+0.73/2) circle (0.1);
    \filldraw[thick,fill=cyan!30!white] (3.5,0.22+0.73/2) circle (0.1);
    \node[draw, shape=rectangle, minimum width=5em] (U1) at (1,1.2){$U$};
    \node[draw, shape=rectangle, minimum width=5em] (U1) at (-1,1.2){$U$};
    \node[draw, shape=rectangle, minimum width=5em] (U1) at (3,1.2){$U$};
    \draw (-2.9,0.235+1.2) -- (-2.1,0.235+1.2) -- (-2.1,-0.235+1.2) -- (-2.9,-0.235+1.2);
    \draw[thick] (0.5,0.22+1.2) -- (0.5,0.95+1.2);
    \draw[thick] (-2.5,0.22+1.2) -- (-2.5,0.95+1.2);
    \draw[thick] (-1.5,0.22+1.2) -- (-1.5,0.95+1.2);
    \draw[thick] (-0.5,0.22+1.2) -- (-0.5,0.95+1.2);
    \draw[thick] (1.5,0.22+1.2) -- (1.5,0.95+1.2);
    \draw[thick] (2.5,0.22+1.2) -- (2.5,0.95+1.2);
    \draw[thick] (3.5,0.22+1.2) -- (3.5,0.95+1.2);
    \filldraw[thick,fill=cyan!30!white] (-2.5,0.22+0.73/2+1.2) circle (0.1);
    \filldraw[thick,fill=cyan!30!white] (-1.5,0.22+0.73/2+1.2) circle (0.1);
    \filldraw[thick,fill=cyan!30!white] (0.5,0.22+0.73/2+1.2) circle (0.1);
    \filldraw[thick,fill=cyan!30!white] (-0.5,0.22+0.73/2+1.2) circle (0.1);
    \filldraw[thick,fill=cyan!30!white] (1.5,0.22+0.73/2+1.2) circle (0.1);
    \filldraw[thick,fill=cyan!30!white] (2.5,0.22+0.73/2+1.2) circle (0.1);
    \filldraw[thick,fill=cyan!30!white] (3.5,0.22+0.73/2+1.2) circle (0.1);
    \node[draw, shape=rectangle, minimum width=5em] (U1) at (0,2.4){$U$};
    \node[draw, shape=rectangle, minimum width=5em] (U1) at (-2,2.4){$U$};
    \node[draw, shape=rectangle, minimum width=5em] (U1) at (2,2.4){$U$};
    \draw[thick] (0.5,0.22+2.4) -- (0.5,0.95+2.4);
    \draw[thick] (-2.5,0.22+2.4) -- (-2.5,0.95+2.4);
    \draw[thick] (-1.5,0.22+2.4) -- (-1.5,0.95+2.4);
    \draw[thick] (-0.5,0.22+2.4) -- (-0.5,0.95+2.4);
    \draw[thick] (1.5,0.22+2.4) -- (1.5,0.95+2.4);
    \draw[thick] (2.5,0.22+2.4) -- (2.5,0.95+2.4);
    \draw[thick] (3.5,0.22+2.4) -- (3.5,0.95+2.4);
    \draw (3.9,0.235+2.4) -- (3.1,0.235+2.4) -- (3.1,-0.235+2.4) -- (3.9,-0.235+2.4);
    \filldraw[thick,fill=cyan!30!white] (-2.5,0.22+0.73/2+2.4) circle (0.1);
    \filldraw[thick,fill=cyan!30!white] (-1.5,0.22+0.73/2+2.4) circle (0.1);
    \filldraw[thick,fill=cyan!30!white] (0.5,0.22+0.73/2+2.4) circle (0.1);
    \filldraw[thick,fill=cyan!30!white] (-0.5,0.22+0.73/2+2.4) circle (0.1);
    \filldraw[thick,fill=cyan!30!white] (1.5,0.22+0.73/2+2.4) circle (0.1);
    \filldraw[thick,fill=cyan!30!white] (2.5,0.22+0.73/2+2.4) circle (0.1);
    \filldraw[thick,fill=cyan!30!white] (3.5,0.22+0.73/2+2.4) circle (0.1);
    \path [draw=magenta, snake it]
    (-2.5,3.7) -- (-1,3.7) -- (1,3.7) -- (3.5,3.7);
    \draw[thick, ->] (-2.5-0.1,3.7-0.2) -- (-2.5+0.1,3.7+0.2);
    \draw[thick, ->] (-1.5+0.1,3.7-0.2) -- (-1.5-0.1,3.7+0.2);
    \draw[thick, <-] (-0.5-0.1,3.7-0.2) -- (-0.5+0.1,3.7+0.2);
    \draw[thick, ->] (0.5+0.1,3.7-0.2) -- (0.5-0.1,3.7+0.2);
    \draw[thick, ->] (1.5-0.1,3.7-0.2) -- (1.5+0.1,3.7+0.2);
    \draw[thick, <-] (2.5+0.1,3.7-0.2) -- (2.5-0.1,3.7+0.2);
    \draw[thick, ->] (3.5-0.1,3.7-0.2) -- (3.5+0.1,3.7+0.2);
    \filldraw[thick,fill=magenta!30!white] (-2.5,3.7) circle (0.1);
    \filldraw[thick,fill=magenta!30!white] (-1.5,3.7) circle (0.1);
    \filldraw[thick,fill=magenta!30!white] (-0.5,3.7) circle (0.1);
    \filldraw[thick,fill=magenta!30!white] (0.5,3.7) circle (0.1);
    \filldraw[thick,fill=magenta!30!white] (1.5,3.7) circle (0.1);
    \filldraw[thick,fill=magenta!30!white] (2.5,3.7) circle (0.1);
    \filldraw[thick,fill=magenta!30!white] (3.5,3.7) circle (0.1);
    \draw[thick,<->] (-3,-1.3) -- (4,-1.3) node[anchor=north west] {$L$};
    \draw[thick,->] (-3.3,-0.2) -- (-3.3,4) node[anchor=south east] {$T$};
    \end{tikzpicture}
\caption{Schematic diagram of the circuit. Rectangles represent random unitary gates acting on adjacent two qubits. Cyan circles denote local projective measurements with a measurement probability $p$, or identity map with probability $1-p$. The size of the whole spin chain is $L$ and the periodic boundary condition is adopted. }
\label{fig:mipt}
\end{figure}
The unitary evolution is taken as
\begin{equation}
    U_l ~=~ \bigotimes_{i\in \mathrm{even/odd}} U^{[i,i+1]}_{l}~,
\end{equation}
where $l$ denotes the number of layers of the circuit and $i$ alternates between even and odd numbers for different layers. Each unitary gate is chosen randomly within a distribution, such as a random two-qubit Clifford gate or a Haar random $U(4)$ gate. The local projective measurements can be chosen on the $Z$ basis and occur randomly with a probability $p$ at each site, known as the measurement rate. The projector reads
\begin{equation}
    P_l ~=~ \bigotimes_i P^{[i]}_{m_i}~,~ P^{[i]}_{\pm} ~=~ \frac{I \pm \sigma^{[i]}_z}{2} ~,
\end{equation}
where $i$ denotes the measured site and $m_i = \pm $ is the measurement outcome. Each measurement outcome is picked by the Born rule. Under the joint unitary-measurement dynamics, the states $\{|\psi(\tau)\rangle\}$ form an ensemble labeled by the random unitaries $U_l$ and measurement outcomes $P_l$. It takes depth $O(L)$ for the dynamics to reach equilibrium and we will choose $\tau=4L$ for the steady state ensemble $\{|\psi(\tau)\rangle\}$. 

The steady-state ensemble exhibits an entanglement phase transition if one varies the measurement rate $p$. Consider entanglement entropy of a subsystem $A$, averaged over the random unitaries and measurement outcomes,
\begin{equation}
    \bar{S}(A) ~=~ \mathbb{E}_{\{U_l\},\{P_l\}}\, S_A(|\psi(\tau)\rangle)~,
\end{equation}
where $S_A(|\psi\rangle) = -\tr(\rho_A \log \rho_A)$ is the entropy and $\rho_A = \tr_{\bar{A}}(|\psi\rangle\langle\psi|)$. 
Let us consider the extreme cases and analuze the phase transition in more detail \cite{Li_2018,Jian:2019mny}. If $p=0$, there is no measurement and the random unitary dynamics thermalizes the state to infinite tempeature quickly. As a result, any subsystem $A$ that is smaller than half of the total system in the steady state is close to a maximally mixed state and the averaged entanglement $\bar{S}(A) \propto L_A$. Such a volume law phase persists to small measurement rate until a phase transition occurs. If $p=1$, then at any time step the state is randomly projected onto a product state and $\bar{S}(A) = 0$ for any region. For large $p$ but $p<1$, most of the entanglement built by the unitary dynamics gets lost and only some short-range entanglement persists, thus $\bar{S}(A) = O(1)$ is bounded by a constant. This is known as the area law phase of the MIPT. 

There is critical point $p=p_c$ which separates the volume law phase and the area law phase. If $p<p_c$, the local measurements cannot extract extensive quantum information from the random unitary dynamics and $\bar{S}(A) \propto L_A$ follows the volume law. If $p>p_c$, then $\bar{S}(A) = O(1)$ becomes the area law. It is remarkable that the entropy is logarithmic at the critical point, namely $\bar{S}(A) = \alpha\log L_A + O(1)$, indicating the scale invariance. 

The critical measurement rate $p_c$ and the factor $\alpha$ depend on the ensemble of the random unitary. The phase transition occurs at a finite $p_c$ if the random unitary dynamics is sufficiently scrambling. For example, one can choose Haar random unitaries or random Clifford unitaries. For the former case, we call the phase transition Haar MIPT. For the latter case, we will call the phase transition stabilizer MIPT since a Clifford unitary maps stablizer states to stablizer states. The two classes are differed by $p_c$ and $\alpha$.

We consider the Petz map recovery Eq.~\eqref{eq:Petz_map_1} for the steady state ensemble of different MIPTs at criticality. Let $A,B,C$ be three contiguous intervals as in the following figure, 
\begin{equation}
\nonumber
    \includegraphics[width = 0.3\linewidth]{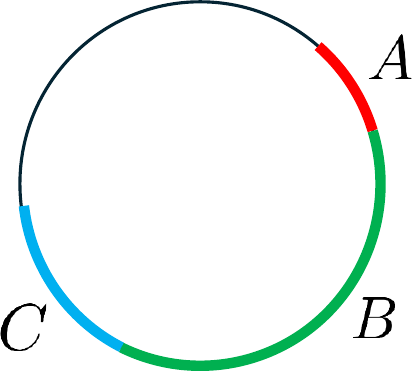}
\end{equation}
and the lengths of the intervals being $L_A$, $L_B$, and $L_C$. The averaged CMI over the ensemble of the steady states of MIPT at the critical point takes the following form due to the logarithmic entropy scaling,
\begin{equation}
\label{eq:CMI_CFT}
    \overline{I}(A:C|B) ~=~ \alpha\, \log \frac{1}{1-\eta}~,
\end{equation}
where
\begin{equation}
    \eta ~=~ \frac{\sin(\pi L_A/L) \sin(\pi L_C/L)}{\sin(\pi (L_A+L_B)/L) \sin(\pi (L_B+L_C)/L)}
\end{equation}
is the conformally-invariant cross-ratio. At small $\eta$, we find that
\begin{equation}
\label{eq:CMI_smalleta}
    \overline{I}(A:C|B) ~=~ \alpha\, \eta ~+~ O(\eta^2)~.
\end{equation}
In what follows, we will show that the best recovery fidelity of the Petz map can always be achieved at $t=0$ and that $-\overline{\log F_0} \propto \overline{I}(A:C|B)$ at small $\eta$. We provide analytical proof for the stabilizer MIPT. For Haar random MIPT, we both provide numerical evidence and an analytical explanation in terms of boundary conformal field theory.

\subsection{Petz map for Stabilizer MIPT}
For the stabilizer MIPT, the state at any time slice is a stabilizer state because each random unitary is Clifford. For a review of stabilizer states, see Refs.~\cite{gottesman1997stabilizercodesquantumerror,Li_2018}. We will show in Theorem \ref{thm1} that $-\log F_t = I(A:C|B)/2$ for any stabilizer density matrix $\rho_{ABC}$ and any $t$. It follows that $-\overline{\log F} = \overline{I}(A:C|B)/2$ for the ensemble of the steady states. 
\begin{thm}
\label{thm1}
For any stabilizer state $\rho_{ABC}$, the Petz map fidelity $F_t$ is independent of $t$ and saturates the lower bound in Eq.~\eqref{eq:fid_bound_1}, i.e. $F_{t} = 2^{-I(A:C|B)/2}$.
\end{thm}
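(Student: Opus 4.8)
The plan is to use the special structure of stabilizer states: every reduced density matrix is proportional to a stabilizer code projector, $\rho_R=\lambda_R\Pi_R$, where $\mathcal{S}_R$ is the abelian stabilizer group of $\rho_R$, $\Pi_R$ projects onto its common $+1$ eigenspace, $\tr\,\Pi_R=1/\lambda_R$, and $\lambda_R=2^{-S_R}$ (the last identity from $\tr\,\rho_R=1$ and $S_R=-\tr(\rho_R\log\rho_R)=-\log\lambda_R$, logs base $2$). Two structural facts do the work. First, complex powers collapse: $\rho_R^{z}=\lambda_R^{z}\Pi_R$ for every $z\in\mathbb{C}$ with the generalized-inverse convention $0^z=0$. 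Second, local stabilizer groups nest: since $\rho_{AB}=\tr_C\rho_{ABC}$ and $\tr_C(g)$ vanishes unless the Pauli $g$ acts trivially on $C$, one finds $\mathcal{S}_{AB}=\{g_{AB}:g_{AB}\otimes I_C\in\mathcal{S}_{ABC}\}$, so $\mathcal{S}_{AB}$ (padded with identity on $C$) is a subgroup of $\mathcal{S}_{ABC}$; likewise $\mathcal{S}_B\subseteq\mathcal{S}_{AB}$ and $\mathcal{S}_{BC}\subseteq\mathcal{S}_{ABC}$. Since a coarser stabilizer group has a larger codespace, this yields the projector identities $\Pi_B\Pi_{AB}=\Pi_{AB}=\Pi_{AB}\Pi_B$ and $\Pi_{ABC}\Pi_{AB}=\Pi_{ABC}=\Pi_{ABC}\Pi_{BC}$ (suppressing identity tensor factors throughout).

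First I would prove $t$-independence. Plugging $\rho_{BC}^{z}=\lambda_{BC}^{z}\Pi_{BC}$ and $\rho_B^{z}=\lambda_B^{z}\Pi_B$ into the Petz map \eqref{eq:Petz_map_1}, the scalar prefactors combine to $\lambda_{BC}^{(1+it)/2}\lambda_{BC}^{(1-it)/2}=\lambda_{BC}$ and $\lambda_B^{(-1-it)/2}\lambda_B^{(-1+it)/2}=\lambda_B^{-1}$, so all $t$-dependent phases cancel; then using $\rho_{AB}=\lambda_{AB}\Pi_{AB}$ and $\Pi_B\Pi_{AB}\Pi_B=\Pi_{AB}$,
\begin{equation}
\tilde\rho_{ABC}(t)~=~\frac{\lambda_{AB}\lambda_{BC}}{\lambda_B}\,\Pi_{BC}\Pi_{AB}\Pi_{BC}~,
\end{equation}
which is manifestly $t$-independent, manifestly a positive operator, and normalized because the rotated Petz map is trace preserving on states supported in $\mathrm{supp}\,\rho_{AB}$.

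Next I would compute the fidelity. Writing $\rho_{ABC}=\Pi_{ABC}/r_{ABC}$ with $r_{ABC}=\tr\,\Pi_{ABC}=1/\lambda_{ABC}$, the Uhlmann fidelity of a state with one proportional to a projector reduces to $F_t=\tr\sqrt{\Pi_{ABC}\tilde\rho_{ABC}\Pi_{ABC}}/\sqrt{r_{ABC}}$. Sandwiching by $\Pi_{ABC}$ and repeatedly applying $\Pi_{ABC}\Pi_{AB}=\Pi_{ABC}\Pi_{BC}=\Pi_{ABC}$ telescopes the product,
\begin{equation}
\Pi_{ABC}\,\tilde\rho_{ABC}\,\Pi_{ABC}~=~\frac{\lambda_{AB}\lambda_{BC}}{\lambda_B}\,\Pi_{ABC}~,
\end{equation}
so $F_t=\sqrt{r_{ABC}\,\lambda_{AB}\lambda_{BC}/\lambda_B}=\sqrt{\lambda_{AB}\lambda_{BC}/(\lambda_B\lambda_{ABC})}$. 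Substituting $\lambda_R=2^{-S_R}$ gives $F_t=2^{-(S_{AB}+S_{BC}-S_B-S_{ABC})/2}=2^{-I(A:C|B)/2}$, which saturates \eqref{eq:fid_bound_1} for every $t$.

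There is no deep obstacle; the step that carries the weight is the nesting $\mathcal{S}_B\subseteq\mathcal{S}_{AB}\subseteq\mathcal{S}_{ABC}$ and $\mathcal{S}_{BC}\subseteq\mathcal{S}_{ABC}$, since this is exactly what makes the sandwiched projectors telescope to $\Pi_{ABC}$. The rest is bookkeeping: using the generalized-inverse convention for the negative powers in \eqref{eq:Petz_map_1}, and keeping track of the implicit identity tensor factors ($I_C$, $I_A$, $I_{AC}$) that accompany $\Pi_{AB}$, $\Pi_{BC}$, $\Pi_B$ when they multiply $\Pi_{ABC}$.
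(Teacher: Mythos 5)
Your proof is correct and follows essentially the same route as the paper's: both rest on the fact that stabilizer reduced density matrices are proportional to commuting projectors, so that all complex powers collapse (which kills the $t$-dependence) and the fidelity reduces to a ratio of normalization constants equal to $2^{-I(A:C|B)/2}$. The only difference is bookkeeping: where the paper identifies the stabilizer group $\tilde{G}_{ABC}$ of the recovered state and counts group elements, you telescope the projector identities $\Pi_{ABC}\Pi_{AB}=\Pi_{ABC}=\Pi_{ABC}\Pi_{BC}$ directly, which in effect re-derives the paper's stabilizer-state fidelity formula rather than citing it.
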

\begin{proof}
Firstly, we note that a generic stabilizer density matrix $\rho_R$ on a region $R$ is proportional to a projector. Denote the stabilizer group on $R$ by $G_R$ and its generators by $\{g_1,g_2,\cdots g_{r_R}\}$, then 
\begin{equation}
\label{eq:rho_stab}
    \rho_R ~=~ \frac{1}{2^{|R|}}\prod_{i=1}^{r_R} (I+g_i)~,
\end{equation}
where $|R|$ is the number of spins in $R$. The density matrix satisfies $\rho^{2}_R = 2^{r_R-|R|}\rho_R$, thus it is proportional to a projector, whose any power equals itself. This implies that
\begin{equation}
\label{eq:rho_power_stab}
    \rho^{z}_R ~=~ 2^{(z-1)(r_R-|R|)}\rho_R
\end{equation}
for any complex number $z$. Substituting Eq.~\eqref{eq:rho_power_stab} into Eq.~\eqref{eq:Petz_map_1} we obtain
\begin{equation}
    \rho_{ABC}(t)~\propto~ \rho_{BC} \rho_B \rho_{AB} \rho_B \rho_{BC}~.
\end{equation}
It is then clear that $ \rho_{ABC}(t)$ is independent of $t$, thus $F_t$ is also independent of $t$. Since the stabilizers commute, the density matrices also commute, $[\rho_{B},\rho_{AB}] = 0,~[\rho_{BC},\rho_{AB}] = 0$. The above product specifies a new stabilizer state $\tilde{\rho}_{ABC}$ within the stabilizer group 
\begin{equation}
\label{eq:Gtilde}
    \tilde{G}_{ABC} ~=~ \{g_1 g_2|g_1 \in G_{AB}, g_2 \in G_{BC} \}~.
\end{equation} 
The above definition Eq.~\eqref{eq:Gtilde} of $\tilde{G}_{ABC}$ implies that
\begin{equation}
\label{eq:NGtilde}
    |\tilde{G}_{ABC}| ~=~ \frac{|G_{AB}| |G_{BC}|}{|G_{B}|} ~,
\end{equation}
since the only way to have $g_1g_2 = I$ is that $g_1 \in G_B$ and $g_2 = g^{-1}_1$. 
Here $|G_R|$ denotes the number of elements in the stabilizer group $G_R$ of the region $R$, which equals $2^{r_R}$. 
Note that $\tilde{G}_{ABC}$ is a normal subgroup of $G_{ABC}$.
The fidelity of such two stabilizer states $\rho_{ABC}$ and $\tilde{\rho}_{ABC}$ is given by
\begin{equation}
\label{eq:Ftstab}
    F_t ~=~ |G_{ABC}/\tilde{G}_{ABC}|^{-1/2} ~=~\sqrt{\frac{|\tilde{G}_{ABC}|}{|G_{ABC}|}}~.
\end{equation}
The above formula can be directly verified by plugging Eq.~\eqref{eq:rho_stab} into the definition of the fideliy $F(\rho,\tilde{\rho}) = \tr \sqrt{\sqrt{\rho}\tilde{\rho}\sqrt{\rho}}$. To compute the CMI, we note that the entropy of $\rho_R$ is given by
\begin{equation}
    S_{R} ~=~ |R| - \log |G_R|~.
\end{equation}
Thus,
\begin{equation}
\label{eq:CMIstab}
    I(A:C|B) ~=~ \log \frac{|G_B| | G_{ABC}|}{|G_{AB}||G_{BC}|} ~.
\end{equation}
Combining Eqs.~\eqref{eq:NGtilde},~\eqref{eq:Ftstab}, and~\eqref{eq:CMIstab}, we obtain
\begin{equation}
    I(A:C|B) ~=~ -2\log F_t
\end{equation}
as desired.
\end{proof}

\subsection{Petz map for Haar MIPT with or without symmetries}
For Haar MIPT, we consider the unitary to be either without symmetry or with a conserved $U(1)$ charge. In the Pauli $Z$ basis, the $U(1)$-symmetric Haar random unitary takes a block diagonal form $\mathrm{diag}(e^{i\theta_1}, V_{2\times 2}, e^{i\theta_2})$, where $\theta_1$ and $\theta_2$ are randomly chosen within $[0,2\pi)$ and $V_{2\times 2}$ is a Haar random unitary in $U(2)$. It is known that the two MIPTs have different critical measurement rates and different universality classes at the phase transitions. For Haar random MIPT $p_c \approx 0.168$ and $\alpha \approx 1.1$ \cite{Zabalo:2019sfl}, and for the $U(1)$-symmetric case $p_c\approx 0.105$ and $\alpha \approx 1.3$ \cite{Agrawal:2021ukw}. 

In the following, we consider the Petz map fidelity at small cross-ratios for these two MIPTs at criticality. We perform numerical simulations using exact diagonalization for up to $L=20$, see Appendix~\ref{appen:petz_via_diag} for details. For both cases, we observe that $-\overline{\log F_t}$ is minimized at $t=0$ and is symmetric with respect to $t\rightarrow -t$ (see Fig.~\ref{fig:Flambda_MIPT}). The symmetry can be explained by the average time-reversal symmetry: for every $\rho_{ABC}$ in the ensemble, there is a time-reversal conjugate $\rho^{*}_{ABC}$ with the same probability density. We also observe that the best infidelity of the Petz map recovery $-\overline{\log F_0} \approx 0.22 I(A:C|B)$ for both $U(1)$-symmetric and nonsymmetric cases, as shown in Fig.~\ref{fig:F0_CMI_mipt}. This is to be contrasted with the stabilizer case $-\overline{\log F_t} = I(A:C|B)/2$ which saturates the bound in Eq.~\eqref{eq:fid_bound_1} for all $t$. Yet, we still observe a linear relation between the average infidelity and the average CMI. This linear relation persists across all MIPTs we consider.

\begin{figure}[htp!]
    \centering
    \includegraphics[width=0.5\textwidth]{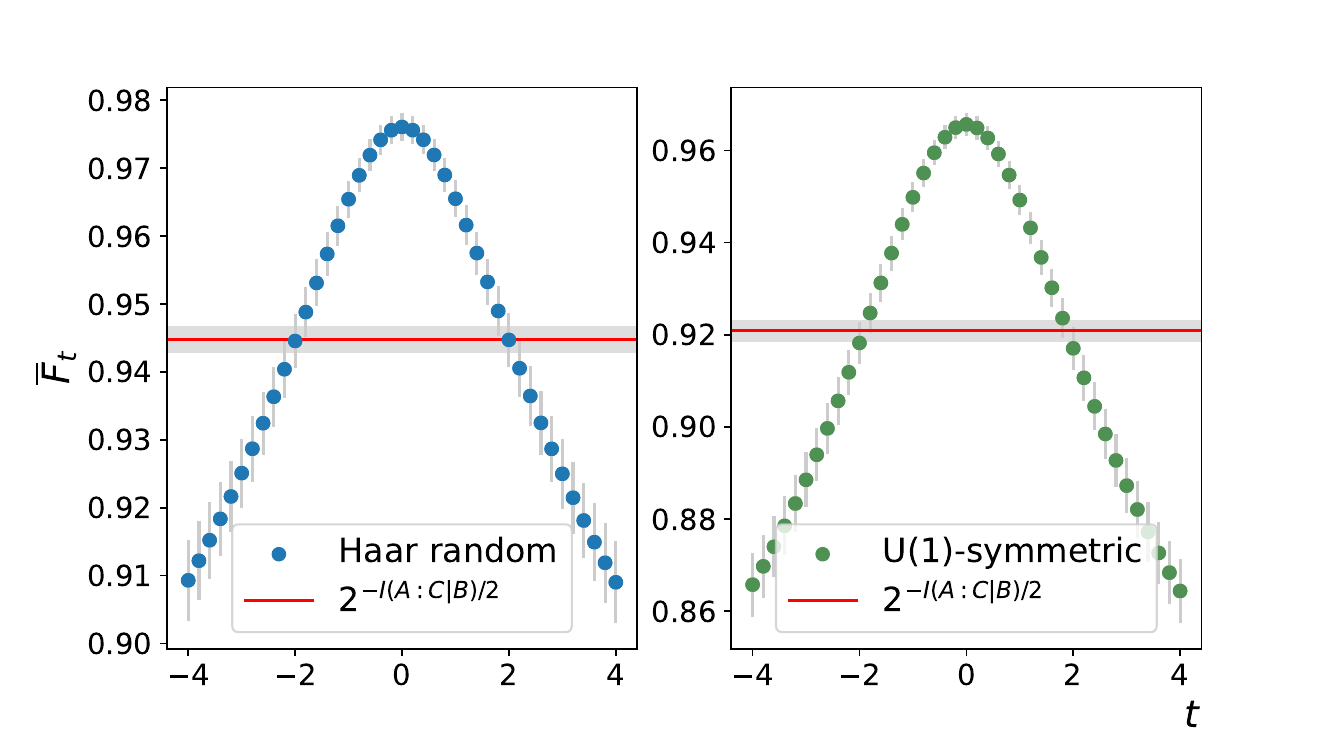}
    \caption{The relation between the averaged fidelity of rotated Petz map and parameter $t$. Here we take $L=20$, $L_A = 2$, $L_B = 8$, $L_C=2$ ($\eta = 0.095$) and average the results of $N=100$ simulations. As a consistency check of Eq.~(\ref{eq:fid_bound_1}), the quantity $2^{-I(A:C|B)/2}$ is also plotted in red. 
    Error bars are shown in gray.}
    \label{fig:Flambda_MIPT}
\end{figure}

\begin{figure}[htp!]
    \centering
    \includegraphics[width=0.5\textwidth]{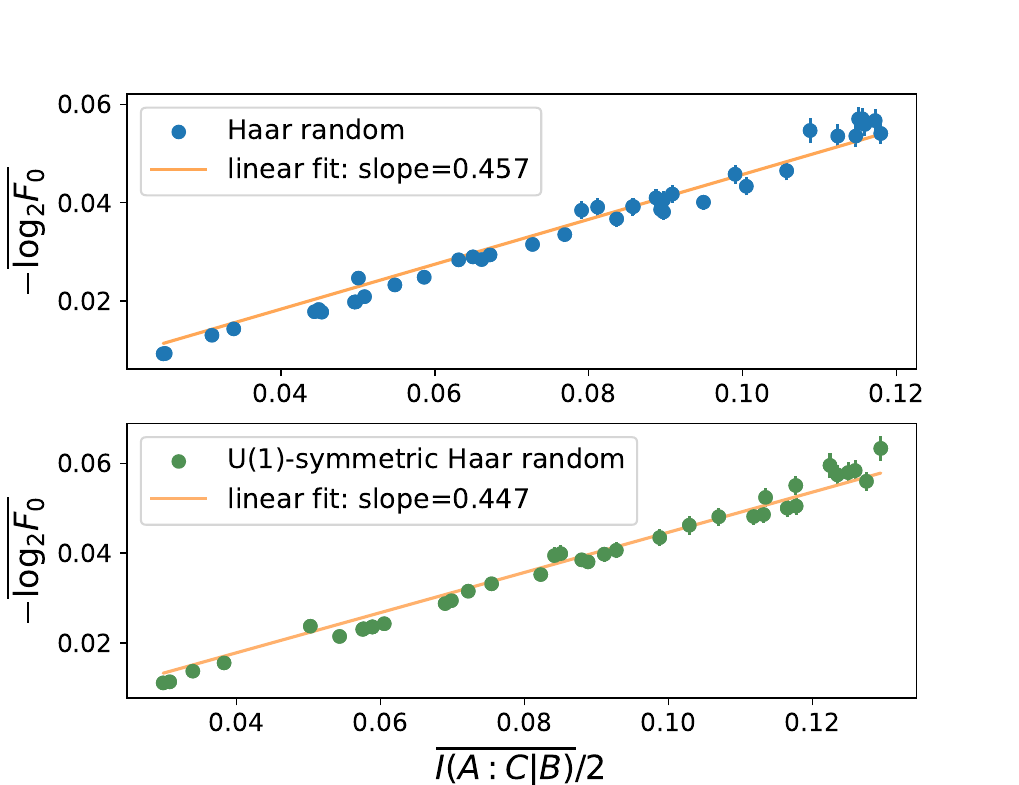}
    \caption{The relation between the averaged Petz map infidelity and the averaged CMI for steady states of MIPTs at criticality with Haar random unitaries (upper) and U(1)-conserving Haar random unitaries (bottom). For both cases, the system size is $L=20$ and we average the results of $N=400$ simulations.}
    \label{fig:F0_CMI_mipt}
\end{figure}

\subsection{Interpretation from boundary conformal field theory}
In this subsection, we show that the linear relation $-\overline{\log F_0} \propto \eta$ generically holds for MIPTs using arguments from conformal field theory. Conformal invariance of MIPTs at criticality has been confirmed extensively in the literature \cite{Bao_2020,Li_2018,Li_2019,Zabalo:2019sfl,Li_2021v2}. Furthermore, the Renyi entropy of a region $A$ is mapped to a two-point correlation function of boundary operators in a logarithmic conformal field theory \cite{Jian:2019mny,Li_2021v2}. We will generalize this argument to the Petz map fidelity and show that $-\overline{\log F_0} \propto \eta$ at small $\eta\ll 1$.

Let us first recall the mapping from the Haar-random MIPT to the statistical mechanics model \cite{Jian:2019mny}. Denote the circuit in Fig.~\ref{fig:mipt} as $C$, which is a product of all the random unitaries and projectors. Note that $C$ is not a unitary operator due to the presence of the projectors, thus the steady state
$|\psi(T)\rangle = C|\psi(0)\rangle/\sqrt{Z}$, where $Z = \langle \psi(0) |C^{\dagger} C|\psi(0)\rangle$ is the probability of the state in the ensemble. The Renyi entropy in a region $A= [x_1,x_2]$ is defined by
\begin{equation}
\label{eq:Sn}
    \begin{aligned}
        S_n(A) ~&=~ \frac{1}{1-n}  \log \tr(\rho^{n}_A) \\
         ~& =~  \frac{1}{1-n}  \log \tr(\rho^{\otimes n} \tau_{n,A}) \\
         ~& =~  \frac{1}{1-n} \lim_{m\rightarrow 0}  \frac{1}{m} \tr(\rho^{\otimes nm} \tau^{\otimes m}_{n,A})~,
    \end{aligned}
\end{equation}
where $\tau_n = (123\cdots n) \in S_n$ is the twist operator for cyclic permutation.
The averaged Renyi entropy $\overline{S}_{n}(A)$ can be mapped into the free energy cost of a twist defect of a stat-mech model in the replica limit,
\begin{equation}
\label{eq:mapping0}
    \overline{S}_{n}(A) ~=~ \frac{1}{1-n}\lim_{m\rightarrow 0} \frac{1}{m} \log (Z_A/Z_0)
\end{equation}
where
\begin{equation}
    \begin{aligned}
        Z_A ~&=~ \mathbb{E}_{C} \tr((C|\psi(0)\rangle\langle\psi(0)| C^{\dagger})^{\otimes (mn+1)}\tau^{\otimes m}_n)~,\\
        Z_0 ~&=~ \mathbb{E}_{C} \tr((C|\psi(0)\rangle\langle\psi(0)| C^{\dagger})^{\otimes (mn+1)}) ~.
    \end{aligned}
\end{equation}
The partition function $Z_0$ describes a 1+1-dimensional stat-mech model on the lower-half plane. Despite the technical derivation using Weingarten arithmetics, we simply note a few facts useful for later. Firstly, the model consists of spins $g_i$ taking value in the permutation group $S_{Q}$, where $Q=mn+1$. Secondly, the partition function $Z_A$ corresponds to imposing the fixed boundary condition $g_A = \tau^{\otimes m}_n$ on the boundary region $A$. Thirdly, the model has a $S_Q\times S_Q$ symmetry that acts as left and right multiplications on the spin. The symmetry is spontaneously broken in the volume law phase and preserved in the area law phase. Finally, the phase transition corresponds to the critical point which possesses conformal symmetry. At the critical point, the ratio of partition functions $Z_A/Z_0$ can be expressed in terms of the correlation function of boundary condition changing operators
\begin{equation}
    \frac{Z_A}{Z_0} ~=~ \langle \Sigma_{(\tau^{\otimes m}_n)^{-1}}(x_1) \Sigma_{\tau^{\otimes m}_n}(x_2)\rangle~,
\end{equation}
where $\Sigma_{g}$ is the boundary condition changing operator that changes from the fixed boundary condition $g$ to the fixed boundary condition $\mathbb{1}$. It is then clear from Eq.~\eqref{eq:mapping0} that all the Renyi entropy has logarithmic scaling at the critical point and the coefficient is given by the scaling dimension of $ \Sigma_{\tau^{\otimes m}_n}$ in the replica limit. For a comprehensive analysis of these boundary operators, see Ref.~\cite{Li_2021v2}.

Now we consider the averaged Petz map infidelity of steady states. We have the following replica expression for the Petz map fidelity \cite{Vardhan:2023pnm}
\begin{equation}
    F^{(n_1,n_2,k)}_0 ~=~ \tr\left(\left(\rho^{n_1}_{BC} \rho^{n_2}_{B} \rho_{AB} \rho^{n_2}_{B} \rho^{n_1}_{BC} \rho_{ABC}\right)^k\right)
\end{equation}
where the replica limit is defined by $n_1\rightarrow 1/2, n_2\rightarrow -1/2, k\rightarrow 1/2$. In this limit, $F^{(n_1,n_2,k)}_0\rightarrow F_0$. In terms of twist operators, the replicated fidelity can be written as
\begin{equation}
    F^{(n_1,n_2,k)}_0 ~=~ \tr (\rho^{\otimes M} \tau_A \tau_B \tau_C)~,
\end{equation}
where $M = 2k(1+n_1 + n_2)$ is the total number of replicas, and $\tau_A,\tau_B,\tau_C \in S_{M}$ specifies the twist operator whose exact form is not important later on and can be found in Ref.~\cite{Vardhan:2023pnm}. Thus, the averaged Petz map infidelity acquires a replica expression which is similar to Eq.~\eqref{eq:Sn},
\begin{equation}
    \log F_0 ~= \lim_{
    \substack{
        m\rightarrow 0, k\rightarrow \frac{1}{2}\\
        n_1 \rightarrow \frac{1}{2}, n_2\rightarrow -\frac{1}{2}
    }} \frac{1}{m}\tr (\rho^{\otimes Mm} \tau^{\otimes m}_A \tau^{\otimes m}_B \tau^{\otimes m}_C)\,.
\end{equation}
Thus, following the same derivation that leads to Eq.~\eqref{eq:mapping0}, we obtain
\begin{equation}
\label{eq:fidelity_4pt}
    \overline{\log F_0} ~=~ \lim_{
    \substack{
        m\rightarrow 0, k\rightarrow \frac{1}{2}\\
        n_1 \rightarrow \frac{1}{2}, n_2\rightarrow -\frac{1}{2}
    }} \frac{1}{m}\log (Z_{\tau_A,\tau_B,\tau_C}/Z_0)~,
\end{equation}
where 
\begin{equation}
    \begin{aligned}
        Z_{\tau_A,\tau_B,\tau_C} ~&=~ \mathbb{E}_{C} \tr\Big((C|\psi(0)\rangle\langle\psi(0)| C^{\dagger})^{\otimes (mM+1)} \\
        &\qquad\qquad\qquad\qquad\qquad  \tau^{\otimes m}_A \tau^{\otimes m}_B \tau^{\otimes m}_C\Big) ~,\\
        Z_0 ~&=~ \mathbb{E}_{C} \tr((C|\psi(0)\rangle\langle\psi(0)| C^{\dagger})^{\otimes (mM+1)})~.
    \end{aligned}
\end{equation}
The upshot is that the bulk partition function $Z_0$ is the same as the $Z_0$ for $M$-th Renyi entropy in Eq.~\eqref{eq:mapping0}, but the boundary condition in $Z_{\tau_A,\tau_B,\tau_C}$ is different: on each interval $A = [x_1,x_2], B=[x_2,x_3],C=[x_3,x_4]$, a fixed boundary condition is imposed. As a result, the ratio of partition functions is a four-point correlation function of the boundary operators,
\begin{equation}
    \frac{Z_{\tau_A,\tau_B,\tau_C}}{Z_0} ~=~ \langle \Sigma_1(x_1) \Sigma_2(x_2) \Sigma_3(x_3) \Sigma_4(x_4) \rangle~,
\end{equation}
where 
\begin{equation}
    \begin{split}
        \Sigma_1 ~=&~ \Sigma_{(\tau^{\otimes m}_A)^{-1}}~,~ \Sigma_2 ~=~ \Sigma_{(\tau^{-1}_B \tau_A)^{\otimes m}}~,\\
        \Sigma_3 ~=&~ \Sigma_{(\tau^{-1}_C \tau_B)^{\otimes m}}~,~ \Sigma_4 ~=~  \Sigma_{\tau^{\otimes m}_C}~.
    \end{split}
\end{equation}
In the replica limit, all the scaling dimensions go to zero and the four-point correlation function only depends on the cross-ratio $\eta$ due to conformal invariance
\begin{equation}
\label{eq:4pt}
    \frac{Z_{\tau_A,\tau_B,\tau_C}}{Z_0} ~=~ \langle \Sigma_1(0) \Sigma_2(\eta) \Sigma_3(1) \Sigma_4(\infty)\rangle~.
\end{equation}
As $\eta\rightarrow 0$, we can use the operator product expansion to expand the four-point correlation function to first order in $\eta$ \cite{Marshakov_2010}
\begin{equation}\resizebox{0.41\textwidth}{!}{$%
    \scalemath{1.0}{\frac{Z_{\tau_A,\tau_B,\tau_C}}{Z_0}} = C_{12} C_{34} \left[1+\eta\scalemath{1.0}{\frac{(h+h_1-h_2)(h+h_3-h_4)}{2h}}\right], $}%
\end{equation}
where $h_i$ is the scaling dimension of $\Sigma_i$ and $h$ is the scaling dimension of $\Sigma = \Sigma_1\times \Sigma_2 = \Sigma_{(\tau^{-1}_B)^{\otimes m}}$. Note that all scaling dimensions are proportional to $m$ as $m\rightarrow 0$, thus we can define $\tilde{h}_i := \partial h_i/\partial m|_{m=0}$ in the replica limit.

As $\eta\rightarrow 0$, we have $\overline{\log F_0} = 0$, thus $C_{12}C_{34} = 1+O(m^2)$. Thus, Eq.~\eqref{eq:fidelity_4pt} implies that
\begin{equation}
    \overline{\log F_0}  ~=~ \frac{(\tilde{h}+\tilde{h}_1-\tilde{h}_2)(\tilde{h}+\tilde{h}_3-\tilde{h}_4)}{2\tilde{h}} \eta + O(\eta^2)~.
\end{equation}
This explains the linear relation observed by the numerics.

Several comments are in order. First, different MIPTs correspond to different stat-mech models. However, the derivation above essentially only uses conformal invariance and OPEs. Thus, the linear relation persists for all MIPTs. Second, the linear coefficient depends on the scaling dimensions of particular boundary condition changing operators in the replica limit. Different MIPTs have different operator contents, thus, the linear coefficient varies across all MIPTs. Third, the derivation also suggests that $-\overline{\log F_t} \propto \eta$ for all $t$, albeit with a $t$-dependent linear coefficient. We leave it to future work for a systematic analysis of the scaling dimensions and OPEs for different replicas and the replica limits. Finally, our numerics also suggest that the linear relation holds for the volume law phase of MIPTs, which cannot be explained by the argument from conformal symmetry above. It is known that the entanglement of MIPTs in the volume law phase is given by the KPZ universality class \cite{Li_KPZ2023}. Thus, one may explain the linear relation by mapping the Petz map fidelity to observables in the KPZ dynamics, which we leave for future work.

\section{Petz map for quantum critical states}
\label{sec:CFT}
In this section, we consider one-dimensional quantum critical states that are described by conformal field theory (CFT) at long distances. Quantum critical states can be realized in the following two ways. Firstly, they are ground states of the local Hamiltonian tuned to the critical gapless point. Secondly, they appear on the boundary of a higher-dimensional topological phase or symmetry-protected topological phase. In the second scenario, the critical state can be chiral if the bulk phase has non-zero thermal Hall conductance, a phenomenon known as bulk-boundary correspondence. In the following, we compute the Petz map fidelity for chiral and non-chiral critical states. We will see that the non-chiral and chiral critical states differ from each other and that both differ from the MIPT steady states in the previous section. In addition, we consider the ensemble obtained by performing local measurements on quantum critical states. We find that local measurement does not alter the behavior of the Petz map fidelity. Our calculation of Petz map fidelity is based on numerical simulations of fermionic Gaussian states \cite{Swingle:2018dto}, which allow us to go to large system sizes up to $L=128$. Many details of the numerical techniques are reviewed in Appendix.~\ref{appen:fermion}.  
\subsection{Non-chiral states}
The ground state of one-dimensional gapless local Hamiltonians can be usually described by non-chiral CFT at long distances. Let us for concreteness consider a free fermion chain of $2L$ Majorana fermions,
\begin{equation}
\label{eq:FF}
    H ~=~ i \sum_{k=1}^{2L} \gamma_k \gamma_{k+1}~,
\end{equation}
where the canonical anticommutation relation $\{\gamma_k,\gamma_l\} = 2\delta_{kl}$ is satisfied. We impose the Neveu-Schwarz boundary condition $\gamma_{2L+1} = -\gamma_{1}$ to ensure a unique ground state. The low-energy physics is described by a free fermion CFT with central charge $c=1/2$. The entanglement entropy of an interval is given by \cite{Calabrese_2004}
\begin{equation}
\label{eq:S_CFT}
    S(A) ~=~ \frac{c}{3}\log L_A + O(1)~,
\end{equation}
which is of the same form as the MIPT at criticality. Thus, the CMI also takes the form of Eq.~\eqref{eq:CMI_CFT} with $\alpha = c/3$.  
We consider the Petz map recovery for contiguous intervals $A,B$ and $C$ on this state. The ground state is a fermionic Gaussian state (see Appendix~\ref{app:state} for general definitions and properties). We are able to simulate large system sizes up to $L=128$ with the help of Gaussian fermion techniques introduced in Ref.~\cite{Swingle:2018dto}. We note in passing that the bosonic analogue of the technique is developed in Ref.~\cite{Lami_2018}. We verify that while the CMI is proportional to $\eta$ at small $\eta$ (see Appendix~\ref{app:entropy} for details), and that the best Petz map infidelity $-\log F_0 \propto \eta^2$ (see Appendix~\ref{app:petz} for details), consistent with Ref.~\cite{Vardhan:2023pnm}. The fidelity $F_t$ of the rotated Petz map is symmetric with respect to $t$, signaling time-reversal invariance, see Fig.~\ref{fig:Ising_plot}.
\begin{figure}[htp!]
    \centering
    \includegraphics[width=0.5\textwidth]{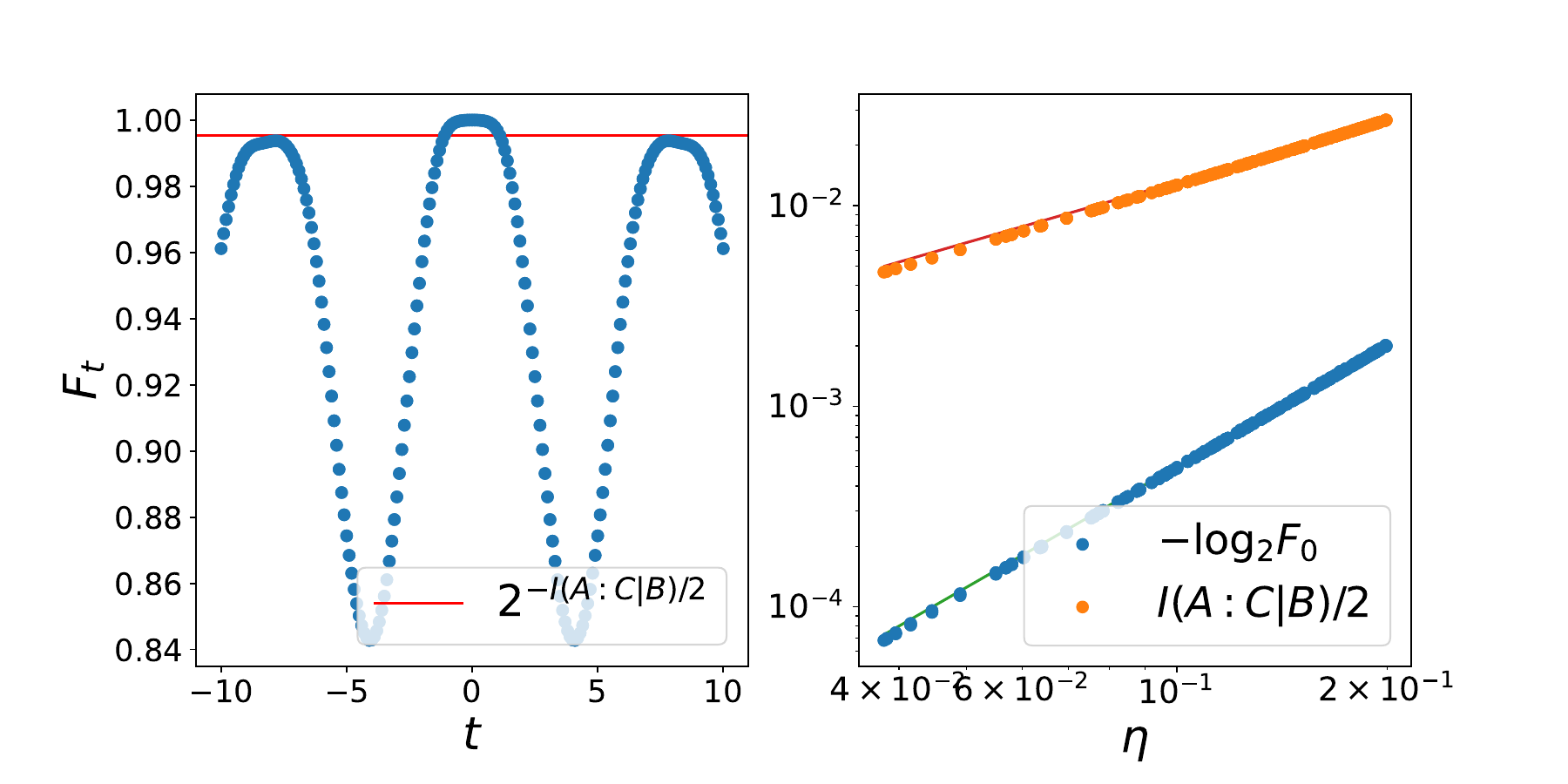}
    \caption{Left: the relation between the rotated Petz map fidelity $F_t$ and parameter $t$ under a fixed subsystem configuration ($\eta = 0.055$) for the ground state of Eq.~(\ref{eq:FF}) with $L=128$. 
    Right: behavior of $-\log F_0$ and the CMI for the same state. The linear fit (in red) and the quadratic fit (in green) show the dependence of the form $I(A:C|B) \propto \eta$ and $-\log F_0 \propto \eta^2$.}
    \label{fig:Ising_plot}
\end{figure}

Now we show that the above behavior persists for an ensemble of critical states under local measurements. Measurements of observables quadratic in fermion operators are Gaussian maps (see Appendix~\ref{app:map} for a general discussion of Gaussian maps), i.e., they map fermionic Gaussian states to an ensemble of fermionic Gaussian states labelled by different measurement outcomes. It has been shown in Ref.~\cite{Weinstein_2023} that the ensemble of states obtained by measuring fermion parity $-i\gamma_{2j-1}\gamma_{2j}$ on each site with probability $p<1$ have the same scaling of entropy Eq.~\eqref{eq:S_CFT} upon averaging over trajectories. It follows that the averaged CMI has the same linear scaling at small $\eta$ as in Eq.~\eqref{eq:CMI_smalleta}. We consider the Petz map on the measurement ensemble, see Appendix~\ref{appen:measurement} for details of numerical implementation. As shown in Fig.~\ref{fig:Ising_measurement_plot},  the average infidelity $-\overline{\log F} \propto \eta^2$, similar to the ground state. Although the measurement ensemble of CFT ground states may have the same logarithmic scaling of average entanglement entropy as the MIPTs at criticality, the two ensembles have very different multipartite entanglement among more than two regions, as is evident by the scaling of averaged Petz map infidelity.

\begin{figure}[htp!]
    \centering
    \includegraphics[width=0.5\textwidth]{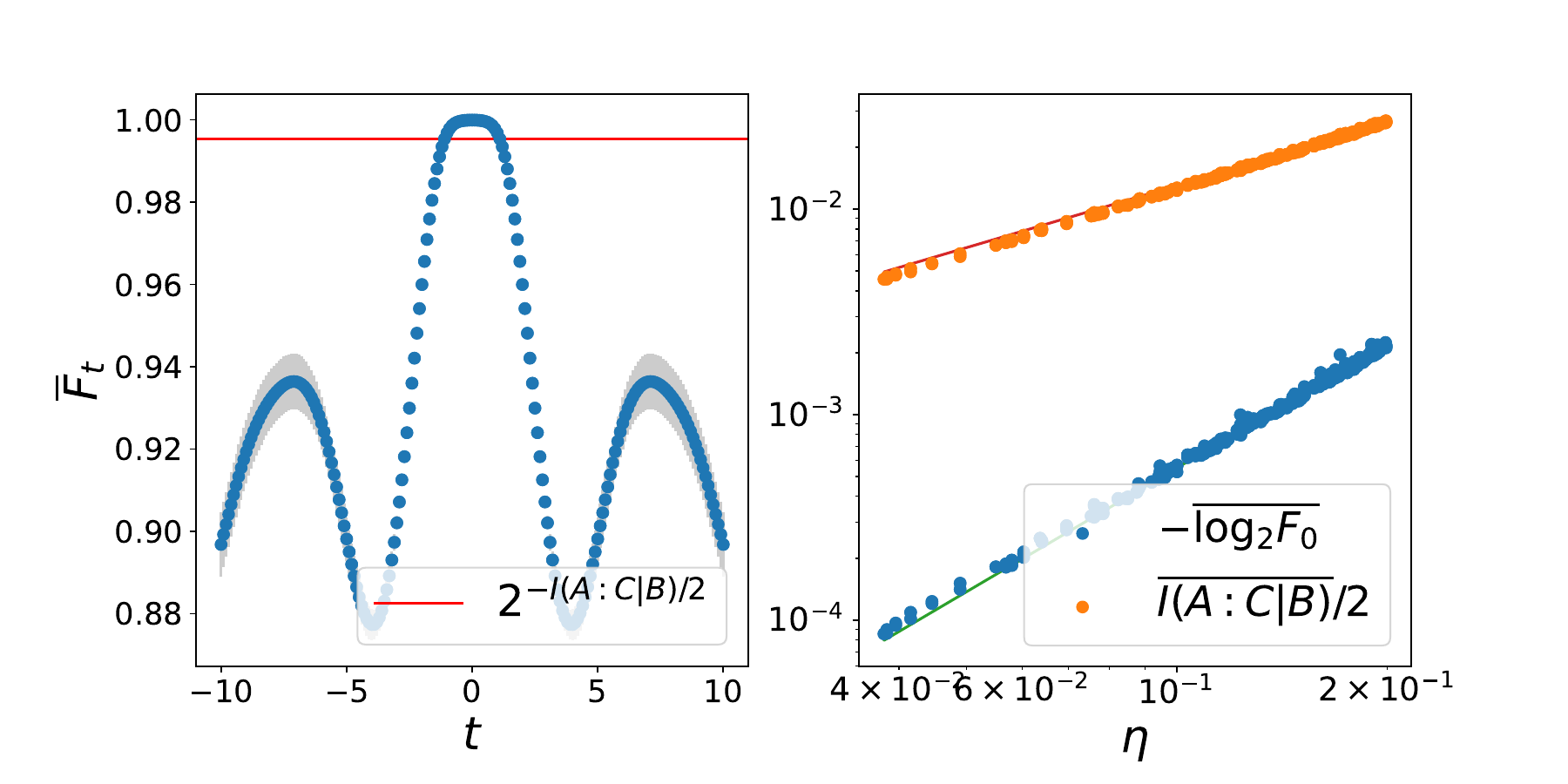}
    \caption{Left: the relation between the averaged $F_t$ and $t$ for measurement ensemble. We take the measurement rate as $p=0.2$ and average the results of $N=100$ simulations. Error bars are shown in gray. 
    Right: behaviors of the averaged infidelity and the CMI. The linear fit is in red and the quadratic fit is in green.}
    \label{fig:Ising_measurement_plot}
\end{figure}

\subsection{Chiral states}
The rotated Petz map Eq.~\eqref{equ:rotated_Petz} has a rotation parameter $t$ which changes sign under time-reversal symmetry. The asymmetry of the Petz map fidelity $F_t$ with respect to $t\rightarrow -t$ then naturally indicates breaking of time reversal symmetry. Here we illustrate the asymmetry for the chiral free fermion CFT state, which appears on the boundary of a 2+1 dimensional $p+ip$ topological superconductor. The chirality can also be indicated by other multipartite entanglement quantities, such as modular commutators, see Refs.~\cite{Kim_2022,Fan_2022,Zou_2022v2,gass2024manybodysystemsspuriousmodular}.

Recall that the reduced density matrix of the boundary of a chiral topological order can be described by a chiral thermal state with different temperatures $\beta_L\neq \beta_R$ for left and right moving modes \cite{Qi_2012},
\begin{equation}
    \rho_{\beta_L,\beta_R} ~\propto~ e^{-\beta_L H_L -\beta_R H_R}~,
\end{equation}
where $H_L$ and $H_R$ are Hamiltonians of chiral CFTs with only left or right moving modes. If the boundary has a left-moving mode in its ground state, then one can show that $\beta_L = \infty$ and $\beta_R=O(1)$. For the free fermion model Eq.~\eqref{eq:FF}, one can diagonalize the Hamiltonian in the momentum space and construct $H_L$ and $H_R$ by selecting left or right moving momenta, respectively. Thus, we can construct the correlation matrix of $\rho_{\beta_L,\beta_R}$ using Gaussian fermion techniques (see appendix~\ref{appen:CFT} for details). The CMI is half of that of the non-chiral ground state, because only left-moving modes contribute. Computing fidelity of the rotated Petz map, we find that the fidelity is almost constant for $t>0$ and drops rapidly as $t<0$, see Fig.~\ref{fig:chiral_Ising_plot}. The infidelity for $t>0$ again scales as $\eta^2$, signaling the underlying conformal field theory. We also consider the ensemble of mixed states obtained by measuring local fermion parity. The result is qualitatively similar to the chiral thermal state with no measurement, see Fig.~\ref{fig:chiral_Ising_measurement_plot}. This is an indication that the chiral fermion CFT on the boundary of $p+ip$ superconductor is robust under local parity measurements.

\begin{figure}[htp!]
    \centering
    \includegraphics[width=0.5\textwidth]{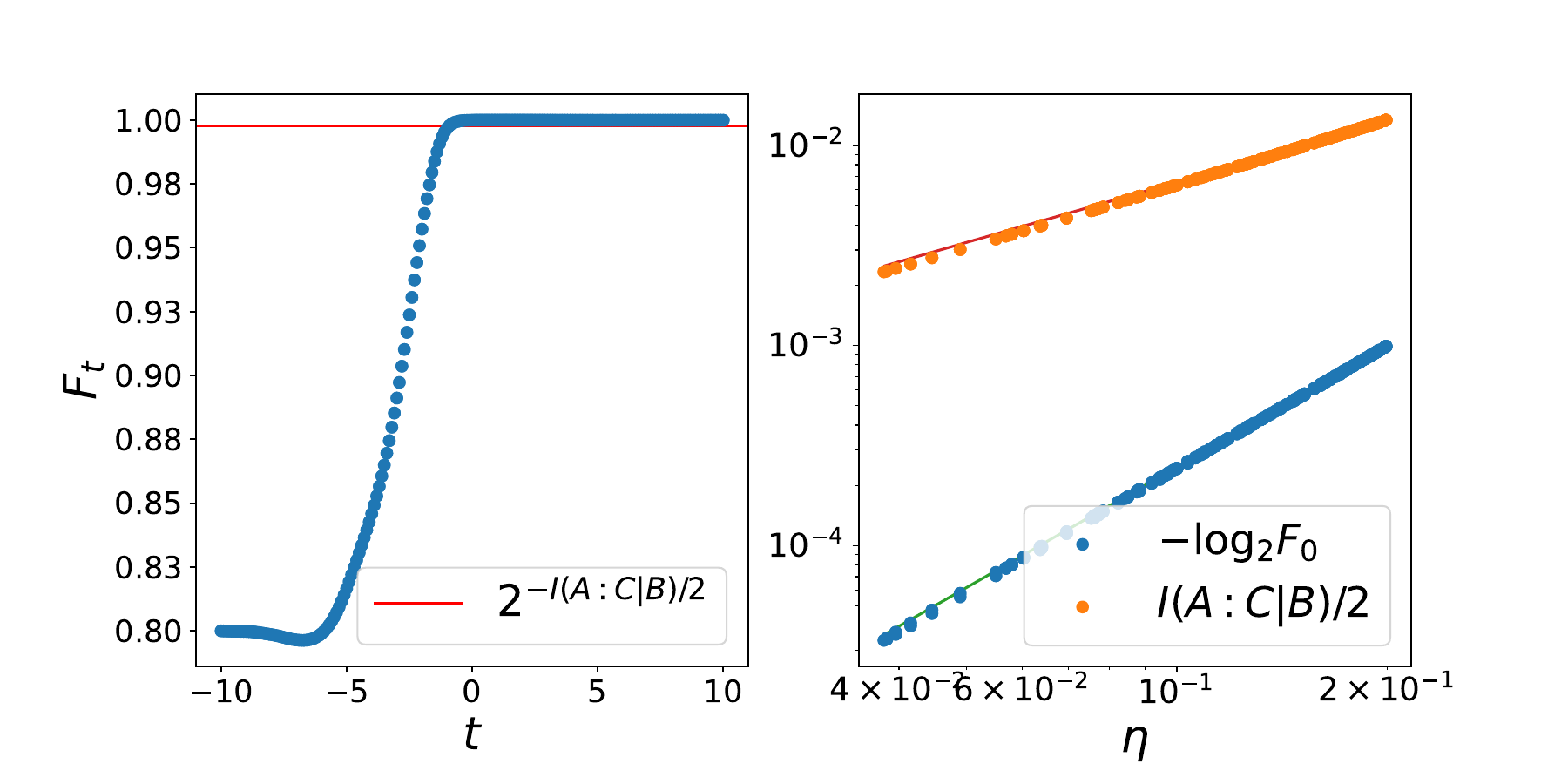}
    \caption{Left: the relation between $F_t$ and $t$ for the chiral state ($\beta_R=8)$. 
    Right: behavior of the infidelity and the CMI. The linear fit is in red and the quadratic fit is in green.}
    \label{fig:chiral_Ising_plot}
\end{figure}

\begin{figure}[htp!]
    \centering
    \includegraphics[width=0.5\textwidth]{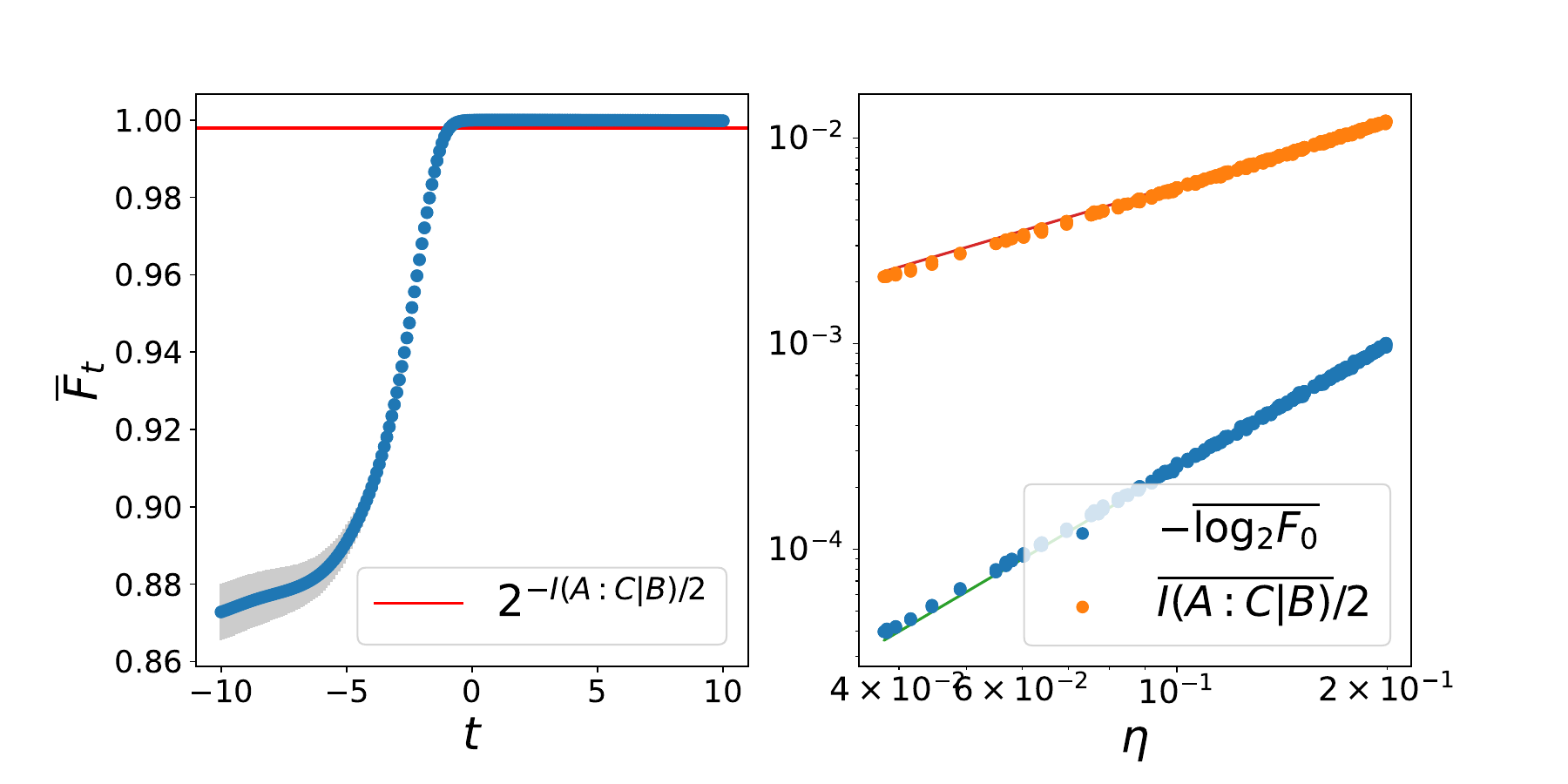}
    \caption{Left: the relation between the averaged fidelity and $t$ for the chiral state with measurements. 
    Right: behavior of $-\overline{\log F_0}$ and the averaged CMI. The linear fit is in red and the quadratic fit is in green.}
    \label{fig:chiral_Ising_measurement_plot}
\end{figure}

\section{Petz map recovery for topological order}
\label{sec:TO}
In this section, we show that the Petz map offers an operational interpretation of the topological entanglement entropy (TEE)~\cite{Levin_2006,Kitaev_2006} in a topologically ordered system. Our discussion is a slight generalization of those of Ref.~\cite{Shi_2020}.

We consider the Levin-Wen partition \cite{Levin_2006} of topological order shown in Fig.~\ref{fig:TEE}, where the TEE formula implies $I(A:C|B) = 2\log D$, where $D = \sqrt{\sum_{a}d^2_a}$ is the total anyon dimension. While useful for most cases in practice, there are still spurious cases \cite{Williamson_2019} where the TEE formula fails. It has been recently proven \cite{Kim_2023} that finite-depth local unitary circuits can only increase the TEE, thus $I(A:C|B)\geq 2\log D$. 

In the following, we derive $I(A:C|B)\geq 2\log D$ from an operational perspective. We focus on the case where all anyons are Abelian, where $d_a = 1$ and $D$ is the total number of anyons. We make two physically-motivated assumptions. (I) We assume that by inserting an anyon inside the white triangle, the reduced density matrices $\rho^{(a)}_{ABC}$ on $ABC$ are perfectly distinguishable and have the same CMI. (II) We assume that $\rho^{(a)}_{AB}$ and $\rho^{(a)}_{BC}$ are independent of the anyon type $a$. The physical reason underlying these assumptions is that anyons can be created by inserting a fattened Wilson line operator, which could bypass any single region $A$, $B$, or $C$. These assumptions can be proven with a stronger set of entanglement bootstrap axioms \cite{Shi_2020}, but here we will not assume those axioms. With these two assumptions, we can prove the following theorem based on the twirled Petz map.

\begin{figure}
    \centering
    \includegraphics[width = 0.6\linewidth]{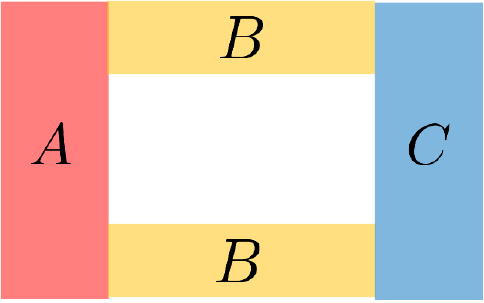}
    \caption{The Levin-Wen partition of topological entanglement entropy}
    \label{fig:TEE}
\end{figure}
\begin{thm}
For the Levin-Wen partition of the bulk of an Abelian topological order, $I(A:C|B)\geq 2\log D$.
\end{thm}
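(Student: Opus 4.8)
\section*{Proof proposal}

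The plan is to play the twirled Petz recovery bound against the perfect distinguishability of the anyon sectors, the crucial point being that assumptions (I) and (II) force \emph{both} the recovery channel and the state it acts on to be independent of the anyon type $a$.

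First I would observe that the rotated Petz map for the erasure channel on $C$, Eq.~\eqref{eq:Petz_map_1}, and hence the twirled Petz map $\mathcal{D} = \int dt\,\beta(t)\,\mathcal{D}_t$ obtained as in Eq.~\eqref{eq:twirled}, is a functional of $\rho_{BC}$ alone (with $\rho_B = \tr_C\rho_{BC}$): the dependence on $A$ and on the $A$--$BC$ correlations has dropped out. By assumption (II), $\rho^{(a)}_{BC}$ does not depend on $a$, so there is a single twirled Petz channel $\mathcal{D}$ acting on $B$, common to all sectors; and its input $\rho^{(a)}_{AB}$ is likewise $a$-independent by assumption (II). Consequently $\tilde{\rho}_{ABC} := \mathcal{D}\big(\rho^{(a)}_{AB}\big)$ is one fixed density matrix, the same for every anyon $a$. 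Applying the twirled Petz bound (the integrated analogue of Eq.~\eqref{eq:fid_bound_1}) in each sector, and using assumption (I) that all sectors share the same CMI $I := I(A:C|B)$,
\begin{equation}
F\big(\tilde{\rho}_{ABC},\rho^{(a)}_{ABC}\big) ~\geq~ 2^{-I/2}\qquad\text{for every }a~.
\end{equation}

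Next I would use distinguishability. By assumption (I) the states $\{\rho^{(a)}_{ABC}\}$ are perfectly distinguishable, so their supports are mutually orthogonal; let $P_a$ project onto the support of $\rho^{(a)}_{ABC}$, so that $\sum_a P_a \leq \mathbb{1}$. The elementary inequality $F(\rho,\sigma)^2 \leq \tr(P_\rho\,\sigma)$, with $P_\rho$ the support projector of $\rho$ (immediate from Uhlmann's theorem and Cauchy--Schwarz), then gives $\tr(P_a\,\tilde{\rho}_{ABC}) \geq 2^{-I}$ for each of the $D^2$ Abelian anyon types. Summing and using $\tr\tilde{\rho}_{ABC} = 1$,
\begin{equation}
1 ~\geq~ \sum_a \tr(P_a\,\tilde{\rho}_{ABC}) ~\geq~ D^2\, 2^{-I}~,
\end{equation}
hence $2^{I}\geq D^2$, i.e.\ $I(A:C|B)\geq 2\log D$.

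Essentially all the physics sits in the two assumptions; granted them, the argument is short. I expect the only real subtlety to be the first step: making precise that $\mathcal{D}$ is literally the \emph{same} channel in every sector, which relies on the Petz map for the erasure channel on $C$ being a function of $\rho_{BC}$ only (and on the mild support/full-rank technicalities needed for $\mathcal{D}$ to be a genuine channel). A secondary point worth stating carefully is the pair of standard facts used at the end, $F(\rho,\sigma)^2 \le \tr(P_\rho\sigma)$ and the orthogonality relation $\sum_a P_a \le \mathbb{1}$, since the whole bound rests on them; equivalently one may phrase the conclusion as the statement that no single state can have fidelity larger than $1/D$ with each of $D^2$ mutually orthogonal states, which is what makes $\log D$ operationally meaningful here.
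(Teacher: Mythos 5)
Your proof is correct and follows essentially the same route as the paper's: a single $a$-independent recovered state via assumption (II), the twirled Petz fidelity bound $F(\tilde{\rho}_{ABC},\rho^{(a)}_{ABC})\geq 2^{-I/2}$, and the impossibility of one state having fidelity greater than $1/D$ with $D^2$ mutually orthogonal states. The only cosmetic difference is that you execute the last step with support projectors and $F(\rho,\sigma)^2\le\tr(P_\rho\,\sigma)$ where the paper uses Uhlmann purifications and Bessel's inequality; the two are equivalent, and your explicit count of $D^2$ Abelian anyon types is in fact the count the paper's own contradiction requires.
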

\begin{proof}
    Consider the twirled Petz map \eqref{eq:twirled} for the erasure channel on $C$ and a single anyon $a$ inserted in the bulk. The recovered state $\tilde{\rho}^{(a)}_{ABC}$ is independent of the anyon type $a$ because of our assumption (II). We will henceforth denote this state as $\tilde{\rho}_{ABC}$. Eq.~\eqref{eq:fid_bound_1} then implies that $F(\tilde{\rho}_{ABC}, \rho^{(a)}_{ABC}) \geq 2^{-I(A:C|B)/2}$. Suppose $I(A:C|B)<2\log D$, then $F(\tilde{\rho}_{ABC}, \rho^{(a)}_{ABC}) > 1/D$. Now we use our assumption (I) to deduce a contradiction. By Uhlmann's theorem, there exists a purification $|\tilde{\psi}\rangle$ of $\tilde{\rho}_{ABC}$ and a series of purification $|\psi^{(a)}\rangle$ of $\rho^{(a)}_{ABC}$ which satisfy $|\langle\tilde{\psi}|\psi^{(a)}\rangle| = F(\tilde{\rho}_{ABC}, \rho^{(a)}_{ABC})>1/D$. Since $\rho^{(a)}_{ABC}$'s are perfectly distinguishable, all their purification $|\psi^{(a)}\rangle$'s are also perfectly distinguishable, that is, $\langle \psi^{(a)}| \psi^{(b)} \rangle = \delta^{ab}$. The contradiction follows from $1 = \langle \tilde{\psi}|\tilde{\psi}\rangle \geq \sum_{a} |\langle \tilde{\psi}|\psi^{(a)}\rangle|^2 >1$. Thus we conclude $I(A:C|B)\geq 2\log D$.
\end{proof}
The underlying intuition is that one cannot recover the information of anyon insertion from any two of the subsystems. There is possibly more information that one cannot recover, which corresponds to spurious TEE. The Petz map, however, cannot in general tell the difference between the anyon contribution and the spurious contribution. As we have shown in Sec.~\ref{sec:MIPT}, for stabilizer states the Petz map fidelity is determined by the CMI. It is open whether there may exist non-stabilizer examples where the Petz map can tell us more about spurious TEE than the CMI alone.

\section{Conclusion}
\label{sec:conc}
In this work, we have shown that the Petz map fidelity is a useful diagnostic of quantum phases, which may offer information beyond the CMI. We show that for the steady states of MIPTs at criticality, the Petz map infidelity exhibits a linear relation with the CMI at small cross-ratios. This is to be contrasted with critical ground states which have a quadratic relation between the infidelity and the CMI. The quadratic relation persists under local measurements, retaining the logarithmic scaling of average entanglement entropy. Furthermore, we consider the boundary of chiral topological order and show that the Petz map fidelity is asymmetric with respect to the rotation parameter. Such an asymmetry also persists when adding local measurements, signaling chirality for the post-measurement ensemble. For topological order, the Petz map offers an operational way to interpret topological entanglement entropy. The interpretation also implies nonnegative spurious contributions under some physical assumptions.

We anticipate that the Petz map may offer entanglement diagnostic beyond the CMI in various other settings, such as recently discovered decoherence-induced phase transitions in topological order and quantum critical states. From an information-theoretic perspective, the quantitative features of the Petz map may be used to distinguish different classes of sequences of approximate quantum Markov chains. The rotated Petz map is particularly useful to indicate the breaking of time-reversal symmetry. We leave it to future work for a systematic discussion of chirality in quantum many-body states.

\section*{Acknowledgements}
Y.Z. is deeply grateful to Yaodong Li and Shengqi Sang for very patiently and insightfully teaching us their works on measurement-induced phase transitions. Y.Z. acknowledges Shreya Vardhan, Bowen Shi, and Isaac Kim for collaboration on a related work. Y.H. thanks William Witczak-Krempa for valuable discussions. We thank Mark Wilde for helpful comments.
The research of Y.H. is supported by the Celestial Holography Initiative at the Perimeter Institute for Theoretical Physics and the Simons Collaboration on Celestial Holography. 
Research at the Perimeter Institute is supported by the Government of Canada through the Department of Innovation, Science and Industry Canada and by the Province of Ontario through the Ministry of Colleges and Universities.

\appendix
\section{Fermionic Gaussian States and Operations}\label{appen:fermion}
One of the challenges in classical simulations for quantum many-body systems is the exponential growth of the resources needed as the system size $L$ increases. However, when the Hamiltonian can be expressed in the form of free fermions, the Grassmann representation formalism allows us to reduce the computational cost to $O(L^3)$~\cite{bravyi2004lagrangian} when simulating Gaussian states and Gaussian operations. 
In section~\ref{sec:CFT}, we use this formalism to compute quantities of interest for critical Ising ground states and chiral states with(out) measurements. This appendix is devoted to a self-contained review of fermionic Gaussian states and Gaussian linear maps, followed by explicit applications relevant to the results presented in section~\ref{sec:CFT}. The material discussed here mainly follows the results from~\cite{bravyi2004lagrangian,Swingle:2018dto}.
We have been developing a Python package $\mathfrak{QM}$atch utilizing this formalism to simulate fermionic Gaussian states and operations. The package can be accessed \href{https://github.com/yhu1996/QMatch}{here} on GitHub. 

\subsection{Gaussian States}
\label{app:state}
We consider a system composed of $L$ fermions which can be described by fermionic creation ($c^{\dagger}_i$) and annihilation ($c_i$) operators. They satisfy the anti-commutation relation $\{ c_i,c^{\dagger}_j\} = \delta_{ij}\,\mathbb{I}$. This system can also be described by a set of $2L$ Majorana fermions $\gamma_i$, which satisfy the Clifford algebra $\{\gamma_i,\gamma_j\}=2\delta_{ij}\mathbb{I}$. 
The transformation between these two equivalent descriptions is given by 
\begin{equation}
    c_j^{\dagger} ~=~ \frac{1}{2}(\gamma_{2j-1} + i\,\gamma_{2j}) ~,~ c_j ~=~ \frac{1}{2}(\gamma_{2j-1} - i\,\gamma_{2j}) ~.
    \label{equ:c-gamma}
\end{equation}
In what follows, we stick with the Majorana fermion formalism.

Note that the operator algebra of $L$ fermionic modes is essentially the set of operators that can be represented as a polynomial of $\gamma_i$'s, where $i=1,\dots,2L$. Let's call it ${\cal C}_{2L}$. For any operator $X\in {\cal C}_{2L}$, it has a Grassmann representation $\omega(X,\theta)$ where $\theta$'s are Grassmann variables ($\theta_i\theta_j = - \theta_j\theta_i$ and $\theta_i^2=0$) and 
\begin{equation}
    \omega(\gamma_i\gamma_j\cdots\gamma_l,\theta) ~=~ \theta_i\theta_j\cdots\theta_l ~~,~~ \omega(\mathbb{I},\theta) ~=~ 1 ~~.
\end{equation}

A fermionic quantum state is Gaussian if and only if its density matrix $\rho$ has a Gaussian Grassmann representation, namely
\begin{equation}
    \omega(\rho,\theta) ~=~ \frac{1}{2^L}\,\exp\left( \frac{i}{2}\,\theta^{\rm T}\,G^{\rho}\,\theta \right) ~~,
    \label{equ:gaussian_state}
\end{equation}
where $\theta^{\rm T}G^{\rho}\,\theta := \sum_{i,j}(G^{\rho})_{ij}\theta_i\theta_j$.
Due to the anti-commuting property of $\theta$, $G^{\rho}$ is an antisymmetric real $2L\times 2L$ matrix and thus admits the Schur decomposition in terms of an orthogonal matrix $O\in {\rm O}(2L)$  
and a block diagonal skew-symmetric matrix $\Lambda$ as follows
\begin{equation}
    G^{\rho} ~=~ O\,\Lambda\,O^{\rm T} ~,~ \Lambda = {\rm diag}(\epsilon_1,\cdots,\epsilon_L)\otimes
    \begin{pmatrix}
        0 & 1\\
        -1 & 0
    \end{pmatrix} ~.
    \label{equ:antisym_decomp}
\end{equation}

The matrix $G^{\rho}$ is called the correlation matrix of $\rho$. By using the definition Eq.~(\ref{equ:gaussian_state}) one can check that 
\begin{equation}
    (G^{\rho})_{ij} ~=~ \begin{cases}
        0 \qquad\qquad\qquad\qquad i=j \\
        {\rm Tr}\Big(\,\rho\,i\gamma_i\gamma_j \Big) \qquad\quad~ i\ne j
    \end{cases} ~~.
    \label{equ:Cij-def}
\end{equation}
The antisymmeric nature of $G^{\rho}$ is manifest in Eq.~(\ref{equ:Cij-def}). Also note that if $\rho\propto\mathbb{I}$ is a maximal mixed state, $G^{\rho}$ is a zero matrix.

The fermionic Gaussian states can also be viewed as thermal states $e^{-\beta H}$ (as well as ground states as $\beta\rightarrow \infty$) of Hamiltonian quadratic in Majorana fermions. The generic form of such a Hamiltonian reads 
\begin{equation}
    H ~=~ \frac{i}{2}\,\sum_{i,j=1}^{2L}\,M_{ij}\,\gamma_i\,\gamma_j ~~,
    \label{equ:H}
\end{equation}
where $M$ is also an antisymmetric real $2L\times 2L$ matrix. Below, in appendix~\ref{appen:CFT}, we will show explicitly how to compute the correlation matrix for the ground state or thermal states of $H$.

\subsection{Gaussian Linear Maps}
\label{app:map}
A linear map ${\cal E}:\,{\cal C}_{2L}\to {\cal C}_{2L}$ is Gaussian if and only if it has the following Grassmann representation 
\begin{equation} 
    \begin{split}
        & \omega\left( {\cal E}(X), \theta\right) \\
        ~=&~ c\,\int D\mu D\eta\, \exp\Big[ S(\theta,\eta) \,+\, i\eta^{\rm T}\mu \Big]\omega(X,\mu)~,
    \end{split}
    \label{equ:Gaussian_map}
\end{equation}
where $X\in {\cal C}_{2L}$ is the input, $\theta$, $\mu$, and $\eta$ are all Grassmann variables, and $\int D\mu:=\prod_{i=1}^L\int d\mu_i$. Properties of integrals over Grassmann variables can be found in many quantum field theory textbooks. Readers may also refer to~\cite{bravyi2004lagrangian}. 
Moreover, $c$ is a complex number, $\eta^{\rm T}\mu:=\sum_i\eta_i\mu_i$, and
\begin{equation}
    S(\theta,\eta) ~=~ \frac{i}{2}\,\theta^{\rm T}\mA\,\theta ~+~ \frac{i}{2}\, \eta^{\rm T}\mD\,\eta ~+~ i\,\theta^{\rm T}\mB\,\eta ~~.
    \label{equ:S(theta,eta)}
\end{equation}
Note that $\mA$, $\mB$, and $\mD$ are $2L\times 2L$ matrices and $\mA$ and $\mD$ are antisymmetric. 
By plugging Eq.~(\ref{equ:gaussian_state}) into Eq.~(\ref{equ:Gaussian_map}), we can see the new correlation matrix can be written in terms of the original one $G^{X}$ as follows
\begin{equation}
    G^{{\cal E}(X)} ~=~ \mA ~+~ \mB\,\left( \mD + \left(G^{X}\right)^{-1} \right)^{-1} \,\mB^{\rm T} ~~,
    \label{equ:Cij_transf}
\end{equation}
where we have used the following integral twice
\begin{equation}
\begin{split}
    &\int D\theta\,\exp\left( \frac{i}{2}\,\theta^{\rm T}M\,\theta + \eta^{\rm T}\mu \right) \\
    & \qquad\qquad ~=~ i^L\,{\rm Pf}(M)\,\exp\left( -\frac{i}{2}\,\eta^{\rm T}M^{-1}\eta \right) ~~.
\end{split}
\end{equation}
Here ${\rm Pf}(M)$ is the Pfaffian of an anti-symmetric matrix $M$ and contributes to the transformation of the coefficient $c$. Note that $c$ is irrelevant for physics since it is fixed by the normalization $\tr \rho = 1$, we omit the $c$ transformation here and its explicit expression can be found in~\cite{Swingle:2018dto}.

In short, under the integral representation Eq.~(\ref{equ:Gaussian_map}), the Gaussian linear map is completely determined by four matrices/factor $\mA$, $\mB$, $c$, and $\mD$. Note that a Gaussian linear map is not necessarily a quantum channel. If ${\cal E}(X)$ is trace-preserving, it has $c=1$ and $\mD=0$. Thus the transformation of correlation matrices Eq.~(\ref{equ:Cij_transf}) takes a simplified form 
\begin{equation}
    G^{{\cal E}(X)} ~=~ \mA ~+~ \mB\,G^{X} \,\mB^{\rm T} ~~.
    \label{equ:Cij_transf_TP}
\end{equation}
The transformation of the correlation matrix under a fermionic Gaussian channel follows Eq.~(\ref{equ:Cij_transf_TP}).

Finally, for a given fermionic Gaussian channel, its integral representation Eq.~(\ref{equ:Gaussian_map}) can be found systematically using the Choi–Jamiolkowski isomorphism~\cite{bravyi2004lagrangian}. 
The isomorphism says for a Gaussian linear map ${\cal E}:\,{\cal C}_{2L}\to {\cal C}_{2L}$, its dual operator is defined as 
\begin{equation}
    \rho_{\cal E} ~=~ ({\cal E}\otimes_f \mathbb{I})(\rho_I)~\in~{\cal C}_{4L}
\end{equation}
where 
\begin{equation}
    \begin{split}
        & {\cal E}_1\otimes_f{\cal E}_2(\gamma_1,\dots,\gamma_{2L};\gamma_{2L+1},\dots,\gamma_{4L})\\
        ~=&~ {\cal E}_1(\gamma_1,\dots,\gamma_{2L})\,{\cal E}_2(\gamma_{2L+1},\dots,\gamma_{4L})~.
    \end{split}
\end{equation}
$\rho_I$ is a pure Gaussian state and can be viewed as a maximally entangled state between $\gamma_1,\dots,\gamma_{2L}$ and  $\gamma_{2L+1},\dots,\gamma_{4L}$, which reads
\begin{equation}
    \rho_I ~=~ \frac{1}{2^{2L}}\,\prod_{i=1}^{2L}(\mathbb{I}+i\gamma_i\gamma_{2L+i}) ~.
\end{equation}
To obtain the Grassmann representation of $\rho_{\cal E}$, we assign $\gamma_1,\dots,\gamma_{2L}$ $\to$ $\theta_1,\dots,\theta_{2L}$ and $\gamma_{2L+1},\dots,\gamma_{4L}$ $\to$ $\eta_1,\dots,\eta_{2L}$. Namely, 
\begin{equation}
    \omega(\rho_I, \theta,\eta) ~=~ \frac{1}{2^{2L}}\,\exp\Big( i\,\theta^T\,\eta\Big)~.
\end{equation}
Plugging in Eq.~(\ref{equ:Gaussian_map}) and evaluating Grassmann integrals yields
\begin{equation}
    \omega(\rho_{\cal E}, \theta,\eta) ~=~ \frac{c}{2^{2L}}\,\exp S(\theta,\eta) ~.
    \label{equ:rho_E-rep}
\end{equation}
Thus, the integral representation of the map ${\cal E}$ can be read off from the integral representation of its dual operator. 
We will see an explicit example below in appendix~\ref{appen:measurement}. 

\subsection{Von Neumann Entropy}
\label{app:entropy}
Given the correlation matrix $G^{\rho}$ of the Gaussian state $\rho$, we next derive the expression of the von Neumann entropy $S(\rho)=-\tr(\rho\log\rho)$ in terms of $G^{\rho}$. 

First, recall that $\rho$ can be viewed as a thermal state w.r.t. the modular Hamiltonian $H_E$ (up to normalization), we thus write 
\begin{equation}
    \rho ~=~ e^{-H_E}/{\rm Tr}\left(e^{-H_E}\right)~,
    \label{equ:rho_ansatz}
\end{equation}
where $H_E = \frac{i}{2}\sum_{i,j}h_{ij}\gamma_i\gamma_j$ and the matrix $h$ can be decomposed as follows
\begin{equation}
    \begin{split}
         h ~=~ O_h\left( \oplus_n\begin{pmatrix}
            0 & \epsilon^h_n\\
           -\epsilon^h_n & 0 
        \end{pmatrix} \right)O_h^T
    \end{split} ~.
\end{equation}
Given the ansatz Eq.~(\ref{equ:rho_ansatz}), we have
\begin{equation}
    \begin{split}
        S(\rho)~=&~ -\,\tr(\rho\log\rho)\\
        ~=&~ \tr(\rho H_E) ~+~ \log {\rm Tr}\left(e^{-H_E}\right)\\
        ~=&~ \frac{1}{2}\,\sum_{i,j} h_{ij}\,G^{\rho}_{ij} ~+~ \sum_n\log\left(2\cosh \epsilon^h_n\right)~. 
    \end{split}
\end{equation}
Thus the next step is to express $h_ij$ in terms of $G^{\rho}$, which can be derived via direct computations. It turns out that they share the same orthogonal matrix in their Schur decompositions and entries in their skew-symmetric matrices are related as follows
\begin{equation}
    \epsilon^h_n ~=~ {\rm arctanh}\left( - \epsilon^{G^{\rho}}_n\right) ~. 
    \label{equ:epsM}
\end{equation}
After some algebra, the final result of $S(\rho)$ reads
\begin{equation}
    \begin{split}
        &S(\rho) ~=~ \sum_n \epsilon^h_n\,\epsilon^{G^{\rho}}_n ~+~ \sum_n\log\left(2\cosh \epsilon^h_n\right)\\
        =& -\frac{1+\epsilon^{G^{\rho}}_n}{2}\log\frac{1+\epsilon^{G^{\rho}}_n}{2} - \frac{1-\epsilon^{G^{\rho}}_n}{2}\log\frac{1-\epsilon^{G^{\rho}}_n}{2}~.
    \end{split}
\end{equation}
This result was first derived in Ref.~\cite{Vidal_2003}. 

\subsection{Petz Map and Fidelity}
\label{app:petz}
In this section, we will apply the formalism introduced above to compute the Petz recovery map for Gaussian channels and the explicit expression of the fidelity between two Gaussian states. This subsection is essentially a review of the results in Ref.~\cite{Swingle:2018dto}.

Before diving into this computation, we first look at how correlation matrices transform explicitly under two basic operations on density matrices: partial trace over a subregion and tensor product. 
\begin{enumerate}
    \item Partial trace. Given a correlation matrix $(G^{\rho})_{IJ}$ ($I,J=1,\dots,2L$) for the state $\rho$ with $2L$ Majorana fermions, consider erasing a subset $A$ of these fermions. The correlation matrix for the reduced density matrix $\rho_{\bar{A}}=\tr_A\rho$ is obtained by deleting the rows and columns in $G^{\rho}$ corresponding to the fermions in $A$. 
    \item Tensor product. Following the same idea we can see that the correlation matrix for $\rho_A\otimes \rho_B$ is 
\begin{equation}
    G^{\rho_A\otimes \rho_B} ~=~ G^{\rho_A} \oplus G^{\rho_B} ~~.
    \label{equ:G-tensorprod}
\end{equation}
One useful example would be $G^{\rho_A\otimes \mathbb{I}_B} = G^{\rho_A}\oplus 0_{B}$.
\end{enumerate}

Moreover, in the rest of this appendix, we will adopt the following notations: (1) $0_A$ is the zero matrix on region $A$; 
(2) $\mathbb{I}_{A}$ is the identity matrix on region $A$. 

The rotated Petz recovery channel is defined in Eq.~(\ref{equ:rotated_Petz}). We consider the input state $X$ and the channel ${\cal N}$ to be both fermionic Gaussian. 
As mentioned earlier, the channel ${\cal N}$ can be described by its Grassmann integral representation with matrices $\mA^{\cal N}$ and $\mB^{\cal N}$. 
Then, given the transformation of the state, i.e. equation Eq.~(\ref{equ:rotated_Petz}), one can use the Choi–Jamiolkowski isomorphism to derive its integral representation. In \cite{Swingle:2018dto}, they rewrote the Petz map $(t=0)$ as a convolution of three Gaussian maps. For each map, they computed its integral representation and ultimately combined them. Given the correlation matrix $G^{\sigma}$ and the channel $\mathcal{N}$, they obtain
\begin{equation}\resizebox{0.41\textwidth}{!}{$%
    \begin{aligned}
        {\cal B}^{\rm Petz} =&~ \sqrt{\mathbb{I} + (G^{\sigma})^2}\,\left(\mB^{\cal N}\right)^{\rm T}\, \left(\sqrt{\mathbb{I} + \left(G^{{\cal N}(\sigma)}\right)^2}\right)^{-1} \\
         {\cal A}^{\rm Petz} =&~ G^{\sigma} - {\cal B}^{\rm Petz}\, G^{{\cal N}(\sigma)} \,\left({\cal B}^{\rm Petz}\right)^{\rm T} ~.
    \end{aligned}$}
    \label{equ:Petz_matrices}
\end{equation}

For our purpose of reconstructing subregions as explained in Sec.~\ref{sec:Petz}, we take $X = \rho_{AB} \otimes \mathbb{I}_C$, the channel ${\cal N}$ being the erasure channel on $C$, $\sigma = \rho_A \otimes \rho_{BC}$, hence ${\cal N}(\sigma)= \rho_A \otimes \rho_B \otimes \mathbb{I}_C$. By using Eq.~(\ref{equ:G-tensorprod}), we obtain
\begin{equation}
   \begin{split}
       &\mB^{\cal N} ~=~  \mathbb{I}_{AB}\oplus 0_C ~~,\\
      & G^{\sigma} ~=~ G^{\rho_A}\oplus G^{\rho_{BC}} ~~,\\
       &G^{{\cal N}(\sigma)} ~=~ G^{\rho_A}\oplus G^{\rho_{B}}\oplus 0_C ~~.
   \end{split}
\end{equation}

For generic $t$, the rotated Petz map is realized as follows
\begin{equation}
    \begin{split}
        {\cal D}_{\sigma,t}(X) ~=&~ \Big( {\cal U}_{\sigma,t}\, \circ\, {\cal D}_{\sigma,0}\, \circ\,{\cal U}_{{\cal N}(\sigma),-t}\Big)(X) ~, \\
        {\cal U}_{\sigma,t}(\cdot) ~=&~ \sigma^{\frac{it}{2}}\,(\cdot)\,\sigma^{-\frac{it}{2}} ~.
    \end{split}
\end{equation}
Based on the result of Eq.~(\ref{equ:Petz_matrices}), one can further derive the Grassmann integral representation of ${\cal D}_{\sigma,t}(X)$ and the final result is~\cite{Swingle:2018dto}
\begin{equation}\resizebox{0.41\textwidth}{!}{$%
    \begin{aligned}
        {\cal A}^{\rm rotated-Petz}(t) ~=&~ {\cal R}(G^{\sigma},t) \, A^{\rm Petz} \, \left( {\cal R}(G^{\sigma},t) \right)^{\rm T} ~,\\
        {\cal B}^{\rm rotated-Petz}(t) ~=&~ {\cal R}(G^{\sigma},t) \, {\cal B}^{\rm Petz} \, {\cal R}(G^{{\cal N}(\sigma)},-t) ~,
    \end{aligned}$}
\end{equation}
where ${\cal R}$ is defined as follows. 
Consider an antisymmetric matrix $M$, which admits the following decomposition
\begin{equation}
    M ~=~ O\,\left(\oplus_{i=1}^L\,
    \begin{pmatrix}
        0 & \epsilon_i\\
        -\epsilon_i & 0
    \end{pmatrix}\right)\,O^{\rm T} ~~.
\end{equation}
Then ${\cal R}(M,t)$ reads 
\begin{equation}
    \begin{aligned}
        &{\cal R}(M,t) ~:=~ 
        O\,\left(\oplus_{i=1}^L\, M_{i,t} \right)\,O^{\rm T} ~,\\
        & M_{i,t} = \begin{pmatrix}
            {\rm real}\left(\left(\frac{1+\epsilon_i}{1-\epsilon_i}\right)^{it/2}\right) & -{\rm img}\left(\left(\frac{1+\epsilon_i}{1-\epsilon_i}\right)^{it/2}\right) \\
            {\rm img}\left(\left(\frac{1+\epsilon_i}{1-\epsilon_i}\right)^{it/2}\right) & {\rm real}\left(\left(\frac{1+\epsilon_i}{1-\epsilon_i}\right)^{it/2}\right)
        \end{pmatrix}~,
    \end{aligned}
\end{equation}
where ${\rm real}(x)$ and ${\rm img}(x)$ denotes the real and imaginary part of $x$ respectively. 
Note that ${\cal R}(M,t=0)=\mathbb{I}$. Namely, when $t=0$, it reduces to the Petz map Eq.~(\ref{equ:Petz_matrices}) as expected.  
Moreover, when $\epsilon_i \approx \pm 1$, the matrix $M_{i,t}$ is singular. In this case, the $i$-th fermion decouples thus we set $M_{i,t} := \mathbb{I}_2$.

Finally, by using Eq.~(\ref{equ:Cij_transf_TP}), the correlation matrix for $\tilde{\rho}_{ABC}(t)$ defined in Eq.~(\ref{eq:Petz_map_1}) can be straightforwardly computed as follows
\begin{equation}\resizebox{0.41\textwidth}{!}{$%
    \begin{aligned}
        &G^{\tilde{\rho}_{ABC}(t)} ~=~ {\cal A}^{\rm rotated-Petz}(t) ~+~ \\
        & {\cal B}^{\rm rotated-Petz}(t) \Big( G^{\rho_{AB}}\oplus 0_{C} \Big)\left({\cal B}^{\rm rotated-Petz}(t)\right)^{\rm T} .
    \end{aligned}$}
\end{equation}

Now let's move on to fidelity. 
The fidelity of two states $\rho$ and $\sigma$ is defined as
\begin{equation}\label{equ:fidelity_def}
    F(\rho,\sigma) ~:=~ \tr\left( \sqrt{\sigma^{\frac{1}{2}}\,\rho\,\sigma^{\frac{1}{2}}} \right) ~.
\end{equation}
We would like to express $F(\rho,\sigma)$ in terms of the correlation matrices $G^{\rho}$ and $G^{\sigma}$ and the idea is the following. First, direct computation yields the Grassmann representation of $\sigma^{\frac{1}{2}}\rho\,\sigma^{\frac{1}{2}}$ as 
\begin{equation}
    \omega(\sigma^{\frac{1}{2}}\rho\,\sigma^{\frac{1}{2}},\theta) ~=~ \mathbf{c}\,\exp\left( \frac{i}{2}\,\theta^{\rm T}\,G^{\sigma^{\frac{1}{2}}\rho\sigma^{\frac{1}{2}}}\,\theta \right)~,
\end{equation}
where 
\begin{equation}
    \begin{split}
        &G^{\sigma^{\frac{1}{2}}\rho\sigma^{\frac{1}{2}}} ~=~ G^{\sigma} ~+~ \sqrt{\mathbb{I}+(G^{\sigma})^2}\\
        &\quad\quad\quad ~\times\left( \left( G^{\rho}\right)^{-1} -  G^{\sigma}\right)^{-1}\,\sqrt{\mathbb{I}+(G^{\sigma})^2}~,
    \end{split}
\end{equation}
and the prefactor $\mathbf{c}$ reads 
\begin{equation}
    \mathbf{c} ~=~ \frac{(-1)^L}{2^{2L}}\,{\rm Pf}(G^{\rho})\,{\rm Pf}\left((G^{\rho})^{-1} - G^{\sigma}\right) ~.
\end{equation}
Second, one can check that the Grassmann representation of the square root of an operator takes the following form
\begin{equation}
    \begin{split}
        \omega(\sqrt{X},\theta) ~=&~ \sqrt{\mathbf{c}}\,\left({\rm det}\left(\mathbb{I} - \left(G^{\sqrt{X}}\right)^2 \right)\right)^{-\frac{1}{4}}\\
        &\qquad\qquad ~\times~\exp\left( \frac{i}{2}\,\theta^{\rm T}\,G^{\sqrt{X}}\,\theta \right)~,
    \end{split}
\end{equation}
where 
\begin{equation}
    G^{\sqrt{X}} ~=~ \Big( G^X \Big)^{-1}\,\left( \sqrt{\mathbb{I} + \left(G^X\right)^2} - \mathbb{I} \right) ~.
\end{equation}
Third, note that the trace of an operator $X\in {\cal C}_{2L}$ can be written as 
\begin{equation}
    \tr(X) ~=~ 2^L\,\omega(X,\theta=0) ~.
\end{equation} This can be seen from the fact that a product of any nonzero number of \textit{distinct} Majorana fermions has a vanishing trace. 
Put it all together and we have
\begin{equation}
    \begin{split}
        F(\rho,&\sigma) ~=~ \frac{1}{2^{L/2}}\, \left({\rm det}\left(\mathbb{I}-G^{\rho}G^{\sigma}\right)\right)^{1/4} \\
        &~\times~\left({\rm det}\left(\mathbb{I}+\sqrt{\mathbb{I}+(G^{\sigma^{\frac{1}{2}}\rho\sigma^{\frac{1}{2}}})^2 }\right)\right)^{1/4} ~.
    \end{split}
\end{equation}

\subsection{Ising CFT States and Chiral States}\label{appen:CFT}
One example of systems that can be described by Eq.~(\ref{equ:H}) is the Ising model. To study the Ising CFT states, here we focus on its critical point
\begin{equation}
    H_{\rm critical~Ising} ~=~ -\,\sum_{i=1}^L\left( X_i\,X_{i+1} + Z_i \right) ~,
    \label{equ:H_Ising}
\end{equation}
where we consider the system size to be $L$ and $X$, $Z$ are Pauli operators. 
Indeed, Eq.~(\ref{equ:H_Ising}) takes the form of Eq.~(\ref{equ:H}) and one can see this by using the Jordan-Wigner transformation
\begin{equation}
    \gamma_{2i-1} ~=~ \left( \prod_{j<i}Z_j \right)\,X_i ~,~ 
    \gamma_{2i} ~=~ \left( \prod_{j<i}Z_j \right)\,Y_i ~.
\end{equation}
Then direct computation yields 
\begin{equation}
    Z_i ~=~ -i\,\gamma_{2i-1}\gamma_{2i}~,~ 
    X_iX_{i+1} ~=~ -i\,\g_{2i}\g_{2i+1} ~,
    \label{equ:Jordan-Wigner}
\end{equation}
and the Hamiltonian Eq.~(\ref{equ:H_Ising}) becomes 
\begin{equation}
    H_{\rm critical~Ising}~=~\frac{i}{2}\sum_{k=1}^{2L}\left( \g_k\g_{k+1} - \g_{k+1}\g_k \right) ~~. 
    \label{equ:H_Ising(gamma)}
\end{equation}
The periodic boundary condition $X_{L+1}=X_1$ corresponds to $\g_{2L+1} = -\g_1$ in the even spin parity sector. 

Therefore, Ising CFT states are fermionic Gaussian states and we can harness the formalism developed above to obtain results in section~\ref{sec:CFT}. Here we focus on the ground state and derive the correlation matrix for the ground state of a generic Hamiltonian Eq.~(\ref{equ:H}). As mentioned above, $H$ can be decomposed as follows
\begin{equation}
    \begin{split}
        H ~=&~ \frac{i}{2}\,\sum_{i,j=1}^{2L}\sum_{n,m=1}^{2L}\,\gamma_i\,O^M_{i,n}\,\Lambda^M_{nm}\,O^M_{j,m}\,\gamma_j  \\
        ~=&~i\,\sum_{k=1}^{L}\,\epsilon^M_k\,\tilde{\gamma}_{2k-1}\,\tilde{\gamma}_{2k}  ~~,
    \end{split}
\end{equation}
where we've noted that
\begin{equation}
    \Lambda^M_{nm} ~=~ \epsilon^M_k\delta_{n,2k-1}\delta_{m,2k} - \epsilon^M_k\delta_{n,2k}\delta_{m,2k-1} ~,
\end{equation}
and defined 
\begin{equation}
    \tilde{\gamma}_{2k-1} ~:=~ \sum_{i=1}^{2L} \gamma_i\,O^M_{i,2k-1} ~,~ 
    \tilde{\gamma}_{2k} ~:=~ \sum_{j=1}^{2L} \gamma_j\,O^M_{j,2k} ~.
    \label{equ:tilde-gamma}
\end{equation}
Using Eq.~(\ref{equ:Jordan-Wigner}) the Hamiltonian can be further written as 
\begin{equation}
    H ~=~ -\,\sum_{k=1}^{L}\,\epsilon^M_k\,\tilde{Z}_k ~.
\end{equation}
Thus in the $\tilde{Z}$ basis, the ground state has the following configuration: if $\epsilon^M_k$ is positive, spin $k$ is up (eigenstate of $\tilde{Z}_k$ with eigenvalue $+1$) and vice versa. Namely, 
\begin{equation}
    |0\rangle ~=~ |{\rm sgn}(\epsilon^M_1),\,\dots,\, {\rm sgn}(\epsilon^M_L)\rangle ~.
\end{equation}
With this explicit expression of the ground state in hand, we are able to compute a lot of quantities straightforwardly. 
One immediate result is the ground state energy,  
\begin{equation}
    E_{GS} ~=~ -\,\sum_{k=1}^{L}\,|\epsilon^M_k| ~.
\end{equation}
Moreover, the correlation matrix in $\tilde{\gamma}$ basis reads
\begin{equation}
    \begin{split}
        &\tilde{G}_{ij} ~=~ \langle 0| i\,\tilde{\gamma}_i\tilde{\gamma}_j |0\rangle\\
        ~=&~ \sum_{k=1}^L\Big( \delta_{i,2k}\delta_{j,2k-1} -\,\delta_{i,2k-1}\delta_{j,2k}\Big)\, \langle 0|\tilde{Z}_k |0\rangle \\
        ~=&~ \sum_{k=1}^L\Big( \delta_{i,2k}\delta_{j,2k-1} -\,\delta_{i,2k-1}\delta_{j,2k}\Big)\,{\rm sgn}(\epsilon^M_k) ~.
    \end{split}
\end{equation}
Finally, the correlation matrix in $\gamma$ basis can be obtained by implementing the inverse of the transformation Eq.~(\ref{equ:tilde-gamma}). Namely, 
\begin{equation}
    \begin{split}
       {G}_{ij} ~=&~  \langle 0| i\,{\gamma}_i{\gamma}_j |0\rangle = \langle 0| i\,\sum_{k,l} \tilde{\gamma}_k\,O^M_{i,k}\,\tilde{\gamma}_l\,O^M_{j,l}  |0\rangle\\
       ~=&~ \sum_{k,l}  O^M_{i,k}\, \tilde{G}_{kl} \,O^M_{j,l} ~.
    \end{split}
\end{equation}

It's worth mentioning that a shortcut to derive the correlation matrix for the Ising CFT ground state is through the discrete Fourier transform due to the translation invariance, i.e. $\gamma_{k}\mapsto\gamma_{k+1}$ in terms of the Majorana fermion. 
Note that $\gamma_k$ is located on the lattice and its Fourier mode reads
\begin{equation}
    \gamma(p) ~=~ \frac{1}{\sqrt{2L}}\sum_{k=1}^{2L}\gamma_k\,e^{ipk} ~.
\end{equation}
The hermiticity of $\gamma_k$ yields 
\begin{equation}
    \gamma^{\dagger}(p) ~=~ \gamma(-p) ~.
\end{equation}
With the translation invariance, the Neveu-Schwarz boundary condition becomes $\gamma_{2L+k} = -\gamma_k$. Therefore, $e^{ip2L}=-1$ and $p$ also takes discrete values as 
\begin{equation}
    p_n ~=~ \frac{\pi}{L}\left(n+\frac{1}{2}\right) ~,
\end{equation}
where $n\in[-L,L-1]$. The anticommutator relation of $\gamma_j$ implies 
\begin{equation}
    \{ \gamma(-p_n),\gamma(p_m)\} ~=~ 2\,\delta_{mn} ~,
    \label{equ:gamma(p)-commutator}
\end{equation}
where we have used the identity
\begin{equation}
    \sum_{k}\,e^{ipk} ~=~ 2\pi\,\delta_p ~.
\end{equation}
Note that the inverse of the discrete Fourier transform is
\begin{equation}
    \gamma_k ~=~ \frac{1}{\sqrt{2L}}\sum_{n=-L}^{L-1}\gamma(p_n)\,e^{-ip_nk} ~.
\end{equation} 
Plugging in Eq.~(\ref{equ:H_Ising(gamma)}), the Hamiltonian reads
\begin{equation}
    H_{\rm critical~Ising} ~=~ \sum_{n=0}^{L-1}\,\sin p_n\,\Big[ 2\,\gamma^{\dagger}(p_n)\gamma(p_n) - 1  \Big]~.
    \label{equ:H(p)}
\end{equation}

Given Eq.~(\ref{equ:gamma(p)-commutator}) and Eq.~(\ref{equ:H(p)}), when the momentum $p$ is positive, $\gamma(-p)=\gamma^{\dagger}(p)$ can be viewed as the creation operator while $\gamma(p)$ annihilates the vacuum. 
Thus in the momentum space, the two-point correlation function is simply the delta function 
\begin{equation}
    \langle 0| \gamma(p_n)\,\gamma(-p_m)|0\rangle ~=~ 2\,\delta_{nm}\,\Theta(p_n) ~,
    \label{equ:2pt-Ising}
\end{equation}
where $\Theta(p_n)$ is the Heaviside step function. 
The correlation matrix Eq.~(\ref{equ:Cij-def}) can be obtained via the inverse Fourier transform as follows
\begin{equation}
    \begin{split}
        & G^{\rm Ising}_{jl} ~=~ \langle 0| i\,\gamma_j\gamma_l|0\rangle \\
        ~=&~ \frac{i}{2L}\,\sum_{n,m}\langle 0| \gamma(p_n)\,\gamma(p_m)|0\rangle\,e^{-i(p_nj+p_ml)} \\
        ~=&~ \frac{i}{L}\,\sum_{n=0}^{L-1}e^{-i\frac{\pi}{L}(n+\frac{1}{2})(j-l)} \\
        ~=&~ \begin{cases}
        0 \qquad\qquad\qquad\qquad\quad j-l~\text{is even} \\
        1/\left[L\, \sin\frac{\pi}{2L}(j-l) \right] ~~~ j-l~\text{is odd} \\
    \end{cases}~~.
    \end{split}
    \label{equ:Ising_Cij}
\end{equation}

Next, we move on to the chiral thermal state. 
Note that in Eq.~(\ref{equ:H(p)}), $p_n\in (0,\pi)$, $\sin p_n >0 $, $\gamma^{\dagger}(p)\gamma(p)$ plays the role of the number operator and the ground state corresponds to zero occupation for all $p_n>0$. 
Based on the factor $\sin p_n$, momentum modes $p_n$ are naturally divided by $\pi/2$ into left-moving ($p_n>\pi/2$) and right-moving ($p_n<\pi/2$).  
Then the chiral state $|\psi\rangle_{c}$ can be constructed as follows. 
The left-moving modes stay in the ground state $(\beta_L=+\infty)$ while the right-moving modes are in the thermal state $(\beta_R=O(1))$. 
Namely, 
\begin{equation}
    \begin{split}
        &\gamma^{\dagger}(p_n)\gamma(p_n) |\psi\rangle_{c} ~=~ \frac{2e^{-\beta_R\sin p_n}}{1+e^{-\beta_R\sin p_n}}|\psi\rangle_{c} ~,~ n< \frac{L}{2}~,\\
        &\gamma^{\dagger}(p_n)\gamma(p_n) |\psi\rangle_{c} ~=~ 0 ~,~ n\ge \frac{L}{2} ~.\\
    \end{split}
\end{equation}

It follows that the two-point function in the momentum space is
\begin{equation}
    \begin{split}
        _c\langle \psi| \gamma(p_n)\,\gamma(-p_m)|\psi\rangle_c ~=&~ 2\,\delta_{nm}\,\phi_n(\beta_R) ~,\\
        _c\langle \psi| \gamma(-p_m)\,\gamma(p_n)|\psi\rangle_c ~=&~ 2\,\delta_{nm}\,\Big[ 1-\phi_n(\beta_R) \Big]~,
    \end{split}
    \label{equ:2pt-chiral}
\end{equation}
where $p_n, p_m>0$ and $\phi_n(\beta_R)$ takes the following form 
\begin{equation}
    \phi_n(\beta_R) = \begin{cases}
        1 \qquad\qquad\quad~~~\, n=\frac{L}{2},\cdots,L-1\\
        \frac{1}{1+e^{-\beta_R\sin p_n}} \quad n=0,\cdots,\frac{L}{2}-1
    \end{cases}.
    \label{equ:phi_chiral}
\end{equation}
Note that when $T_R\to 0$ or equivalently $\beta_R\to \infty$, $\phi_n$ becomes a constant one and Eq.~(\ref{equ:2pt-chiral}) reduces to Eq.~(\ref{equ:2pt-Ising}). 
The correlation matrix can be derived in a similar manner as Eq.~(\ref{equ:Ising_Cij}) and reads 
\begin{equation}
    \begin{split}
        G^{\rm chiral}_{jl}(\beta_R) ~=&~ -\,G_{jl}^{\rm Ising} 
        ~+~ \frac{2}{L}\,\sum_{n=0}^{L-1}\,\phi_n(\beta_R)\\
        &\quad~\times~\sin\left[ \frac{\pi}{L}\left(n+\frac{1}{2}\right)(j-l) \right]~~.
    \end{split}
\end{equation}

\subsection{Gaussian Measurements}\label{appen:measurement}
In this section, we consider the $Z$ measurement at site $n$, which is a Gaussian operation. The derivation follows from Sec.~IX of Ref.~\cite{bravyi2004lagrangian}.
Based on Eq.~(\ref{equ:Jordan-Wigner}), the projection operators for $Z$ measurement at site $n$ are
\begin{equation}
    P_{n,a} ~=~ \frac{1}{2}\,\Big( \mathbb{I} + i(-1)^a\,\gamma_{2n-1}\gamma_{2n} \Big) ~,~ a=0,1~.
\end{equation}
After the measurement, the state becomes one of the projected states with probabilities
\begin{equation}
    {\rm prob}(a) = \tr(\rho P_{n,a}) = \frac{1}{2}+\frac{(-1)^a}{2}\,(G^{\rho})_{2n-1,2n}~.
    \label{equ:prob(a)}
\end{equation}

In what follows, we derive the Grassmann representation of this projection.  
To do so, following the Choi–Jamiolkowski isomorphism, we first look at how the maximally entangled state $\rho_I=(1/2^{2L})\prod_{i=1}^{2L}(\mathbb{I}+i\gamma_i\gamma_{2L+i})$ transforms under the projections. 
They are
\begin{equation}
    \begin{split}
        & \rho_{n,a} = P_{n,a}\,\rho_I\, P_{n,a}
        = \frac{1}{2^{2L+1}}(\mathbb{I}+(-1)^ai\gamma_{2n-1}\gamma_{2n})\\
        &(\mathbb{I}-(-1)^ai\gamma_{2L+2n-1}\gamma_{2L+2n})\prod_{i\ne 2n-1,2n}(\mathbb{I}+i\gamma_i\gamma_{2L+i}).
    \end{split}
\end{equation}
Then the Grassmann representation of $\rho_{n,a}$ can be obtained by simply replacing $\gamma_i$ ($i\le 2L)$ with $\theta_i$ while $\gamma_i$ ($i> 2L)$ is replaced by $\eta_{i-2L}$. Namely, 
\begin{equation}
    \begin{split}
        &\rho_{n,a}(\theta,\eta)~=~ \frac{1}{2^{2L+1}}\,(\mathbb{I}+(-1)^ai\theta_{2n-1}\theta_{2n})\\
        \times&(\mathbb{I}-(-1)^ai\eta_{2n-1}\eta_{2n})\prod_{i\ne 2n-1,2n}(\mathbb{I}+i\theta_i\eta_{i})\\
        &\qquad\qquad ~=~ \frac{1}{2^{2L+1}}\exp\Big[S_{n,a}(\theta,\eta)\Big] ~,
    \end{split}
\end{equation}
where $S_{n,a}(\theta,\eta)$ takes the following form
\begin{equation}
    \begin{split}
        S_{n,a}(\theta,\eta) ~=&~ \Big[ i(-1)^a(\theta_{2n-1}\theta_{2n} - \eta_{2n-1}\eta_{2n})\\
        &\qquad\qquad\qquad + i\sum_{i\ne 2n-1,2n}\theta_i\eta_{i} \Big]~.
    \end{split}
    \label{equ:S_na}
\end{equation}

Based on Eq.~(\ref{equ:rho_E-rep}), the projection operation has the following Grassmann representation
\begin{equation}
    \begin{split}
        \frac{1}{2}\int D\mu D\eta\, \exp\Big[ S_{n,a}(\theta,\eta) \,+\, i\eta^{\rm T}\mu \Big]\omega(\rho,\mu)~.
    \end{split}
\end{equation}
Given the explicit expression of $S_{n,a}(\theta,\eta)$, after the projection $P_{n,a}$, the correlation matrix becomes $G^{\rho_{n,a}}$ and can be obtained straightforwardly by using Eq.~(\ref{equ:S(theta,eta)}) and Eq.~(\ref{equ:Cij_transf}). The explicit expression reads~\cite{bravyi2004lagrangian}
\begin{equation}
    \begin{split}
        G^{\rho_{n,a}} ~=~ (-1)^aK 
        + B\left(\left(G^{\rho}\right)^{-1} - (-1)^aK\right)^{-1}B ~.
    \end{split}
    \label{equ:proj-G}
\end{equation}
To further simplify the expression, we view a $2L\times 2L$ matrix $M$ as a $L\times L$ block matrix where each block $M_{p,q}$ is a $2\times 2$ matrix. Namely, 
\begin{equation}
    M ~=~ \begin{pmatrix}
       M_{1,1} & M_{1,2} & \cdots & M_{1,L} \\
       M_{2,1} & M_{2,2} & \cdots & M_{2,L} \\
        \vdots & \vdots & \vdots & \vdots\\
       M_{L,1} & M_{L,2} & \cdots & M_{L,L} \\
    \end{pmatrix}~. 
    \label{equ:M-block}
\end{equation}
The only nonzero blocks in $K$ and $B$ are
\begin{equation}
    K_{n,n} ~=~ \begin{pmatrix}
        0 & 1\\
        -1 & 0 
    \end{pmatrix} ~,~
\end{equation} and 
\begin{equation}
    B_{p,p} ~=~ \begin{pmatrix}
        1 & 0\\
        0 & 1 
    \end{pmatrix}  ~,~ p\ne n~.
\end{equation}

For $G^{\rho_{n,a}}$, it can also be written in the form of Eq.~(\ref{equ:M-block}). 
After an explicit computation of the matrix inverse in Eq.~(\ref{equ:proj-G}), $2\times 2$ matrices $G^{\rho_{n,a}}_{p,q}$ take the following form 
\begin{equation}
\begin{split}
    G^{\rho_{n,a}}_{p,q} ~=~ G^{\rho}_{p,q} + \frac{(-1)^{a}}{1+x}\,G^{\rho}_{p,n}\begin{pmatrix}
        0 & 1\\
        -1 & 0
    \end{pmatrix} G^{\rho}_{n,q}~,
\end{split}
\end{equation}
where both $p,q\ne n$ and $x = (-1)^a(G^{\rho})_{2n-1,2n}$. If one of $p$ or $q$ equals $n$, 
\begin{equation}
    G^{\rho_{n,a}}_{p,q} ~=~ (-1)^a \delta_{pn} \delta_{qn}\begin{pmatrix}
        0 & 1\\
        -1 & 0
    \end{pmatrix} ~.
\end{equation}

\section{Computing the Petz-map Fidelity with Exact Diagonalization}\label{appen:petz_via_diag}

This appendix explains our numerical method to compute the Petz map fidelity for MIPTs with exact diagonalization.  
We start with a pure state $|\psi_{ABCD}\rangle$ prepared by the circuit in Fig.~\ref{fig:mipt}, and the spin chain is divided into four contiguous regions $A$, $B$, $C$, and $D$ as follows
\begin{equation}
    \begin{tikzpicture}[baseline={([yshift=-0.9ex]current bounding box.center)},scale=0.7] 
    \draw [thick] (-5,0) -- (5,0);
    \draw [thick] (-5,0.2) -- (-5,-0.2);
    \draw (-4,0)node[below]{$A$};
    \draw [thick] (-3,0.2) -- (-3,-0.2);
    \draw (-1,0)node[below]{$B$};
    \draw [thick] (1,0.2) -- (1,-0.2);
    \draw (2,0)node[below]{$C$};
    \draw [thick] (3,0.2) -- (3,-0.2);
    \draw (4,0)node[below]{$D$};
    \draw [thick] (5,0.2) -- (5,-0.2);
\end{tikzpicture} ~~.
\label{fig:ABCD}
\end{equation}
The density matrix $\rho_{ABC}$ is 
\begin{equation}
    \rho_{ABC} ~=~ {\rm Tr}_D |\psi_{ABCD}\rangle\langle \psi_{ABCD}|  ~~.
\end{equation}

\paragraph{Uhlmann fidelity} 
In order to compute the Petz map fidelity in an efficient manner, in our numerical algorithm, we do not compute the recovered density matrix $\tilde{\rho}_{ABC}$ via Eq.~(\ref{eq:Petz_map_1}) explicitly and then plug in the definition Eq.~(\ref{equ:fidelity_def}) 
Instead, we derive $F_t$ using the Uhlmann theorem. This algorithm follows the appendix A of~\cite{Vardhan:2023pnm}. In a nutshell, we compute the Petz map fidelity by summing over all singular values of the matrix in Fig.~\ref{fig:Flambda_final}.

The idea is the following. First, the Uhlmann theorem shows that the fidelity of two density matrices is the maximal possible overlap of their purifications. Consider two arbitrary purifications $|\Phi_{\rho}\rangle$ and $|\Phi_{\sigma}\rangle$ for $\rho$ and $\sigma$ respectively. 
A generic purification takes the following form (here schematically we say $\rho$ is defined on region $A$)
\begin{equation}
    \begin{split}
        |{\Phi}_{\rho}(V)\rangle ~=&~ (\mathbb{I}_A \otimes V)\,|\Phi_{\rho}\rangle ~,\\
        |{\Phi}_{\sigma}(W)\rangle ~=&~ (\mathbb{I}_A \otimes W)\,|\Phi_{\rho}\rangle ~,\\
    \end{split}
\end{equation}
where $V$ and $W$ are generally isometries acting on the ancilla systems. 
Then the Uhlmann fidelity can be written as
\begin{equation}
    \begin{split}
        &F(\rho, \sigma) ~=~ \max_{V,W}\,\langle \Phi_{\rho}| (\mathbb{I} \otimes V^{\dagger}W) | \Phi_{\sigma}\rangle \\
        ~=&~ \max_{V,W}\,\begin{tikzpicture}[baseline={([yshift=-0.9ex]current bounding box.center)},scale=0.9]
    \node[draw, shape=rectangle] (v2) at (0.5,1) {$V^{\dagger}W$};
    \node[ellipse,minimum width=6em,draw] (v1) at (0,2.2) {$\Phi^{\dagger}_{\rho}$};
    \node[ellipse,minimum width=6em,draw] (v3) at (0,-0.3) {$\Phi_{\sigma}$};
    \draw [thick] (-0.5,0.1)node[above left]{\footnotesize $A$} -- (-0.5,1.7); 
    \draw [thick] (0.5,1.3) -- (0.5,1.7);
    \draw [thick] (0.5,0.1) -- (0.5,0.7); 
\end{tikzpicture} 
~=~ \sum_{\rm s.v.}\,\begin{tikzpicture}[baseline={([yshift=-0.9ex]current bounding box.center)},scale=0.9]
    \node[ellipse,minimum width=6em,draw] (v1) at (0,1.5) {$\Phi^{\dagger}_{\sigma}$};
    \node[ellipse,minimum width=6em,draw] (v3) at (0,-0.3) {$\Phi_{\rho}$};
    \draw [thick] (0,2) -- (0,2.55)node[right]{\footnotesize $B'$}; 
    \draw [thick] (0,-0.75) -- (0,-1.3)node[above left]{\footnotesize $B$};
    \draw [thick] (0,0.15)node[above left]{\footnotesize $A$} -- (0,1); 
\end{tikzpicture} ~,
    \end{split}
    \label{equ:Fexp}
\end{equation}
where ${\rm s.v.}$ stands for singular values. Note that in the second equality, the indices on $A$ are contracted, and the resulting matrix has indices on $B'$ and $B$.

\paragraph{Purifications}
Given Eq.~(\ref{equ:Fexp}), now the task is to find efficient purifications for $\rho_{ABC}$ and $\tilde{\rho}_{ABC}$. Note that to purify $\rho$, the size of the ancilla has to be at least the rank of $\rho$. 
Therefore, our strategy is as follows. 
For $\rho_{ABC}$, if $L_A+L_B+L_C\ge L/2$, we use the original pure state $|\Phi_{\rho_{ABC}}\rangle=|\psi_{ABCD}\rangle$, which is obviously a purification of $\rho_{ABC}$. When $L_A+L_B+L_C< L/2$, we choose the canonical purification $|\Phi_{\rho_{ABC}}\rangle=|\sqrt{\rho_{ABC}}\rangle$ defined as follows. For a generic density matrix $\rho$, its square root can be computed via diagonalization
\begin{equation}
    \rho = \sum_a\,p_a\,|\psi_a\rangle\langle \psi_a| ~\rightarrow~ \sqrt{\rho} = \sum_a\,\sqrt{p_a}\,|\psi_a\rangle\langle \psi_a|~.
\end{equation}
One can check that the following state is a purification of $\rho$ which is known as the canonical purification
\begin{equation}
    |\sqrt{\rho}\rangle ~=~ \sum_a\,\sqrt{p_a}\,|\psi_a\rangle|\psi_a\rangle ~~.
\end{equation}
This procedure also applies to $\rho_{AB}$. Namely,
\begin{equation}
   |\Phi_{\rho_{AB}}\rangle ~=~ 
   \begin{cases}
      |\sqrt{\rho_{AB}}\rangle \qquad~\, L_A+L_B<L/2 ~~,\\
     |\psi_{ABCD}\rangle \qquad L_A+L_B\ge L/2~~.
   \end{cases}
\end{equation}
For $\tilde{\rho}_{ABC}$, one can check the following state is its purification
\begin{equation}
\begin{split}
    |\Phi_{\tilde{\rho}_{ABC}}\rangle ~=&~ \rho_{BC}^{\frac{1}{2}+i\frac{t}{2}}\,\rho_B^{-\frac{1}{2}-i\frac{t}{2}}\,|\Phi_{\rho_{AB}}\rangle \\
    ~=&~ 
    \begin{tikzpicture}[baseline={([yshift=-0.9ex]current bounding box.center)},scale=0.9]
    \draw [thick] (-0.5,0.1) -- (-0.5,3)node[above]{$A$};
    \draw [thick] (-0.5,-0.7) -- (-0.5,-1.2)node[below]{$C'$};
    \draw [thick] (0.5,-0.7) -- (0.5,-1.2)node[below]{$D$};
    \draw [thick] (0.5,2.6) -- (0.5,3)node[above]{$B$};
    \draw [thick] (1.5,2.6) -- (1.5,3)node[above]{$C$};
    \draw [thick] (1.5,1.8) -- (1.5,-1.2)node[below]{$C^*$};
    \node[draw, shape=rectangle] (v0) at (1.1,2.2) {~~~~$\rho_{BC}^{\frac{1}{2}+i\frac{t}{2}}$~~~~};
    \node[draw, shape=rectangle] (v1) at (0.5,1) {$\rho_B^{-\frac{1}{2}-i\frac{t}{2}}$};
    \node[ellipse,minimum width=6em,draw] (v3) at (0,-0.3) {$\Phi_{\rho_{AB}}$};
    \draw [thick] (v1) -- (v1 |- v3.north); 
    \draw [thick] (v1) -- (v1 |- v0.south);
\end{tikzpicture} ~~.
\end{split}
\end{equation}

\paragraph{Optimizations}
Based on Eq.~(\ref{equ:Fexp}), we have 
\begin{equation}
    F_t ~=~ \sum_{\text{s.v.}}\,\Phi^{\dagger}_{\rho_{ABC}}\,\Phi_{\tilde{\rho}_{ABC}} ~.
\end{equation}
Diagrammatically $\Phi^{\dagger}_{\rho_{ABC}}\,\Phi_{\tilde{\rho}_{ABC}}$ is shown in Fig.~\ref{fig:Flambda_final}, 
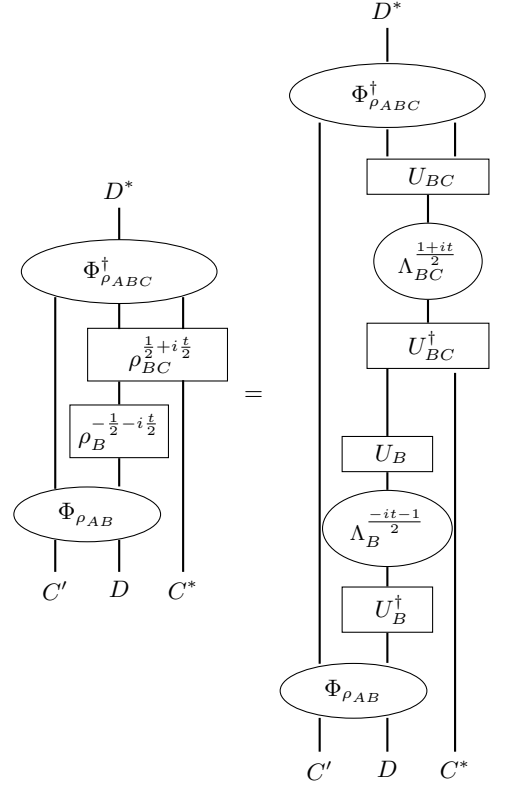
\begin{figure}[t]
    \centering
    \begin{tikzpicture}[baseline={([yshift=-0.9ex]current bounding box.center)},scale=0.85]
    \draw [thick] (-0.5,0.1) -- (-0.5,3.1);
    \draw [thick] (-0.5,-0.7) -- (-0.5,-1.2)node[below]{$C'$};
    \draw [thick] (0.5,-0.7) -- (0.5,-1.2)node[below]{$D$};
    \draw [thick] (0.5,2.6) -- (0.5,3);
    \draw [thick] (1.5,2.6) -- (1.5,3.1);
    \draw [thick] (1.5,1.8) -- (1.5,-1.2)node[below]{$C^*$};
    \draw [thick] (0.5,4) -- (0.5,4.5)node[above]{$D^*$};
    \node[ellipse,minimum width=8em,draw] (v2) at (0.5,3.5) {$\Phi^{\dagger}_{\rho_{ABC}}$};
    \node[draw, shape=rectangle] (v0) at (1.1,2.2) {~~~~$\rho_{BC}^{\frac{1}{2}+i\frac{t}{2}}$~~~~};
    \node[draw, shape=rectangle] (v1) at (0.5,1) {$\rho_B^{-\frac{1}{2}-i\frac{t}{2}}$};
    \node[ellipse,minimum width=6em,draw] (v3) at (0,-0.3) {$\Phi_{\rho_{AB}}$};
    \draw [thick] (v1) -- (v1 |- v3.north); 
    \draw [thick] (v1) -- (v1 |- v0.south);
\end{tikzpicture} 
~=~ \begin{tikzpicture}[baseline={([yshift=-0.9ex]current bounding box.center)},scale=0.9]
    \draw [thick] (-0.5,0.1-3) -- (-0.5,3.1+2);
    \draw [thick] (-0.5,-0.7-3) -- (-0.5,-1.2-3)node[below]{$C'$};
    \draw [thick] (0.5,-0.7-3) -- (0.5,-1.2-3)node[below]{$D$};
    \draw [thick] (0.5,2.6+2) -- (0.5,3+2);
    \draw [thick] (1.5,2.6+2) -- (1.5,3.1+2);
    \draw [thick] (1.5,1.8-0.4) -- (1.5,-1.2-3)node[below]{$C^*$};
    \draw [thick] (0.5,4+2) -- (0.5,4.5+2)node[above]{$D^*$};
    \node[ellipse,minimum width=8em,draw] (v2) at (0.5,3.5+2) {$\Phi^{\dagger}_{\rho_{ABC}}$};
    \node[draw, shape=rectangle] (v01) at (1.1,2.3+2) {~~~~$U_{BC}$~~~~};
    \node[draw, shape=ellipse] (v02) at (1.1,2.2+0.85) {\footnotesize{$\Lambda_{BC}^{\frac{1+it}{2}}$}};
    \node[draw, shape=rectangle] (v03) at (1.1,2.2-0.4) {~~~~$U_{BC}^\dagger$~~~~};
    \node[draw, shape=rectangle] (v11) at (0.5,1-0.8) {~~~$U_B$~~~};
    \node[draw, shape=ellipse] (v12) at (0.5,1-0.4-1.5) {\footnotesize{$\Lambda_B^{\frac{-it-1}{2}}$}};
    \node[draw, shape=rectangle] (v13) at (0.5,1-0.4-2.7) {~~~$U^{\dagger}_B$~~~};
    \node[ellipse,minimum width=6em,draw] (v3) at (0,-0.3-3) {$\Phi_{\rho_{AB}}$};
    \draw [thick] (v02) -- (v02 |- v03.north);
    \draw [thick] (v02) -- (v02 |- v01.south);
    \draw [thick] (v13) -- (v13 |- v3.north);
    \draw [thick] (v13) -- (v13 |- v12.south);
    \draw [thick] (v11) -- (v11 |- v12.north);
    \draw [thick] (v11) -- (v11 |- v03.south);
\end{tikzpicture}
\caption{Matrix $\Phi^{\dagger}_{\rho_{ABC}}\,\Phi_{\tilde{\rho}_{ABC}}$ in the computation of the Uhlmann fidelity for the Petz recovery map. $\Phi_{\rho_{AB}}$ and $\Phi_{\rho_{ABC}}$ are two purifications for $\rho_{AB}$ and $\rho_{ABC}$ respectively. }
\label{fig:Flambda_final}
\end{figure}
where we have decomposed the density matrices $\rho_{BC}^{\frac{1}{2}+i\frac{t}{2}}$ and $\rho_B^{-\frac{1}{2}-i\frac{t}{2}}$ as follows 
\begin{equation}
    \begin{split}
        \rho_{BC}^{\frac{1}{2}+i\frac{t}{2}} ~=&~ U_{BC}\,\Lambda_{BC}^{\frac{1}{2}+i\frac{t}{2}}\,U^{\dagger}_{BC} ~~,\\
        \rho_B^{-\frac{1}{2}-i\frac{t}{2}} ~=&~ U_{B}\,\Lambda_{B}^{-\frac{1}{2}-i\frac{t}{2}}\,U^{\dagger}_{B} ~~.\\
    \end{split}
    \label{equ:rho_decomp}
\end{equation}
$\Lambda_{BC}$ and $\Lambda_{B}$ are diagonal matrices and their entries are eigenvalues of $\rho_{BC}$ and $\rho_{B}$ respectively. We note that the following two tricks help reduce the computational cost of the matrix in Fig.~\ref{fig:Flambda_final}. 
\begin{enumerate}
    \item In Eq.~\eqref{equ:rho_decomp}, the $U$'s and $\Lambda$'s are computed via SVD of $\psi_{ABCD}$ as a matrix if the subsystem size is larger than $L/2$; otherwise they are obtained by an eigenvalue decomposition of the densely computed density matrix. 
    \item The order of tensor contraction is important. When computing the matrix in Fig.~\ref{fig:Flambda_final}, we follow a strategy that minimizes the number of free indices at each step of the contraction. 
    Specifically, we adopt the following sequence for efficiency 
    \begin{equation}
        \begin{split}
            &\Big[\left(\Phi^{\dagger}_{\rho_{ABC}}\left(U_{BC}\Lambda_{BC}^{\frac{1}{2}+i\frac{t}{2}}\right)\right)U^{\dagger}_{BC}\Big]\\
        &\qquad ~\times~\Big[\left(U_{B}\Lambda_{B}^{-\frac{1}{2}-i\frac{t}{2}}\right)\left(U^{\dagger}_{B}\Phi_{\rho_{AB}}\right)\Big] ~,
        \end{split}
    \end{equation}
    for which the computational cost is $O(2^{3L_{ABC}})$. The maximum size of tensors that occurs in this computation is $2^L$. 
\end{enumerate}
In this work, we simulate MIPT states with $L=20$ qubits. For a given MIPT state, calculating the Petz fidelity in the most computationally intensive subregion configuration requires on the order of one second on a 10-core CPU with 2.4 GHz frequency.

\bibliography{reference}

\begin{thebibliography}{52}%
\makeatletter
\providecommand \@ifxundefined [1]{%
 \@ifx{#1\undefined}
}%
\providecommand \@ifnum [1]{%
 \ifnum #1\expandafter \@firstoftwo
 \else \expandafter \@secondoftwo
 \fi
}%
\providecommand \@ifx [1]{%
 \ifx #1\expandafter \@firstoftwo
 \else \expandafter \@secondoftwo
 \fi
}%
\providecommand \natexlab [1]{#1}%
\providecommand \enquote  [1]{``#1''}%
\providecommand \bibnamefont  [1]{#1}%
\providecommand \bibfnamefont [1]{#1}%
\providecommand \citenamefont [1]{#1}%
\providecommand \href@noop [0]{\@secondoftwo}%
\providecommand \href [0]{\begingroup \@sanitize@url \@href}%
\providecommand \@href[1]{\@@startlink{#1}\@@href}%
\providecommand \@@href[1]{\endgroup#1\@@endlink}%
\providecommand \@sanitize@url [0]{\catcode `\\12\catcode `\$12\catcode `\&12\catcode `\#12\catcode `\^12\catcode `\_12\catcode `\%12\relax}%
\providecommand \@@startlink[1]{}%
\providecommand \@@endlink[0]{}%
\providecommand \url  [0]{\begingroup\@sanitize@url \@url }%
\providecommand \@url [1]{\endgroup\@href {#1}{\urlprefix }}%
\providecommand \urlprefix  [0]{URL }%
\providecommand \Eprint [0]{\href }%
\providecommand \doibase [0]{https://doi.org/}%
\providecommand \selectlanguage [0]{\@gobble}%
\providecommand \bibinfo  [0]{\@secondoftwo}%
\providecommand \bibfield  [0]{\@secondoftwo}%
\providecommand \translation [1]{[#1]}%
\providecommand \BibitemOpen [0]{}%
\providecommand \bibitemStop [0]{}%
\providecommand \bibitemNoStop [0]{.\EOS\space}%
\providecommand \EOS [0]{\spacefactor3000\relax}%
\providecommand \BibitemShut  [1]{\csname bibitem#1\endcsname}%
\let\auto@bib@innerbib\@empty
\bibitem [{\citenamefont {Chen}\ \emph {et~al.}(2010)\citenamefont {Chen}, \citenamefont {Gu},\ and\ \citenamefont {Wen}}]{Chen_2010}%
  \BibitemOpen
  \bibfield  {author} {\bibinfo {author} {\bibfnamefont {X.}~\bibnamefont {Chen}}, \bibinfo {author} {\bibfnamefont {Z.-C.}\ \bibnamefont {Gu}},\ and\ \bibinfo {author} {\bibfnamefont {X.-G.}\ \bibnamefont {Wen}},\ }\bibfield  {title} {\bibinfo {title} {Local unitary transformation, long-range quantum entanglement, wave function renormalization, and topological order},\ }\href {https://doi.org/10.1103/PhysRevB.82.155138} {\bibfield  {journal} {\bibinfo  {journal} {Phys. Rev. B}\ }\textbf {\bibinfo {volume} {82}},\ \bibinfo {pages} {155138} (\bibinfo {year} {2010})}\BibitemShut {NoStop}%
\bibitem [{\citenamefont {Wen}(2017)}]{Wen_Zoo_2017}%
  \BibitemOpen
  \bibfield  {author} {\bibinfo {author} {\bibfnamefont {X.-G.}\ \bibnamefont {Wen}},\ }\bibfield  {title} {\bibinfo {title} {Colloquium: Zoo of quantum-topological phases of matter},\ }\href {https://doi.org/10.1103/RevModPhys.89.041004} {\bibfield  {journal} {\bibinfo  {journal} {Rev. Mod. Phys.}\ }\textbf {\bibinfo {volume} {89}},\ \bibinfo {pages} {041004} (\bibinfo {year} {2017})}\BibitemShut {NoStop}%
\bibitem [{\citenamefont {Kitaev}\ and\ \citenamefont {Preskill}(2006)}]{Kitaev_2006}%
  \BibitemOpen
  \bibfield  {author} {\bibinfo {author} {\bibfnamefont {A.}~\bibnamefont {Kitaev}}\ and\ \bibinfo {author} {\bibfnamefont {J.}~\bibnamefont {Preskill}},\ }\bibfield  {title} {\bibinfo {title} {Topological entanglement entropy},\ }\bibfield  {journal} {\bibinfo  {journal} {Physical Review Letters}\ }\textbf {\bibinfo {volume} {96}},\ \href {https://doi.org/10.1103/physrevlett.96.110404} {10.1103/physrevlett.96.110404} (\bibinfo {year} {2006})\BibitemShut {NoStop}%
\bibitem [{\citenamefont {Levin}\ and\ \citenamefont {Wen}(2006)}]{Levin_2006}%
  \BibitemOpen
  \bibfield  {author} {\bibinfo {author} {\bibfnamefont {M.}~\bibnamefont {Levin}}\ and\ \bibinfo {author} {\bibfnamefont {X.-G.}\ \bibnamefont {Wen}},\ }\bibfield  {title} {\bibinfo {title} {Detecting topological order in a ground state wave function},\ }\bibfield  {journal} {\bibinfo  {journal} {Physical Review Letters}\ }\textbf {\bibinfo {volume} {96}},\ \href {https://doi.org/10.1103/physrevlett.96.110405} {10.1103/physrevlett.96.110405} (\bibinfo {year} {2006})\BibitemShut {NoStop}%
\bibitem [{\citenamefont {Vidal}\ \emph {et~al.}(2003)\citenamefont {Vidal}, \citenamefont {Latorre}, \citenamefont {Rico},\ and\ \citenamefont {Kitaev}}]{Vidal_2003}%
  \BibitemOpen
  \bibfield  {author} {\bibinfo {author} {\bibfnamefont {G.}~\bibnamefont {Vidal}}, \bibinfo {author} {\bibfnamefont {J.~I.}\ \bibnamefont {Latorre}}, \bibinfo {author} {\bibfnamefont {E.}~\bibnamefont {Rico}},\ and\ \bibinfo {author} {\bibfnamefont {A.}~\bibnamefont {Kitaev}},\ }\bibfield  {title} {\bibinfo {title} {Entanglement in quantum critical phenomena},\ }\bibfield  {journal} {\bibinfo  {journal} {Physical Review Letters}\ }\textbf {\bibinfo {volume} {90}},\ \href {https://doi.org/10.1103/physrevlett.90.227902} {10.1103/physrevlett.90.227902} (\bibinfo {year} {2003})\BibitemShut {NoStop}%
\bibitem [{\citenamefont {Calabrese}\ and\ \citenamefont {Cardy}(2004)}]{Calabrese_2004}%
  \BibitemOpen
  \bibfield  {author} {\bibinfo {author} {\bibfnamefont {P.}~\bibnamefont {Calabrese}}\ and\ \bibinfo {author} {\bibfnamefont {J.}~\bibnamefont {Cardy}},\ }\bibfield  {title} {\bibinfo {title} {Entanglement entropy and quantum field theory},\ }\href {https://doi.org/10.1088/1742-5468/2004/06/P06002} {\bibfield  {journal} {\bibinfo  {journal} {Journal of Statistical Mechanics: Theory and Experiment}\ }\textbf {\bibinfo {volume} {2004}},\ \bibinfo {pages} {P06002} (\bibinfo {year} {2004})}\BibitemShut {NoStop}%
\bibitem [{\citenamefont {Jian}\ \emph {et~al.}(2020)\citenamefont {Jian}, \citenamefont {You}, \citenamefont {Vasseur},\ and\ \citenamefont {Ludwig}}]{Jian:2019mny}%
  \BibitemOpen
  \bibfield  {author} {\bibinfo {author} {\bibfnamefont {C.-M.}\ \bibnamefont {Jian}}, \bibinfo {author} {\bibfnamefont {Y.-Z.}\ \bibnamefont {You}}, \bibinfo {author} {\bibfnamefont {R.}~\bibnamefont {Vasseur}},\ and\ \bibinfo {author} {\bibfnamefont {A.~W.~W.}\ \bibnamefont {Ludwig}},\ }\bibfield  {title} {\bibinfo {title} {{Measurement-induced criticality in random quantum circuits}},\ }\href {https://doi.org/10.1103/PhysRevB.101.104302} {\bibfield  {journal} {\bibinfo  {journal} {Phys. Rev. B}\ }\textbf {\bibinfo {volume} {101}},\ \bibinfo {pages} {104302} (\bibinfo {year} {2020})},\ \Eprint {https://arxiv.org/abs/1908.08051} {arXiv:1908.08051 [cond-mat.stat-mech]} \BibitemShut {NoStop}%
\bibitem [{\citenamefont {Choi}\ \emph {et~al.}(2020)\citenamefont {Choi}, \citenamefont {Bao}, \citenamefont {Qi},\ and\ \citenamefont {Altman}}]{Choi:2019nhg}%
  \BibitemOpen
  \bibfield  {author} {\bibinfo {author} {\bibfnamefont {S.}~\bibnamefont {Choi}}, \bibinfo {author} {\bibfnamefont {Y.}~\bibnamefont {Bao}}, \bibinfo {author} {\bibfnamefont {X.-L.}\ \bibnamefont {Qi}},\ and\ \bibinfo {author} {\bibfnamefont {E.}~\bibnamefont {Altman}},\ }\bibfield  {title} {\bibinfo {title} {{Quantum Error Correction in Scrambling Dynamics and Measurement-Induced Phase Transition}},\ }\href {https://doi.org/10.1103/PhysRevLett.125.030505} {\bibfield  {journal} {\bibinfo  {journal} {Phys. Rev. Lett.}\ }\textbf {\bibinfo {volume} {125}},\ \bibinfo {pages} {030505} (\bibinfo {year} {2020})},\ \Eprint {https://arxiv.org/abs/1903.05124} {arXiv:1903.05124 [quant-ph]} \BibitemShut {NoStop}%
\bibitem [{\citenamefont {Li}\ \emph {et~al.}(2018)\citenamefont {Li}, \citenamefont {Chen},\ and\ \citenamefont {Fisher}}]{Li_2018}%
  \BibitemOpen
  \bibfield  {author} {\bibinfo {author} {\bibfnamefont {Y.}~\bibnamefont {Li}}, \bibinfo {author} {\bibfnamefont {X.}~\bibnamefont {Chen}},\ and\ \bibinfo {author} {\bibfnamefont {M.~P.~A.}\ \bibnamefont {Fisher}},\ }\bibfield  {title} {\bibinfo {title} {Quantum zeno effect and the many-body entanglement transition},\ }\href {https://doi.org/10.1103/PhysRevB.98.205136} {\bibfield  {journal} {\bibinfo  {journal} {Phys. Rev. B}\ }\textbf {\bibinfo {volume} {98}},\ \bibinfo {pages} {205136} (\bibinfo {year} {2018})}\BibitemShut {NoStop}%
\bibitem [{\citenamefont {Li}\ \emph {et~al.}(2019)\citenamefont {Li}, \citenamefont {Chen},\ and\ \citenamefont {Fisher}}]{Li_2019}%
  \BibitemOpen
  \bibfield  {author} {\bibinfo {author} {\bibfnamefont {Y.}~\bibnamefont {Li}}, \bibinfo {author} {\bibfnamefont {X.}~\bibnamefont {Chen}},\ and\ \bibinfo {author} {\bibfnamefont {M.~P.~A.}\ \bibnamefont {Fisher}},\ }\bibfield  {title} {\bibinfo {title} {Measurement-driven entanglement transition in hybrid quantum circuits},\ }\href {https://doi.org/10.1103/PhysRevB.100.134306} {\bibfield  {journal} {\bibinfo  {journal} {Phys. Rev. B}\ }\textbf {\bibinfo {volume} {100}},\ \bibinfo {pages} {134306} (\bibinfo {year} {2019})}\BibitemShut {NoStop}%
\bibitem [{\citenamefont {Skinner}\ \emph {et~al.}(2019)\citenamefont {Skinner}, \citenamefont {Ruhman},\ and\ \citenamefont {Nahum}}]{Skinner:2018tjl}%
  \BibitemOpen
  \bibfield  {author} {\bibinfo {author} {\bibfnamefont {B.}~\bibnamefont {Skinner}}, \bibinfo {author} {\bibfnamefont {J.}~\bibnamefont {Ruhman}},\ and\ \bibinfo {author} {\bibfnamefont {A.}~\bibnamefont {Nahum}},\ }\bibfield  {title} {\bibinfo {title} {{Measurement-Induced Phase Transitions in the Dynamics of Entanglement}},\ }\href {https://doi.org/10.1103/PhysRevX.9.031009} {\bibfield  {journal} {\bibinfo  {journal} {Phys. Rev. X}\ }\textbf {\bibinfo {volume} {9}},\ \bibinfo {pages} {031009} (\bibinfo {year} {2019})},\ \Eprint {https://arxiv.org/abs/1808.05953} {arXiv:1808.05953 [cond-mat.stat-mech]} \BibitemShut {NoStop}%
\bibitem [{\citenamefont {Cao}\ \emph {et~al.}(2019)\citenamefont {Cao}, \citenamefont {Tilloy},\ and\ \citenamefont {Luca}}]{Cao_2019}%
  \BibitemOpen
  \bibfield  {author} {\bibinfo {author} {\bibfnamefont {X.}~\bibnamefont {Cao}}, \bibinfo {author} {\bibfnamefont {A.}~\bibnamefont {Tilloy}},\ and\ \bibinfo {author} {\bibfnamefont {A.~D.}\ \bibnamefont {Luca}},\ }\bibfield  {title} {\bibinfo {title} {{Entanglement in a fermion chain under continuous monitoring}},\ }\href {https://doi.org/10.21468/SciPostPhys.7.2.024} {\bibfield  {journal} {\bibinfo  {journal} {SciPost Phys.}\ }\textbf {\bibinfo {volume} {7}},\ \bibinfo {pages} {024} (\bibinfo {year} {2019})}\BibitemShut {NoStop}%
\bibitem [{\citenamefont {Walter}\ \emph {et~al.}(2016)\citenamefont {Walter}, \citenamefont {Gross},\ and\ \citenamefont {Eisert}}]{Walter_2016}%
  \BibitemOpen
  \bibfield  {author} {\bibinfo {author} {\bibfnamefont {M.}~\bibnamefont {Walter}}, \bibinfo {author} {\bibfnamefont {D.}~\bibnamefont {Gross}},\ and\ \bibinfo {author} {\bibfnamefont {J.}~\bibnamefont {Eisert}},\ }\bibinfo {title} {Multipartite entanglement},\ in\ \href {https://doi.org/https://doi.org/10.1002/9783527805785.ch14} {\emph {\bibinfo {booktitle} {Quantum Information}}}\ (\bibinfo  {publisher} {John Wiley and Sons, Ltd},\ \bibinfo {year} {2016})\ Chap.~\bibinfo {chapter} {14}, pp.\ \bibinfo {pages} {293--330}\BibitemShut {NoStop}%
\bibitem [{\citenamefont {Vidal}\ and\ \citenamefont {Werner}(2002)}]{Vidal_2002}%
  \BibitemOpen
  \bibfield  {author} {\bibinfo {author} {\bibfnamefont {G.}~\bibnamefont {Vidal}}\ and\ \bibinfo {author} {\bibfnamefont {R.~F.}\ \bibnamefont {Werner}},\ }\bibfield  {title} {\bibinfo {title} {Computable measure of entanglement},\ }\bibfield  {journal} {\bibinfo  {journal} {Physical Review A}\ }\textbf {\bibinfo {volume} {65}},\ \href {https://doi.org/10.1103/physreva.65.032314} {10.1103/physreva.65.032314} (\bibinfo {year} {2002})\BibitemShut {NoStop}%
\bibitem [{\citenamefont {Hayden}\ \emph {et~al.}(2021)\citenamefont {Hayden}, \citenamefont {Parrikar},\ and\ \citenamefont {Sorce}}]{Hayden_2021}%
  \BibitemOpen
  \bibfield  {author} {\bibinfo {author} {\bibfnamefont {P.}~\bibnamefont {Hayden}}, \bibinfo {author} {\bibfnamefont {O.}~\bibnamefont {Parrikar}},\ and\ \bibinfo {author} {\bibfnamefont {J.}~\bibnamefont {Sorce}},\ }\bibfield  {title} {\bibinfo {title} {The markov gap for geometric reflected entropy},\ }\bibfield  {journal} {\bibinfo  {journal} {Journal of High Energy Physics}\ }\textbf {\bibinfo {volume} {2021}},\ \href {https://doi.org/10.1007/jhep10(2021)047} {10.1007/jhep10(2021)047} (\bibinfo {year} {2021})\BibitemShut {NoStop}%
\bibitem [{\citenamefont {Marcovitch}\ \emph {et~al.}(2009)\citenamefont {Marcovitch}, \citenamefont {Retzker}, \citenamefont {Plenio},\ and\ \citenamefont {Reznik}}]{Marcovitch_2009}%
  \BibitemOpen
  \bibfield  {author} {\bibinfo {author} {\bibfnamefont {S.}~\bibnamefont {Marcovitch}}, \bibinfo {author} {\bibfnamefont {A.}~\bibnamefont {Retzker}}, \bibinfo {author} {\bibfnamefont {M.~B.}\ \bibnamefont {Plenio}},\ and\ \bibinfo {author} {\bibfnamefont {B.}~\bibnamefont {Reznik}},\ }\bibfield  {title} {\bibinfo {title} {Critical and noncritical long-range entanglement in klein-gordon fields},\ }\href {https://doi.org/10.1103/PhysRevA.80.012325} {\bibfield  {journal} {\bibinfo  {journal} {Phys. Rev. A}\ }\textbf {\bibinfo {volume} {80}},\ \bibinfo {pages} {012325} (\bibinfo {year} {2009})}\BibitemShut {NoStop}%
\bibitem [{\citenamefont {Bayat}\ \emph {et~al.}(2010)\citenamefont {Bayat}, \citenamefont {Sodano},\ and\ \citenamefont {Bose}}]{Bayat_2010}%
  \BibitemOpen
  \bibfield  {author} {\bibinfo {author} {\bibfnamefont {A.}~\bibnamefont {Bayat}}, \bibinfo {author} {\bibfnamefont {P.}~\bibnamefont {Sodano}},\ and\ \bibinfo {author} {\bibfnamefont {S.}~\bibnamefont {Bose}},\ }\bibfield  {title} {\bibinfo {title} {Negativity as the entanglement measure to probe the kondo regime in the spin-chain kondo model},\ }\href {https://doi.org/10.1103/PhysRevB.81.064429} {\bibfield  {journal} {\bibinfo  {journal} {Phys. Rev. B}\ }\textbf {\bibinfo {volume} {81}},\ \bibinfo {pages} {064429} (\bibinfo {year} {2010})}\BibitemShut {NoStop}%
\bibitem [{\citenamefont {Kusuki}\ \emph {et~al.}(2019)\citenamefont {Kusuki}, \citenamefont {Kudler-Flam},\ and\ \citenamefont {Ryu}}]{Kusuki_2019}%
  \BibitemOpen
  \bibfield  {author} {\bibinfo {author} {\bibfnamefont {Y.}~\bibnamefont {Kusuki}}, \bibinfo {author} {\bibfnamefont {J.}~\bibnamefont {Kudler-Flam}},\ and\ \bibinfo {author} {\bibfnamefont {S.}~\bibnamefont {Ryu}},\ }\bibfield  {title} {\bibinfo {title} {Derivation of holographic negativity in ${\mathrm{ads}}_{3}/{\mathrm{cft}}_{2}$},\ }\href {https://doi.org/10.1103/PhysRevLett.123.131603} {\bibfield  {journal} {\bibinfo  {journal} {Phys. Rev. Lett.}\ }\textbf {\bibinfo {volume} {123}},\ \bibinfo {pages} {131603} (\bibinfo {year} {2019})}\BibitemShut {NoStop}%
\bibitem [{\citenamefont {Zou}\ \emph {et~al.}(2021)\citenamefont {Zou}, \citenamefont {Siva}, \citenamefont {Soejima}, \citenamefont {Mong},\ and\ \citenamefont {Zaletel}}]{Zou_2021}%
  \BibitemOpen
  \bibfield  {author} {\bibinfo {author} {\bibfnamefont {Y.}~\bibnamefont {Zou}}, \bibinfo {author} {\bibfnamefont {K.}~\bibnamefont {Siva}}, \bibinfo {author} {\bibfnamefont {T.}~\bibnamefont {Soejima}}, \bibinfo {author} {\bibfnamefont {R.~S.~K.}\ \bibnamefont {Mong}},\ and\ \bibinfo {author} {\bibfnamefont {M.~P.}\ \bibnamefont {Zaletel}},\ }\bibfield  {title} {\bibinfo {title} {Universal tripartite entanglement in one-dimensional many-body systems},\ }\href {https://doi.org/10.1103/PhysRevLett.126.120501} {\bibfield  {journal} {\bibinfo  {journal} {Phys. Rev. Lett.}\ }\textbf {\bibinfo {volume} {126}},\ \bibinfo {pages} {120501} (\bibinfo {year} {2021})}\BibitemShut {NoStop}%
\bibitem [{\citenamefont {Siva}\ \emph {et~al.}(2022)\citenamefont {Siva}, \citenamefont {Zou}, \citenamefont {Soejima}, \citenamefont {Mong},\ and\ \citenamefont {Zaletel}}]{Siva_2022}%
  \BibitemOpen
  \bibfield  {author} {\bibinfo {author} {\bibfnamefont {K.}~\bibnamefont {Siva}}, \bibinfo {author} {\bibfnamefont {Y.}~\bibnamefont {Zou}}, \bibinfo {author} {\bibfnamefont {T.}~\bibnamefont {Soejima}}, \bibinfo {author} {\bibfnamefont {R.~S.~K.}\ \bibnamefont {Mong}},\ and\ \bibinfo {author} {\bibfnamefont {M.~P.}\ \bibnamefont {Zaletel}},\ }\bibfield  {title} {\bibinfo {title} {Universal tripartite entanglement signature of ungappable edge states},\ }\bibfield  {journal} {\bibinfo  {journal} {Physical Review B}\ }\textbf {\bibinfo {volume} {106}},\ \href {https://doi.org/10.1103/physrevb.106.l041107} {10.1103/physrevb.106.l041107} (\bibinfo {year} {2022})\BibitemShut {NoStop}%
\bibitem [{\citenamefont {Liu}\ \emph {et~al.}(2022)\citenamefont {Liu}, \citenamefont {Sohal}, \citenamefont {Kudler-Flam},\ and\ \citenamefont {Ryu}}]{Liu_2022}%
  \BibitemOpen
  \bibfield  {author} {\bibinfo {author} {\bibfnamefont {Y.}~\bibnamefont {Liu}}, \bibinfo {author} {\bibfnamefont {R.}~\bibnamefont {Sohal}}, \bibinfo {author} {\bibfnamefont {J.}~\bibnamefont {Kudler-Flam}},\ and\ \bibinfo {author} {\bibfnamefont {S.}~\bibnamefont {Ryu}},\ }\bibfield  {title} {\bibinfo {title} {Multipartitioning topological phases by vertex states and quantum entanglement},\ }\bibfield  {journal} {\bibinfo  {journal} {Physical Review B}\ }\textbf {\bibinfo {volume} {105}},\ \href {https://doi.org/10.1103/physrevb.105.115107} {10.1103/physrevb.105.115107} (\bibinfo {year} {2022})\BibitemShut {NoStop}%
\bibitem [{\citenamefont {Liu}\ \emph {et~al.}(2024)\citenamefont {Liu}, \citenamefont {Kusuki}, \citenamefont {Kudler-Flam}, \citenamefont {Sohal},\ and\ \citenamefont {Ryu}}]{Liu_2024}%
  \BibitemOpen
  \bibfield  {author} {\bibinfo {author} {\bibfnamefont {Y.}~\bibnamefont {Liu}}, \bibinfo {author} {\bibfnamefont {Y.}~\bibnamefont {Kusuki}}, \bibinfo {author} {\bibfnamefont {J.}~\bibnamefont {Kudler-Flam}}, \bibinfo {author} {\bibfnamefont {R.}~\bibnamefont {Sohal}},\ and\ \bibinfo {author} {\bibfnamefont {S.}~\bibnamefont {Ryu}},\ }\bibfield  {title} {\bibinfo {title} {Multipartite entanglement in two-dimensional chiral topological liquids},\ }\href {https://doi.org/10.1103/PhysRevB.109.085108} {\bibfield  {journal} {\bibinfo  {journal} {Phys. Rev. B}\ }\textbf {\bibinfo {volume} {109}},\ \bibinfo {pages} {085108} (\bibinfo {year} {2024})}\BibitemShut {NoStop}%
\bibitem [{\citenamefont {Berthiere}\ \emph {et~al.}(2023)\citenamefont {Berthiere}, \citenamefont {Chen},\ and\ \citenamefont {Chen}}]{Berthiere:2023bwn}%
  \BibitemOpen
  \bibfield  {author} {\bibinfo {author} {\bibfnamefont {C.}~\bibnamefont {Berthiere}}, \bibinfo {author} {\bibfnamefont {B.}~\bibnamefont {Chen}},\ and\ \bibinfo {author} {\bibfnamefont {H.}~\bibnamefont {Chen}},\ }\bibfield  {title} {\bibinfo {title} {{Reflected entropy and Markov gap in Lifshitz theories}},\ }\href {https://doi.org/10.1007/JHEP09(2023)160} {\bibfield  {journal} {\bibinfo  {journal} {JHEP}\ }\textbf {\bibinfo {volume} {09}},\ \bibinfo {pages} {160}},\ \Eprint {https://arxiv.org/abs/2307.12247} {arXiv:2307.12247 [hep-th]} \BibitemShut {NoStop}%
\bibitem [{\citenamefont {Sang}\ \emph {et~al.}(2021)\citenamefont {Sang}, \citenamefont {Li}, \citenamefont {Zhou}, \citenamefont {Chen}, \citenamefont {Hsieh},\ and\ \citenamefont {Fisher}}]{Sang_2021}%
  \BibitemOpen
  \bibfield  {author} {\bibinfo {author} {\bibfnamefont {S.}~\bibnamefont {Sang}}, \bibinfo {author} {\bibfnamefont {Y.}~\bibnamefont {Li}}, \bibinfo {author} {\bibfnamefont {T.}~\bibnamefont {Zhou}}, \bibinfo {author} {\bibfnamefont {X.}~\bibnamefont {Chen}}, \bibinfo {author} {\bibfnamefont {T.~H.}\ \bibnamefont {Hsieh}},\ and\ \bibinfo {author} {\bibfnamefont {M.~P.}\ \bibnamefont {Fisher}},\ }\bibfield  {title} {\bibinfo {title} {Entanglement negativity at measurement-induced criticality},\ }\href {https://doi.org/10.1103/PRXQuantum.2.030313} {\bibfield  {journal} {\bibinfo  {journal} {PRX Quantum}\ }\textbf {\bibinfo {volume} {2}},\ \bibinfo {pages} {030313} (\bibinfo {year} {2021})}\BibitemShut {NoStop}%
\bibitem [{\citenamefont {Avakian}\ \emph {et~al.}(2024)\citenamefont {Avakian}, \citenamefont {Pereg-Barnea},\ and\ \citenamefont {Witczak-Krempa}}]{avakian2024longrangemultipartiteentanglementnear}%
  \BibitemOpen
  \bibfield  {author} {\bibinfo {author} {\bibfnamefont {S.~J.}\ \bibnamefont {Avakian}}, \bibinfo {author} {\bibfnamefont {T.}~\bibnamefont {Pereg-Barnea}},\ and\ \bibinfo {author} {\bibfnamefont {W.}~\bibnamefont {Witczak-Krempa}},\ }\href {https://arxiv.org/abs/2404.16095} {\bibinfo {title} {Long-range multipartite entanglement near measurement-induced transitions}} (\bibinfo {year} {2024}),\ \Eprint {https://arxiv.org/abs/2404.16095} {arXiv:2404.16095 [quant-ph]} \BibitemShut {NoStop}%
\bibitem [{\citenamefont {Paviglianiti}\ and\ \citenamefont {Silva}(2023)}]{Pavi_2023}%
  \BibitemOpen
  \bibfield  {author} {\bibinfo {author} {\bibfnamefont {A.}~\bibnamefont {Paviglianiti}}\ and\ \bibinfo {author} {\bibfnamefont {A.}~\bibnamefont {Silva}},\ }\bibfield  {title} {\bibinfo {title} {Multipartite entanglement in the measurement-induced phase transition of the quantum ising chain},\ }\href {https://doi.org/10.1103/PhysRevB.108.184302} {\bibfield  {journal} {\bibinfo  {journal} {Phys. Rev. B}\ }\textbf {\bibinfo {volume} {108}},\ \bibinfo {pages} {184302} (\bibinfo {year} {2023})}\BibitemShut {NoStop}%
\bibitem [{\citenamefont {Nielsen}\ and\ \citenamefont {Chuang}(2010)}]{nielsen2010quantum}%
  \BibitemOpen
  \bibfield  {author} {\bibinfo {author} {\bibfnamefont {M.}~\bibnamefont {Nielsen}}\ and\ \bibinfo {author} {\bibfnamefont {I.}~\bibnamefont {Chuang}},\ }\href {https://books.google.ca/books?id=-s4DEy7o-a0C} {\emph {\bibinfo {title} {Quantum Computation and Quantum Information: 10th Anniversary Edition}}}\ (\bibinfo  {publisher} {Cambridge University Press},\ \bibinfo {year} {2010})\BibitemShut {NoStop}%
\bibitem [{\citenamefont {Petz}(1986)}]{Petz:1986tvy}%
  \BibitemOpen
  \bibfield  {author} {\bibinfo {author} {\bibfnamefont {D.}~\bibnamefont {Petz}},\ }\bibfield  {title} {\bibinfo {title} {{Sufficient subalgebras and the relative entropy of states of a von Neumann algebra}},\ }\href {https://doi.org/10.1007/BF01212345} {\bibfield  {journal} {\bibinfo  {journal} {Commun. Math. Phys.}\ }\textbf {\bibinfo {volume} {105}},\ \bibinfo {pages} {123} (\bibinfo {year} {1986})}\BibitemShut {NoStop}%
\bibitem [{\citenamefont {Petz}(1988)}]{Petz:1988usv}%
  \BibitemOpen
  \bibfield  {author} {\bibinfo {author} {\bibfnamefont {D.}~\bibnamefont {Petz}},\ }\bibfield  {title} {\bibinfo {title} {{SUFFICIENCY OF CHANNELS OVER VON NEUMANN ALGEBRAS}},\ }\href {https://doi.org/10.1093/qmath/39.1.97} {\bibfield  {journal} {\bibinfo  {journal} {Quart. J. Math. Oxford Ser.}\ }\textbf {\bibinfo {volume} {39}},\ \bibinfo {pages} {97} (\bibinfo {year} {1988})}\BibitemShut {NoStop}%
\bibitem [{\citenamefont {Junge}\ \emph {et~al.}(2018)\citenamefont {Junge}, \citenamefont {Renner}, \citenamefont {Sutter}, \citenamefont {Wilde},\ and\ \citenamefont {Winter}}]{Junge:2015lmb}%
  \BibitemOpen
  \bibfield  {author} {\bibinfo {author} {\bibfnamefont {M.}~\bibnamefont {Junge}}, \bibinfo {author} {\bibfnamefont {R.}~\bibnamefont {Renner}}, \bibinfo {author} {\bibfnamefont {D.}~\bibnamefont {Sutter}}, \bibinfo {author} {\bibfnamefont {M.~M.}\ \bibnamefont {Wilde}},\ and\ \bibinfo {author} {\bibfnamefont {A.}~\bibnamefont {Winter}},\ }\bibfield  {title} {\bibinfo {title} {{Universal Recovery Maps and Approximate Sufficiency of Quantum Relative Entropy}},\ }\href {https://doi.org/10.1007/s00023-018-0716-0} {\bibfield  {journal} {\bibinfo  {journal} {Annales Henri Poincare}\ }\textbf {\bibinfo {volume} {19}},\ \bibinfo {pages} {2955} (\bibinfo {year} {2018})},\ \Eprint {https://arxiv.org/abs/1509.07127} {arXiv:1509.07127 [quant-ph]} \BibitemShut {NoStop}%
\bibitem [{\citenamefont {Sang}\ and\ \citenamefont {Hsieh}(2024)}]{sang2024stabilitymixedstatequantumphases}%
  \BibitemOpen
  \bibfield  {author} {\bibinfo {author} {\bibfnamefont {S.}~\bibnamefont {Sang}}\ and\ \bibinfo {author} {\bibfnamefont {T.~H.}\ \bibnamefont {Hsieh}},\ }\href {https://arxiv.org/abs/2404.07251} {\bibinfo {title} {Stability of mixed-state quantum phases via finite markov length}} (\bibinfo {year} {2024}),\ \Eprint {https://arxiv.org/abs/2404.07251} {arXiv:2404.07251 [quant-ph]} \BibitemShut {NoStop}%
\bibitem [{\citenamefont {Li}\ \emph {et~al.}(2021)\citenamefont {Li}, \citenamefont {Chen}, \citenamefont {Ludwig},\ and\ \citenamefont {Fisher}}]{Li_2021v2}%
  \BibitemOpen
  \bibfield  {author} {\bibinfo {author} {\bibfnamefont {Y.}~\bibnamefont {Li}}, \bibinfo {author} {\bibfnamefont {X.}~\bibnamefont {Chen}}, \bibinfo {author} {\bibfnamefont {A.~W.~W.}\ \bibnamefont {Ludwig}},\ and\ \bibinfo {author} {\bibfnamefont {M.~P.~A.}\ \bibnamefont {Fisher}},\ }\bibfield  {title} {\bibinfo {title} {Conformal invariance and quantum nonlocality in critical hybrid circuits},\ }\href {https://doi.org/10.1103/PhysRevB.104.104305} {\bibfield  {journal} {\bibinfo  {journal} {Phys. Rev. B}\ }\textbf {\bibinfo {volume} {104}},\ \bibinfo {pages} {104305} (\bibinfo {year} {2021})}\BibitemShut {NoStop}%
\bibitem [{\citenamefont {Vardhan}\ \emph {et~al.}(2024)\citenamefont {Vardhan}, \citenamefont {Wei},\ and\ \citenamefont {Zou}}]{Vardhan:2023pnm}%
  \BibitemOpen
  \bibfield  {author} {\bibinfo {author} {\bibfnamefont {S.}~\bibnamefont {Vardhan}}, \bibinfo {author} {\bibfnamefont {A.~Y.}\ \bibnamefont {Wei}},\ and\ \bibinfo {author} {\bibfnamefont {Y.}~\bibnamefont {Zou}},\ }\bibfield  {title} {\bibinfo {title} {{Petz recovery from subsystems in conformal field theory}},\ }\href {https://doi.org/10.1007/JHEP03(2024)016} {\bibfield  {journal} {\bibinfo  {journal} {JHEP}\ }\textbf {\bibinfo {volume} {03}},\ \bibinfo {pages} {016}},\ \Eprint {https://arxiv.org/abs/2307.14434} {arXiv:2307.14434 [hep-th]} \BibitemShut {NoStop}%
\bibitem [{\citenamefont {Wilde}(2015)}]{Wilde_2015}%
  \BibitemOpen
  \bibfield  {author} {\bibinfo {author} {\bibfnamefont {M.~M.}\ \bibnamefont {Wilde}},\ }\bibfield  {title} {\bibinfo {title} {Recoverability in quantum information theory},\ }\href {https://doi.org/10.1098/rspa.2015.0338} {\bibfield  {journal} {\bibinfo  {journal} {Proceedings of the Royal Society A: Mathematical, Physical and Engineering Sciences}\ }\textbf {\bibinfo {volume} {471}},\ \bibinfo {pages} {20150338} (\bibinfo {year} {2015})}\BibitemShut {NoStop}%
\bibitem [{\citenamefont {Gottesman}(1997)}]{gottesman1997stabilizercodesquantumerror}%
  \BibitemOpen
  \bibfield  {author} {\bibinfo {author} {\bibfnamefont {D.}~\bibnamefont {Gottesman}},\ }\href {https://arxiv.org/abs/quant-ph/9705052} {\bibinfo {title} {Stabilizer codes and quantum error correction}} (\bibinfo {year} {1997}),\ \Eprint {https://arxiv.org/abs/quant-ph/9705052} {arXiv:quant-ph/9705052 [quant-ph]} \BibitemShut {NoStop}%
\bibitem [{\citenamefont {Zabalo}\ \emph {et~al.}(2020)\citenamefont {Zabalo}, \citenamefont {Gullans}, \citenamefont {Wilson}, \citenamefont {Gopalakrishnan}, \citenamefont {Huse},\ and\ \citenamefont {Pixley}}]{Zabalo:2019sfl}%
  \BibitemOpen
  \bibfield  {author} {\bibinfo {author} {\bibfnamefont {A.}~\bibnamefont {Zabalo}}, \bibinfo {author} {\bibfnamefont {M.~J.}\ \bibnamefont {Gullans}}, \bibinfo {author} {\bibfnamefont {J.~H.}\ \bibnamefont {Wilson}}, \bibinfo {author} {\bibfnamefont {S.}~\bibnamefont {Gopalakrishnan}}, \bibinfo {author} {\bibfnamefont {D.~A.}\ \bibnamefont {Huse}},\ and\ \bibinfo {author} {\bibfnamefont {J.~H.}\ \bibnamefont {Pixley}},\ }\bibfield  {title} {\bibinfo {title} {{Critical properties of the measurement-induced transition in random quantum circuits}},\ }\href {https://doi.org/10.1103/PhysRevB.101.060301} {\bibfield  {journal} {\bibinfo  {journal} {Phys. Rev. B}\ }\textbf {\bibinfo {volume} {101}},\ \bibinfo {pages} {060301} (\bibinfo {year} {2020})},\ \Eprint {https://arxiv.org/abs/1911.00008} {arXiv:1911.00008 [cond-mat.dis-nn]} \BibitemShut {NoStop}%
\bibitem [{\citenamefont {Agrawal}\ \emph {et~al.}(2022)\citenamefont {Agrawal}, \citenamefont {Zabalo}, \citenamefont {Chen}, \citenamefont {Wilson}, \citenamefont {Potter}, \citenamefont {Pixley}, \citenamefont {Gopalakrishnan},\ and\ \citenamefont {Vasseur}}]{Agrawal:2021ukw}%
  \BibitemOpen
  \bibfield  {author} {\bibinfo {author} {\bibfnamefont {U.}~\bibnamefont {Agrawal}}, \bibinfo {author} {\bibfnamefont {A.}~\bibnamefont {Zabalo}}, \bibinfo {author} {\bibfnamefont {K.}~\bibnamefont {Chen}}, \bibinfo {author} {\bibfnamefont {J.~H.}\ \bibnamefont {Wilson}}, \bibinfo {author} {\bibfnamefont {A.~C.}\ \bibnamefont {Potter}}, \bibinfo {author} {\bibfnamefont {J.~H.}\ \bibnamefont {Pixley}}, \bibinfo {author} {\bibfnamefont {S.}~\bibnamefont {Gopalakrishnan}},\ and\ \bibinfo {author} {\bibfnamefont {R.}~\bibnamefont {Vasseur}},\ }\bibfield  {title} {\bibinfo {title} {{Entanglement and Charge-Sharpening Transitions in U(1) Symmetric Monitored Quantum Circuits}},\ }\href {https://doi.org/10.1103/PhysRevX.12.041002} {\bibfield  {journal} {\bibinfo  {journal} {Phys. Rev. X}\ }\textbf {\bibinfo {volume} {12}},\ \bibinfo {pages} {041002} (\bibinfo {year} {2022})},\ \Eprint {https://arxiv.org/abs/2107.10279} {arXiv:2107.10279 [cond-mat.dis-nn]} \BibitemShut {NoStop}%
\bibitem [{\citenamefont {Bao}\ \emph {et~al.}(2020)\citenamefont {Bao}, \citenamefont {Choi},\ and\ \citenamefont {Altman}}]{Bao_2020}%
  \BibitemOpen
  \bibfield  {author} {\bibinfo {author} {\bibfnamefont {Y.}~\bibnamefont {Bao}}, \bibinfo {author} {\bibfnamefont {S.}~\bibnamefont {Choi}},\ and\ \bibinfo {author} {\bibfnamefont {E.}~\bibnamefont {Altman}},\ }\bibfield  {title} {\bibinfo {title} {Theory of the phase transition in random unitary circuits with measurements},\ }\href {https://doi.org/10.1103/PhysRevB.101.104301} {\bibfield  {journal} {\bibinfo  {journal} {Phys. Rev. B}\ }\textbf {\bibinfo {volume} {101}},\ \bibinfo {pages} {104301} (\bibinfo {year} {2020})}\BibitemShut {NoStop}%
\bibitem [{\citenamefont {Marshakov}\ \emph {et~al.}(2010)\citenamefont {Marshakov}, \citenamefont {Mironov},\ and\ \citenamefont {Morozov}}]{Marshakov_2010}%
  \BibitemOpen
  \bibfield  {author} {\bibinfo {author} {\bibfnamefont {A.~V.}\ \bibnamefont {Marshakov}}, \bibinfo {author} {\bibfnamefont {A.~D.}\ \bibnamefont {Mironov}},\ and\ \bibinfo {author} {\bibfnamefont {A.~Y.}\ \bibnamefont {Morozov}},\ }\bibfield  {title} {\bibinfo {title} {Combinatorial expansions of conformal blocks},\ }\href {https://doi.org/10.1007/s11232-010-0067-6} {\bibfield  {journal} {\bibinfo  {journal} {Theoretical and Mathematical Physics}\ }\textbf {\bibinfo {volume} {164}},\ \bibinfo {pages} {831–852} (\bibinfo {year} {2010})}\BibitemShut {NoStop}%
\bibitem [{\citenamefont {Li}\ \emph {et~al.}(2023)\citenamefont {Li}, \citenamefont {Vijay},\ and\ \citenamefont {Fisher}}]{Li_KPZ2023}%
  \BibitemOpen
  \bibfield  {author} {\bibinfo {author} {\bibfnamefont {Y.}~\bibnamefont {Li}}, \bibinfo {author} {\bibfnamefont {S.}~\bibnamefont {Vijay}},\ and\ \bibinfo {author} {\bibfnamefont {M.~P.}\ \bibnamefont {Fisher}},\ }\bibfield  {title} {\bibinfo {title} {Entanglement domain walls in monitored quantum circuits and the directed polymer in a random environment},\ }\href {https://doi.org/10.1103/PRXQuantum.4.010331} {\bibfield  {journal} {\bibinfo  {journal} {PRX Quantum}\ }\textbf {\bibinfo {volume} {4}},\ \bibinfo {pages} {010331} (\bibinfo {year} {2023})}\BibitemShut {NoStop}%
\bibitem [{\citenamefont {Swingle}\ and\ \citenamefont {Wang}(2019)}]{Swingle:2018dto}%
  \BibitemOpen
  \bibfield  {author} {\bibinfo {author} {\bibfnamefont {B.~G.}\ \bibnamefont {Swingle}}\ and\ \bibinfo {author} {\bibfnamefont {Y.}~\bibnamefont {Wang}},\ }\bibfield  {title} {\bibinfo {title} {{Recovery map for fermionic Gaussian channels}},\ }\href {https://doi.org/10.1063/1.5093326} {\bibfield  {journal} {\bibinfo  {journal} {J. Math. Phys.}\ }\textbf {\bibinfo {volume} {60}},\ \bibinfo {pages} {072202} (\bibinfo {year} {2019})},\ \Eprint {https://arxiv.org/abs/1811.04956} {arXiv:1811.04956 [quant-ph]} \BibitemShut {NoStop}%
\bibitem [{\citenamefont {Lami}\ \emph {et~al.}(2018)\citenamefont {Lami}, \citenamefont {Das},\ and\ \citenamefont {Wilde}}]{Lami_2018}%
  \BibitemOpen
  \bibfield  {author} {\bibinfo {author} {\bibfnamefont {L.}~\bibnamefont {Lami}}, \bibinfo {author} {\bibfnamefont {S.}~\bibnamefont {Das}},\ and\ \bibinfo {author} {\bibfnamefont {M.~M.}\ \bibnamefont {Wilde}},\ }\bibfield  {title} {\bibinfo {title} {Approximate reversal of quantum gaussian dynamics},\ }\href {https://doi.org/10.1088/1751-8121/aaad26} {\bibfield  {journal} {\bibinfo  {journal} {Journal of Physics A: Mathematical and Theoretical}\ }\textbf {\bibinfo {volume} {51}},\ \bibinfo {pages} {125301} (\bibinfo {year} {2018})}\BibitemShut {NoStop}%
\bibitem [{\citenamefont {Weinstein}\ \emph {et~al.}(2023)\citenamefont {Weinstein}, \citenamefont {Sajith}, \citenamefont {Altman},\ and\ \citenamefont {Garratt}}]{Weinstein_2023}%
  \BibitemOpen
  \bibfield  {author} {\bibinfo {author} {\bibfnamefont {Z.}~\bibnamefont {Weinstein}}, \bibinfo {author} {\bibfnamefont {R.}~\bibnamefont {Sajith}}, \bibinfo {author} {\bibfnamefont {E.}~\bibnamefont {Altman}},\ and\ \bibinfo {author} {\bibfnamefont {S.~J.}\ \bibnamefont {Garratt}},\ }\bibfield  {title} {\bibinfo {title} {Nonlocality and entanglement in measured critical quantum ising chains},\ }\bibfield  {journal} {\bibinfo  {journal} {Physical Review B}\ }\textbf {\bibinfo {volume} {107}},\ \href {https://doi.org/10.1103/physrevb.107.245132} {10.1103/physrevb.107.245132} (\bibinfo {year} {2023})\BibitemShut {NoStop}%
\bibitem [{\citenamefont {Kim}\ \emph {et~al.}(2022)\citenamefont {Kim}, \citenamefont {Shi}, \citenamefont {Kato},\ and\ \citenamefont {Albert}}]{Kim_2022}%
  \BibitemOpen
  \bibfield  {author} {\bibinfo {author} {\bibfnamefont {I.~H.}\ \bibnamefont {Kim}}, \bibinfo {author} {\bibfnamefont {B.}~\bibnamefont {Shi}}, \bibinfo {author} {\bibfnamefont {K.}~\bibnamefont {Kato}},\ and\ \bibinfo {author} {\bibfnamefont {V.~V.}\ \bibnamefont {Albert}},\ }\bibfield  {title} {\bibinfo {title} {Modular commutator in gapped quantum many-body systems},\ }\bibfield  {journal} {\bibinfo  {journal} {Physical Review B}\ }\textbf {\bibinfo {volume} {106}},\ \href {https://doi.org/10.1103/physrevb.106.075147} {10.1103/physrevb.106.075147} (\bibinfo {year} {2022})\BibitemShut {NoStop}%
\bibitem [{\citenamefont {Fan}(2022)}]{Fan_2022}%
  \BibitemOpen
  \bibfield  {author} {\bibinfo {author} {\bibfnamefont {R.}~\bibnamefont {Fan}},\ }\bibfield  {title} {\bibinfo {title} {From entanglement generated dynamics to the gravitational anomaly and chiral central charge},\ }\href {https://doi.org/10.1103/PhysRevLett.129.260403} {\bibfield  {journal} {\bibinfo  {journal} {Phys. Rev. Lett.}\ }\textbf {\bibinfo {volume} {129}},\ \bibinfo {pages} {260403} (\bibinfo {year} {2022})}\BibitemShut {NoStop}%
\bibitem [{\citenamefont {Zou}\ \emph {et~al.}(2022)\citenamefont {Zou}, \citenamefont {Shi}, \citenamefont {Sorce}, \citenamefont {Lim},\ and\ \citenamefont {Kim}}]{Zou_2022v2}%
  \BibitemOpen
  \bibfield  {author} {\bibinfo {author} {\bibfnamefont {Y.}~\bibnamefont {Zou}}, \bibinfo {author} {\bibfnamefont {B.}~\bibnamefont {Shi}}, \bibinfo {author} {\bibfnamefont {J.}~\bibnamefont {Sorce}}, \bibinfo {author} {\bibfnamefont {I.~T.}\ \bibnamefont {Lim}},\ and\ \bibinfo {author} {\bibfnamefont {I.~H.}\ \bibnamefont {Kim}},\ }\bibfield  {title} {\bibinfo {title} {Modular commutators in conformal field theory},\ }\bibfield  {journal} {\bibinfo  {journal} {Physical Review Letters}\ }\textbf {\bibinfo {volume} {129}},\ \href {https://doi.org/10.1103/physrevlett.129.260402} {10.1103/physrevlett.129.260402} (\bibinfo {year} {2022})\BibitemShut {NoStop}%
\bibitem [{\citenamefont {Gass}\ and\ \citenamefont {Levin}(2024)}]{gass2024manybodysystemsspuriousmodular}%
  \BibitemOpen
  \bibfield  {author} {\bibinfo {author} {\bibfnamefont {J.}~\bibnamefont {Gass}}\ and\ \bibinfo {author} {\bibfnamefont {M.}~\bibnamefont {Levin}},\ }\href {https://arxiv.org/abs/2405.15892} {\bibinfo {title} {Many-body systems with spurious modular commutators}} (\bibinfo {year} {2024}),\ \Eprint {https://arxiv.org/abs/2405.15892} {arXiv:2405.15892 [quant-ph]} \BibitemShut {NoStop}%
\bibitem [{\citenamefont {Qi}\ \emph {et~al.}(2012)\citenamefont {Qi}, \citenamefont {Katsura},\ and\ \citenamefont {Ludwig}}]{Qi_2012}%
  \BibitemOpen
  \bibfield  {author} {\bibinfo {author} {\bibfnamefont {X.-L.}\ \bibnamefont {Qi}}, \bibinfo {author} {\bibfnamefont {H.}~\bibnamefont {Katsura}},\ and\ \bibinfo {author} {\bibfnamefont {A.~W.~W.}\ \bibnamefont {Ludwig}},\ }\bibfield  {title} {\bibinfo {title} {General relationship between the entanglement spectrum and the edge state spectrum of topological quantum states},\ }\href {https://doi.org/10.1103/PhysRevLett.108.196402} {\bibfield  {journal} {\bibinfo  {journal} {Phys. Rev. Lett.}\ }\textbf {\bibinfo {volume} {108}},\ \bibinfo {pages} {196402} (\bibinfo {year} {2012})}\BibitemShut {NoStop}%
\bibitem [{\citenamefont {Shi}\ \emph {et~al.}(2020)\citenamefont {Shi}, \citenamefont {Kato},\ and\ \citenamefont {Kim}}]{Shi_2020}%
  \BibitemOpen
  \bibfield  {author} {\bibinfo {author} {\bibfnamefont {B.}~\bibnamefont {Shi}}, \bibinfo {author} {\bibfnamefont {K.}~\bibnamefont {Kato}},\ and\ \bibinfo {author} {\bibfnamefont {I.~H.}\ \bibnamefont {Kim}},\ }\bibfield  {title} {\bibinfo {title} {Fusion rules from entanglement},\ }\href {https://doi.org/10.1016/j.aop.2020.168164} {\bibfield  {journal} {\bibinfo  {journal} {Annals of Physics}\ }\textbf {\bibinfo {volume} {418}},\ \bibinfo {pages} {168164} (\bibinfo {year} {2020})}\BibitemShut {NoStop}%
\bibitem [{\citenamefont {Williamson}\ \emph {et~al.}(2019)\citenamefont {Williamson}, \citenamefont {Dua},\ and\ \citenamefont {Cheng}}]{Williamson_2019}%
  \BibitemOpen
  \bibfield  {author} {\bibinfo {author} {\bibfnamefont {D.~J.}\ \bibnamefont {Williamson}}, \bibinfo {author} {\bibfnamefont {A.}~\bibnamefont {Dua}},\ and\ \bibinfo {author} {\bibfnamefont {M.}~\bibnamefont {Cheng}},\ }\bibfield  {title} {\bibinfo {title} {Spurious topological entanglement entropy from subsystem symmetries},\ }\bibfield  {journal} {\bibinfo  {journal} {Physical Review Letters}\ }\textbf {\bibinfo {volume} {122}},\ \href {https://doi.org/10.1103/physrevlett.122.140506} {10.1103/physrevlett.122.140506} (\bibinfo {year} {2019})\BibitemShut {NoStop}%
\bibitem [{\citenamefont {Kim}\ \emph {et~al.}(2023)\citenamefont {Kim}, \citenamefont {Levin}, \citenamefont {Lin}, \citenamefont {Ranard},\ and\ \citenamefont {Shi}}]{Kim_2023}%
  \BibitemOpen
  \bibfield  {author} {\bibinfo {author} {\bibfnamefont {I.~H.}\ \bibnamefont {Kim}}, \bibinfo {author} {\bibfnamefont {M.}~\bibnamefont {Levin}}, \bibinfo {author} {\bibfnamefont {T.-C.}\ \bibnamefont {Lin}}, \bibinfo {author} {\bibfnamefont {D.}~\bibnamefont {Ranard}},\ and\ \bibinfo {author} {\bibfnamefont {B.}~\bibnamefont {Shi}},\ }\bibfield  {title} {\bibinfo {title} {Universal lower bound on topological entanglement entropy},\ }\href {https://doi.org/10.1103/PhysRevLett.131.166601} {\bibfield  {journal} {\bibinfo  {journal} {Phys. Rev. Lett.}\ }\textbf {\bibinfo {volume} {131}},\ \bibinfo {pages} {166601} (\bibinfo {year} {2023})}\BibitemShut {NoStop}%
\bibitem [{\citenamefont {Bravyi}(2004)}]{bravyi2004lagrangian}%
  \BibitemOpen
  \bibfield  {author} {\bibinfo {author} {\bibfnamefont {S.}~\bibnamefont {Bravyi}},\ }\bibfield  {title} {\bibinfo {title} {Lagrangian representation for fermionic linear optics},\ }\href@noop {} {\bibfield  {journal} {\bibinfo  {journal} {arXiv preprint quant-ph/0404180}\ } (\bibinfo {year} {2004})}\BibitemShut {NoStop}%
\end{thebibliography}%
\end{document}